\newif\ifarXiv
 \newcounter{theorem}
 \newtheorem{Theorem}[theorem]{Theorem}
 \newcounter{lemma}
 \newtheorem{Lemma}[lemma]{Lemma}
 \newcounter{corollary}
 \newcounter{proposition}
 \newtheorem{Proposition}[proposition]{Proposition}
 \newcounter{characterization}
 \newcounter{property}
 \newtheorem{Property}[property]{Property}
 \newcounter{problem}
 \newcounter{example}
 \newcounter{examplesanddefinitions}
 \newcounter{remark}
 \newtheorem{Remark}[remark]{Remark}
 \newcounter{definition}
 \newtheorem{Definition}[definition]{Definition}
 \newcounter{hypothesis}
 \newcounter{notation}
\newcommand*\patchAmsMathEnvironmentForLineno[1]{%
  \expandafter\let\csname old#1\expandafter\endcsname\csname #1\endcsname
  \expandafter\let\csname oldend#1\expandafter\endcsname\csname end#1\endcsname
  \renewenvironment{#1}%
     {\linenomath\csname old#1\endcsname}%
     {\csname oldend#1\endcsname\endlinenomath}}%
\newcommand*\patchBothAmsMathEnvironmentsForLineno[1]{%
  \patchAmsMathEnvironmentForLineno{#1}%
  \patchAmsMathEnvironmentForLineno{#1*}}%
\def\iccs{I_\text{\textsc{ccs}}}
\def\imin{I_\text{min}}
\def\ired{I_\text{red}}
\def\uitilde{\widetilde{UI}}
\def\ivk{\mathcal{S}_\text{VK}}
\newcommand{\fig}[1]{Figure~\ref{fig:#1}}
\newcommand{\tb}[1]{Table~\ref{tb:#1}}
\newcommand{\secRef}[1]{Section~\ref{sec:#1}}
\newcommand{\appRef}[1]{Appendix~\ref{app:#1}}
\newcommand{\eq}[1]{(\ref{eq:#1})}
\newcommand{\cor}[1]{Corollary~\ref{cor:#1}}
\newcommand{\by}{\!\times\!}
\renewenvironment{thmt@restatable}[3][]{%
  \thmt@toks{}
  \stepcounter{thmt@dummyctr}
  \long\def\thmrst@store##1{%
    \@xa\gdef\csname #3\endcsname{%
      \@ifstar{%
        \thmt@thisistheonefalse\csname thmt@stored@#3\endcsname
      }{%
        \thmt@thisistheonetrue\csname thmt@stored@#3\endcsname
      }%
    }%
    \@xa\long\@xa\gdef\csname thmt@stored@#3\@xa\endcsname\@xa{%
      \begingroup
      \ifthmt@thisistheone
      \else
        \@xa\protected@edef\csname the#2\endcsname{%
          \thmt@trivialref{thmt@@#3}{??}}%
        \ifcsname r@thmt@@#3\endcsname\else
          \G@refundefinedtrue
        \fi
        \@xa\let\csname c@#2\endcsname=\c@thmt@dummyctr
        \@xa\let\csname theH#2\endcsname=\theHthmt@dummyctr
        \let\label=\@gobble
        \let\ltx@label=\@gobble
        \def\thmt@restorecounters{}%
        \@for\thmt@ctr:=\thmt@innercounters\do{%
          \protected@edef\thmt@restorecounters{%
            \thmt@restorecounters
            \protect\setcounter{\thmt@ctr}{\arabic{\thmt@ctr}}%
          }%
        }%
        \thmt@trivialref{thmt@@#3@data}{}%
      \fi
      \ifthmt@restatethis
        \thmt@restatethisfalse
      \else
        \csname #2\@xa\endcsname\ifx\@nx#1\@nx\else[{#1}]\fi
      \fi
      \ifthmt@thisistheone
         \thmt@rst@storecounters{#3}%
        \label{thmt@@#3}%
      \fi
      ##1%
      \csname end#2\endcsname
      \ifthmt@thisistheone\else\thmt@restorecounters\fi
      \endgroup
    }
    \csname #3\@xa\endcsname\ifthmt@thisistheone\else*\fi
    \@xa\end\@xa{\@currenvir}
  }
  \thmt@collect@body\thmrst@store
}{%
}
\def\nn{\nonumber}
\newcommand{\paperTitle}{Pointwise Partial Information Decomposition\\using the Specificity and
  Ambiguity Lattices}
\newcommand{\theKeywords}{mutual information; pointwise information; information decomposition;
  unique information; redundant information; complementary information; redundancy; synergy}
\newcommand{\paperAbstract}{What are the distinct ways in which a set of predictor variables can
  provide information about a target variable?  When does a variable provide unique information,
  when do variables share redundant information, and when do variables combine synergistically to
  provide complementary information?  The redundancy lattice from the partial information
  decomposition of Williams and Beer provided a promising glimpse at the answer to these questions.
  However, this structure was constructed using a much criticised measure of redundant information,
  and despite sustained research, no completely satisfactory replacement measure has been proposed.
  In this paper, we take a different approach, applying the axiomatic derivation of the redundancy
  lattice to a single realisation from a set of discrete variables.  To overcome the difficulty
  associated with signed pointwise mutual information, we apply this decomposition separately to the
  unsigned entropic components of pointwise mutual information which we refer to as the specificity
  and ambiguity.  This yields a separate redundancy lattice for each component.  Then based upon an
  operational interpretation of redundancy, we define measures of redundant specificity and
  ambiguity enabling us to evaluate the partial information atoms in each lattice.  These atoms can
  be recombined to yield the sought-after multivariate information decomposition.  We apply this
  framework to canonical examples from the literature and discuss the results and the various
  properties of the decomposition. In particular, the pointwise decomposition using specificity and
  ambiguity satisfies a chain rule over target variables, which provides new insights into the
  so-called two-bit-copy example.}
\newcommand{\thePACS}{89.70.Cf, 89.75.Fb, 05.65.+b, 87.19.lo}
\begin{document}

\title{\paperTitle}

\author{Conor Finn}
\altaffiliation[Also at ]{CSIRO Data61, Marsfield NSW 2122, Australia.}
\email{conor.finn@sydney.edu.au}

\author{Joseph T.\ Lizier}

\affiliation{ Complex Systems Research Group and Centre for Complex Systems,\\
  Faculty of Engineering \& IT, The University of Sydney, NSW 2006, Australia.}

\date{\today}

\begin{abstract}
  \paperAbstract{}
\end{abstract}

\pacs{\thePACS{}}

\keywords{\theKeywords{}}

\maketitle

\else


\Title{\paperTitle}

\Author{Conor Finn $^{1,2,*}$, and Joseph T. Lizier$^{1}$}

\AuthorNames{Conor Finn, and Joseph T. Lizier}

\address{%
  $^{1}$ \quad Complex Systems Research Group and Centre for Complex Systems, Faculty of Engineering
  \& IT, The University of Sydney, NSW 2006, Australia.\\ 
  $^{2}$ \quad CSIRO Data61, Marsfield NSW 2122, Australia.} 

\corres{Correspondence: conor.finn@sydney.edu.au}

\abstract{\paperAbstract{}}

\keyword{\theKeywords{}}

\PACS{\thePACS{}}

\begin{document}

\fi

\newcommand{\ob}[1]{{#1}^{\mathsf{c}}}
\newcommand{\ra}{\!\rightarrow\!}
\newcommand{\sm}{\!\setminus\!}
\newcommand{\mc}[1]{\mathcal{#1}}
\newcommand{\msc}[1]{\mathscr{#1}}
\newcommand{\mb}[1]{\boldsymbol{#1}}
\newcommand{\nf}[2]{\nicefrac{#1}{#2}}

\def\UIo{{U(S_1 \sm S_2 \ra T)}}
\def\UIop{{U^+(S_1 \sm S_2 \ra T)}}
\def\UIon{{U^-(S_1 \sm S_2 \ra T)}}
\def\uio{{u(s_1 \sm s_2 \ra t)}}
\def\uiop{{u^+(s_1 \sm s_2 \ra t)}}
\def\uion{{u^-(s_1 \sm s_2 \ra t)}}
\def\UIt{{U(S_2 \sm S_1 \ra T)}}
\def\UItp{{U^+(S_2 \sm S_1 \ra T)}}
\def\UItn{{U^-(S_2 \sm S_1 \ra T)}}
\def\uit{{u(s_2 \sm s_1 \ra t)}}
\def\uitp{{u^+(s_2 \sm s_1 \ra t)}}
\def\uitn{{u^-(s_2 \sm s_1 \ra t)}}
\def\RI{{R(S_1,S_2 \ra T)}}
\def\RIp{{R^+(S_1,S_2 \ra T)}}
\def\RIn{{R^-(S_1,S_2 \ra T)}}
\def\ri{{r(s_1,s_2 \ra t)}}
\def\rip{{r^+(s_1, s_2 \ra t)}}
\def\rin{{r^-(s_1, s_2 \ra t)}}
\def\CI{{C(S_1,S_2 \ra T)}}
\def\CIp{{C^+(S_1,S_2 \ra T)}}
\def\CIn{{C^-(S_1,S_2 \ra T)}}
\def\ci{{c(s_1,s_2 \ra t)}}
\def\cip{{c^+(s_1, s_2 \ra t)}}
\def\cin{{c^-(s_1, s_2 \ra t)}}

\def\IST{I(S;T)}
\def\ist{i(s;t)}
\def\ISoT{I(S_1;T)}
\def\isot{i(s_1;t)}
\def\IStT{I(S_2;T)}
\def\istt{i(s_2;t)}
\def\ISotT{I(S_{1,2};T)}
\def\ISoTct{I(S_1;T|S_2)}
\def\isott{i(s_{1,2};t)}
\def\ISonT{I(S_1, \ldots ,S_n;T)}

\def\istd{i(s_1 \ra t)}
\def\ipst{i^+(s_1 \ra t)}
\def\inst{i^-(s_1 \ra t)}
\def\ipsot{i^+\big(s_1 \ra t\big)}
\def\insot{i^-\big(s_1 \ra t\big)} 
\def\ipstt{i^+\big(s_2 \ra t\big)}
\def\instt{i^-\big(s_2 \ra t\big)}

\newcommand{\xor}{\textup{X}\textsc{or}}
\newcommand{\mathxor}{\text{\sffamily~XOR~}}
\newcommand{\andgate}{A\textsc{nd}}
\newcommand{\locunq}{\textup{P}\textsc{w}\textup{U}\textsc{nq}}
\newcommand{\imprdn}{\textup{R}\textsc{dn}\textup{E}\textsc{rr}}
\newcommand{\tbc}{\textup{T}\textsc{bc}}
\newcommand{\tbp}{\textup{T}\textsc{bep}}
\newcommand{\unq}{\textup{U}\textsc{nq}}
\newcommand{\and}{\textup{A}\textsc{nd}}

\newcommand{\bit}{\text{~bit}}


\section{Introduction}
\label{sec:intro}

The aim of information decomposition is to divide the total amount of information provided by a set
of predictor variables, about a target variable, into atoms of partial information contributed
either individually or jointly by the various subsets of the predictors.  Suppose that we are trying
to predict a target variable $T$, with discrete state space $\mc{T}$, from a pair of predictor
variables $S_1$ and $S_2$, with discrete state spaces $\mc{S}_1$ and $\mc{S}_2$.  The mutual
information $I(S_1;T)$ quantifies the information $S_1$ individually provides about $T$.  Similarly,
the mutual information $I(S_2;T)$ quantifies the information $S_2$ individually provides about $T$.
Now consider the joint variable $S_{1,2}$ with the state space~${\mc{S}_1 \by \mc{S}_2}$.  The
(joint) mutual information $I(S_{1,2};T)$ quantifies the total information $S_1$ and $S_2$ together
provide about $T$.  Although Shannon's information theory provides the prior three measures of
information, there are four possible ways $S_1$ and $S_2$ could contribute information about $T$:\
the predictor $S_1$ could uniquely provide information about $T$; or the predictor $S_2$ could
uniquely provide information about~$T$; both $S_1$ and $S_2$ could both individually, yet
redundantly, provide the same information about $T$; or the predictors $S_1$ and $S_2$ could
synergistically provide information about $T$ which is not available in either predictor
individually.  Thus we have the following underdetermined set of equations,
\begin{alignat}{3}
  \label{eq:decomp_bivar}
  \ISotT &&= \RI &+ \UIo + \UIt + \CI,  \nn\\
  \ISoT  &&= \RI &+ \UIo,               \nn\\
  \IStT  &&= \RI &+ \UIt,
\end{alignat}
where $\UIo$ and $\UIt$ are the unique information provided by $S_1$ and $S_2$ respectively, $\RI$
is the redundant information, and $\CI$ is the synergistic or complementary information.  (The
directed notation is utilise here to emphasis the privileged role of the variable $T$.) Together,
the equations in \eq{decomp_bivar} form the bivariate information decomposition.  The problem is to
define one of the unique, redundant or complementary information---something not provided by
Shannon's information theory---in order to uniquely evaluate the decomposition.

Now suppose that we are trying to predict a target variable $T$ from a set of $n$ finite state
predictor variables ${\mb{S}=\{S_1,\ldots,S_n\}}$.  In this general case, the aim of information
decomposition is to divide the total amount of information $I(S_1,\ldots,S_n;T)$ into atoms of
partial information contributed either individually or jointly by the various subsets of $\mb{S}$.
But what are the distinct ways in which these subsets of predictors might contribute information
about the target?  Multivariate information decomposition is more involved than the bivariate
information decomposition because it is not immediately obvious how many atoms of information one
needs to consider, nor is it clear how these atoms should relate to each other.  Thus the general
problem of information decomposition is to provide both a structure for multivariate information
which is consistent with the bivariate decomposition, and a way to uniquely evaluate the atoms in
this general structure.

In the remainder of \secRef{intro}, we will introduce an intriguing framework called partial
information decomposition (PID), which aims to address the general problem of information
decomposition, and highlight some of the criticisms and weaknesses of this framework.  In
\secRef{localisability}, we will consider the underappreciated pointwise nature of information and
discuss the relevance of this to the problem of information decomposition.  We will then propose a
modified pointwise partial information decomposition (PPID), but then quickly repudiate this
approach due to complications associated with decomposing the signed pointwise mutual information.
In \secRef{same_info}, we will discuss circumventing this issue by examining information on a more
fundamental level, in terms of the unsigned entropic components of pointwise mutual information
which we refer to as the specificity and the ambiguity.  Then in \secRef{local_framework}---the main
section of this paper---we will introduce the PPID using the specificity and ambiguity lattices and
the measures of redundancy in Definitions~\ref{def:specificity} and \ref{def:ambiguity}.  In
\secRef{discussion}, we will apply this framework to a number of canonical examples from the PID
literature, discuss some of the key properties of the decomposition, and compare these to existing
approaches to information decomposition.  \secRef{conclusion} will conclude the main body of the
paper.  \appRef{operational} contains discussions regarding the so-called two-bit-copy problem in
terms of Kelly gambling, \appRef{tech} contains many of the technical details and proofs, while
\appRef{tech} contains some more examples.

\ifarXiv
  \vspace{-10pt}
\else
\fi
\subsection{Notation}
  \ifarXiv
  \else
  \fi
  
  The following notational conventions are observed throughout this article:
  {
  \ifarXiv
    \setlength{\tabcolsep}{6pt}
    \def\arraystretch{1.25}
    \setlength{\LTpre}{4pt} \setlength{\LTpost}{6pt}
    \begin{longtable*}{p{60pt} p{400pt}}
  \else
    \small
    \setlength{\LTpre}{1pt} \setlength{\LTpost}{6pt}
    \begin{longtable*}{p{55pt} p{355pt}}
  \fi
    \raggedleft $T$, $\mc{T}$, $t$, $\ob{t}$,
    & denote the \emph{target} variable, event space, event and complementary event respectively; \\
    \raggedleft $S$, $\mc{S}$, $s$, $\ob{s}$,
    & denote the \emph{predictor} variable, event space, event and complementary event
      respectively; \\
    \raggedleft $\mb{S}$, $\mb{s}$,
    & represent the \emph{set} of $n$ predictor variables ${\{S_1,\ldots,S_n\}}$ and events
      ${\{s_1,\ldots,s_n\}}$ respectively;  \\
    \raggedleft $\mc{T}^t$, $\mc{S}^s$,
    & denote the \emph{two-event partition} of the event space, i.e.\ ${\mc{T}^t=\{t,\ob{t}\}}$ and
      ${\mc{S}^s=\{s,\ob{s}\}}$; \\              
    \raggedleft $H(T)$, $I(S;T)$,
    & uppercase function names be used for \emph{average} information-theoretic measures; \\
    \raggedleft $h(t)$, $i(s,t)$,
    & lowercase function names be used for \emph{pointwise} information-theoretic measures. \\ 
  \end{longtable*}
  \vspace{2pt}
  \noindent When required, the following index conventions are observed:
    \ifarXiv
    \begin{longtable*}{p{60pt} p{400pt}}  
    \else
    \begin{longtable*}{p{55pt} p{355pt}}  
    \fi
      \raggedleft $s^1$, $s^2$, $t^1$, $t^2$ 
    & superscripts distinguish between different \emph{different events} in a
      variable; \\
    \raggedleft $S_1$, $S_2$, $T_1$, $T_2$ 
    & subscripts distinguish between \emph{different variables}; \\
    \raggedleft $S_{1,2}$, $s_{1,2}$ 
    & multiple superscripts represent \emph{joint variables} and \emph{joint events}. \\
  \end{longtable*}
  \vspace{2pt}
  \noindent Finally, to be discussed in more detail when appropriate, consider the following: 
    \ifarXiv
    \begin{longtable*}{p{60pt} p{400pt}}  
    \else
    \begin{longtable*}{p{55pt} p{360pt}}
    \fi
    \raggedleft $\mb{A}_1,\ldots,\mb{A}_k$
    & \emph{sources} are sets of predictor variables, i.e.\ $\mb{A}_i \!\in\! \msc{P}_1(\mb{S})$ where
      $\msc{P}_1$ is the power set without $\emptyset$; \\
    \raggedleft $\mb{a}_1,\ldots,\mb{a}_k$
    & \emph{source events} are sets of predictor events, i.e.\ $\mb{a}_i \in \msc{P}_1(\mb{s})$. \\
  \end{longtable*}
}

\ifarXiv
\else
\fi

\subsection{Partial Information Decomposition}
\newtheorem{wbaxiom}{W\&B Axiom}

The \emph{partial information decomposition}~(PID) of Williams and Beer~\citep{williams2010,
  williams2010priv} was introduced to address the problem of multivariate information decomposition.
The approach taken is appealing as rather than speculating about the structure of multivariate
information, Williams and Beer took a more principled, axiomatic approach.  First they consider
potentially overlapping subsets of $\mb{S}$ called sources, denoted $\mb{A}_1,\ldots,\mb{A}_k$.
Then they examine the various ways these sources might contain the same information.  Formally, they
introduce three axioms which ``any reasonable measure for redundant information [$I_\cap$] should
fulfil''~\citep[p.\ 3502]{olbrich2015}.\footnote{These axioms appear explicitly in
  \citep{williams2010priv} but are discussed in \citep{williams2010} as mere properties.  A
  published version of the axioms can be found in \citep{lizier2013}.}

\ifarXiv
  \newpage
\else
\fi
  
\begin{wbaxiom}[Commutativity]
  \label{ax:wb_comm}
  Redundant information is invariant under any permutation $\sigma$ of sources,
  \begin{equation*}
    I_\cap\big(\mb{A}_1,\ldots,\mb{A}_k \ra T\big)
    = I_\cap\big(\sigma(\mb{A}_1),\dots,\sigma(\mb{A}_k) \ra T\big).
  \end{equation*}
\end{wbaxiom}
\begin{wbaxiom}[Monotonicity]
  \label{ax:wb_mono}
  Redundant information decreases monotonically as more sources are included,
  \begin{equation*}
    I_\cap\big(\mb{A}_1,\ldots,\mb{A}_{k-1} \ra T\big)
    \leq I_\cap\big(\mb{A}_1,\ldots,\mb{A}_k \ra T\big)
  \end{equation*}
  with equality if $\mb{A}_k \supseteq \mb{A}_i$ for any
  $\mb{A}_i \in \{\mb{A}_1,\ldots,\mb{A}_{k-1} \}$.
\end{wbaxiom}
\begin{wbaxiom}[Self-redundancy]
  \label{ax:wb_sr}
  Redundant information for a single source $\mb{A}_i$ equals the mutual information,
  \begin{equation*}
    I_\cap\big(\mb{A}_i \ra T\big) = I\big(\mb{A}_i \,; T\big).
  \end{equation*}
\end{wbaxiom}

These axioms are based upon the intuition that redundancy should be analogous to the set-theoretic
notion of intersection (which is commutative, monotonically decreasing and idempotent).  Crucially,
Axiom~\ref{ax:wb_sr} ties this notion of redundancy to Shannon's information theory.  In addition to
these three axioms, there is a (implicit) axiom being assumed here known as \emph{local
  positivity}~\citep{bertschinger2013}, which is the requirement that all atoms be non-negative.
Williams and Beer \citep{williams2010, williams2010priv} then show how these axioms reduce the
number of sources to the collection of sources such that no source is a superset of any other.
These remaining sources are called \emph{partial information atoms} (PI~atoms).  Each PI atom
corresponds to a distinct way the set of predictors $\mb{S}$ can contribute information about the
target $T$.  Furthermore, Williams and Beer show that these PI atoms are partially ordered and hence
form a lattice which they call the \emph{redundancy lattice}.  (\fig{lattices} depicts the
redundancy lattices for bivariate and trivariate cases.)  For the bivariate case, the redundancy
lattice recovers the decomposition \eq{decomp_bivar}, while in the multivariate case it provides a
meaningful structure for decomposition of the total information provided by an arbitrary number of
predictor variables.

While the redundancy lattice of PID provides a structure for multivariate information decomposition,
it does not uniquely determine the value of the PI atoms in the lattice.  To do so requires a
definition of a measure of redundant information which satisfies the above axioms.  Hence, in order
to complete the PID framework, Williams and Beer simultaneously introduced a measure of redundant
information called $\imin$ which quantifies redundancy as the minimum information that any source
provides about a target event $t$, averaged over all possible events from~$T$.  However, not long
after its introduction $\imin$ was heavily criticised. Firstly, $\imin$ does not distinguish between
``whether different random variables carry the \emph{same} information or just the \emph{same
  amount} of information''~\citep[p.\ 269]{bertschinger2013} (see also \citep{harder2013,
  griffith2014}).  Secondly, $\imin$ does not possess the target chain rule introduced by
\citet{bertschinger2013} (under the name left chain rule).  This latter point is problematic as the
target chain rule is a natural generalisation of the chain rule of mutual information---i.e.\ one of
the fundamental, and indeed characterising, properties of information in Shannon's
theory~\citep{shannon1948,fano1961}.

These issues with $\imin$ prompted much research attempting to find a suitable replacement measure
compatible with the PID framework.  Using the methods of information geometry, \citet{harder2013}
focused on a definition of redundant information called $\ired$ (see also \citep{harder2013phd}).
\citet{bertschinger2014} defined a measure of unique information $\uitilde$ based upon the notion
that if one variable contains unique information then there must be some way to exploit that
information in a decision problem.  \citet{griffith2012} used an entirely different motivation to
define a measure of synergistic information $\ivk$ whose decomposition transpired to be equivalent
to that of $\uitilde$ \citep{bertschinger2014}.  Despite this effort, none of these proposed
measures are entirely satisfactory.  Firstly, just as for $\imin$, none of these proposed measures
possess the target chain rule.  Secondly, these measures are not compatible with the PID framework
in general, but rather are only compatible with PID for the special case of bivariate predictors,
i.e.\ the decomposition~\eq{decomp_bivar}. This is because they all simultaneously satisfy the
Williams and Beers axioms, local positivity, and the \emph{identity property} introduced by
\citet{harder2013}.  In particular, \citet{rauh2014} proved that no measure satisfying the identity
property and the Williams and Beer Axioms~\ref{ax:wb_comm}--\ref{ax:wb_sr} can yield a non-negative
information decomposition beyond the bivariate case of two predictor variables.  In addition to
these proposed replacements for $\imin$, there is also a substantial body of literature discussing
either PID, similar attempts to decompose multivariate information, or the problem of information
decomposition in general \citep{bertschinger2013, rauh2014, olbrich2015, perrone2016, harder2013phd,
  griffith2014, griffith2015, rosas2016, barrett2015, lizier2013, ince2017, ince2017ped,
  chicharro2017, rauh2017a, rauh2017b, rauh2017c, faes2017, pica2017, james2017, makkeh2017,
  kay2017}.  Furthermore, the current proposals have been applied to various problems in
neuroscience \citep{angelini2010, stramaglia2016, ghazi2017, maity2017, tax2017, wibral2017}.
Nevertheless (to date), there is no generally accepted measure of redundant information that is
entirely compatible with PID framework, nor has any other well-accepted multivariate information
decomposition emerged.

To summarise the problem, we are seeking a meaningful decomposition of the information provided an
arbitrarily large set of predictor variables about a target variable, into atoms of partial
information contributed either individually or jointly by the various subsets of the predictors.
Crucially, the redundant information must capture when two predictor variables are carrying the same
information about the target, not merely the same amount of information.  Finally, any proposed
measure of redundant information should satisfy the target chain rule so that net redundant
information can be consistently computed for consistently for multiple target events.


\section{Pointwise Information Theory}
\label{sec:localisability}

Although underappreciated in the current reference texts on information theory \citep{cover2012,
  mackay2003}, both the entropy and mutual information can be derived from first principles as
fundamentally \emph{pointwise} quantities---that is, as quantities which measure the information
content of individual events rather than entire variables.\footnote{The term \emph{pointwise mutual
    information} has only recently become typical.  Perhaps the term event-wise would provide a more
  apt description; however, the usage is not typical.  \citet{woodward1953} and \citet{fano1961}
  both referred to it as the \emph{mutual information} and then explicitly prefix the \emph{average}
  mutual information.  Some literature, typically in the context of time-series analysis, refer to
  it as the \emph{local} mutual information, e.g.\citep{lizier2013,ince2017}.}  The pointwise
entropy $h(t)=-\log p(t)$ quantifies the information content of a single event $t$, while the
pointwise mutual information,
\begin{equation}
  i(s;t) = \log \frac{p(t|s)}{p(t)}
           = \log \frac{p(s,t)}{p(s)p(t)}
           = \log \frac{p(s|t)}{p(s)},
\end{equation}
quantifies the information provided by $s$ about $t$, or vice versa. The usual (average) entropy
and (average) mutual information can be recovered by taking the expectation over all events from the
relevant variables, i.e.\ $H(T)=\big\langle h(t) \big\rangle$ and
$I(S;T)=\big\langle i(s;t) \big\rangle$.

To our knowledge, this pointwise notion of information was first considered by Woodward and Davies
\citep{woodward1952, woodward1953} who noted that average form of Shannon's entropy ``tempts one to
enquire into other simpler methods of derivation [of the per state
entropy]''~\citep[p.~51]{woodward1953}.  Indeed, they showed that the pointwise entropy and
pointwise mutual information can both be derived from just two axioms concerning the addition of the
information provided by the occurrence of individual events~\citep{woodward1952}.  \citet{fano1961}
formalised their idea further by deriving the pointwise mutual information and pointwise entropy
from four postulates which ``should be satisfied by a useful measure of information''~\citep[p.\
31]{fano1961}.  This bottom-up approach of first deriving the pointwise quantities and then taking
the expectation over these quantities yields the same quantities as Shannon's top-down method of
directly defining the average quantities.  Although both approaches arrive at the same (average)
quantities, Shannon's treatment obfuscates the pointwise nature of the fundamental quantities---in
contrast to Fano's treatment which makes it manifestly obvious.

The relevance of this pointwise nature of information to the problem of information decomposition
will be established and discussed in detail in the next section (\secRef{local_pid}). However,
before continuing, it is important to note that---in contrast to the (average) mutual
information---the pointwise mutual information is not non-negative.  Positive pointwise information
corresponds to the predictor event $s$ raising the probability $p(t|s)$ relative to the prior
probability $p(t)$.  Hence when the event $t$ occurs it can be said that the event $s$ was
\emph{informative} about the event $t$.  Conversely, negative pointwise information corresponds to
the event $s$ lowering the posterior probability $p(t|s)$ relative to the prior probability $p(t)$.
Hence when the event $t$ occurs we can say that the event $s$ was \emph{misinformative} about the
event $t$.\footnote{The term misinformation should absolutely not be taken to mean disinformation
  (i.e.\ does not mean intentionally misleading information).  Furthermore, note that while a source
  event $s$ may be deemed to be misinformative about a particular target event $t$, a source event
  $s$ is never misinformative about the target variable $T$ \emph{on average}.  This can be seen by
  noting that the pointwise mutual information averaged over all target realisations is non-negative
  \citep{fano1961}.  In other words, the information provided by $s$ is on average helpful for
  predicting $T$; however, in certain instances this, typically helpful information is misleading in
  the sense that it lowers $p(t|s)$ relative to $p(t)$.  Typically helpful information which
  subsequently turns out to be misleading is misinformation.}

\subsection{Pointwise Information Decomposition}
\label{sec:local_pid}

Now that we are familiar with pointwise nature of information, suppose that we have a discrete
realisation from the joint event space $\mc{T} \by \mc{S}_1 \by \mc{S}_2$ consisting of the target
event $t$ and predictor events $s_1$ and $s_2$.  The pointwise mutual information $i(s_1;t)$
quantifies the information provided individually by $s_1$ about $t$, while the pointwise mutual
information $i(s_2;t)$ quantifies the information provided individually by $s_2$ about $t$. The
pointwise mutual information $i(s_{1,2};t)$ quantifies the total information provided jointly by
$s_1$ and $s_2$ about $t$.  In correspondence with the (average) bivariate decomposition
\eq{decomp_bivar}, consider the pointwise bivariate decomposition, first suggested by
\citet{lizier2013},
\begin{alignat}{3}
  \label{eq:local_decomp_bivar}
  \isott &&= \ri &+ \uio + \uit + \ci,  \nonumber\\
  \isot  &&= \ri &+ \uio,               \nonumber\\
  \istt  &&= \ri &+ \uit.
\end{alignat}
Note that the lower case quantities denote the pointwise equivalent of the corresponding upper case
quantities in \eq{decomp_bivar}.  This decomposition could be considered for every discrete
realisation on the support of the joint distribution \mbox{$P(S_1,S_2,T)$}.  Hence, consider taking
the expectation of these pointwise atoms over all discrete realisations,
\begin{align}
  \label{eq:local_decomp_bivar_avg}
  \UIo &= \big\langle \uio \big\rangle,  &  \RI  &= \big\langle \ri  \big\rangle,  \nonumber\\
  \UIt &= \big\langle \uit \big\rangle,  &  \CI  &= \big\langle \ci  \big\rangle.
\end{align}
Since the expectation is a linear operation, this will recover the (average) bivariate decomposition
\eq{decomp_bivar}.  Equations \eq{local_decomp_bivar} for every discrete realisation, together with
\eq{decomp_bivar} and \eq{local_decomp_bivar_avg} form the bivariate pointwise information
decomposition.  Just as in \eq{decomp_bivar}, these equations are underdetermined requiring a
separate definition of either the pointwise unique, redundant or complementary information for
uniqueness.  (Defining an average atom is sufficient for a unique bivariate decomposition
\eq{decomp_bivar}, but still leaves the pointwise decomposition \eq{local_decomp_bivar} within each
realisation underdetermined).

\subsection{Pointwise Unique}
\label{sec:locunq}

Now consider applying this pointwise information decomposition to the probability distribution
\emph{Pointwise Unique} (\locunq) in \tb{locunq}.  In \locunq, observing $0$ in either of $S_1$ or
$S_2$ provides zero information about the target $T$, while complete information about the outcome
of $T$ is obtained by observing $1$ or a $2$ in either predictor.  The probability distribution is
structured such that in each of the four realisations, one predictor provides complete information
while the other predictor provides zero information---the two predictors never provide the same
information about the target which is justified by noting that one of the two predictors always
provides zero pointwise information.

Given that redundancy is supposed to capture the same information, it seems reasonable to assume
there must be zero pointwise redundant information for each realisation.  This assumption is made
without any measure of pointwise redundant information; however, no other possibility seems
justifiable.  This assertion is used to determine the pointwise redundant information terms in
\tb{locunq}.  Then using the pointwise information decomposition \eq{local_decomp_bivar}, we can
then evaluate the other pointwise atoms of information in \tb{locunq}.  Finally using
\eq{local_decomp_bivar_avg}, we get that there is zero (average) redundant information, and
$\nf{1}{2}$~bit of (average) unique information from each predictor.  From the pointwise
perspective, the only reasonable conclusion seems to be that the predictors in \locunq{} must
contain only unique information about the target.

However, in contrast to the above, $\imin$, $\ired$, $\uitilde$, and $\ivk$ all say that the
predictors in \locunq{} contain no unique information, rather only $\nf{1}{2}$ bit of redundant
information plus $\nf{1}{2}$ bit of complementary information.  This problem, which will be referred
to as the \emph{pointwise unique problem}, is a consequence of the fact that these measures all
satisfy Assumption~($*$) of \citet{bertschinger2014}, which (in effect) states that the unique and
redundant information should only depend on the marginal distributions \mbox{$P(S_1,T)$} and
\mbox{$P(S_2,T)$}.  In particular, any measure which satisfies Assumption~($*$) will yield zero
unique information when \mbox{$P(S_1,T)$} is isomorphic to \mbox{$P(S_2,T)$}, as is the case for
\locunq{}. (Here isomorphic should be taken to mean isomorphic probability spaces, e.g.\
\citep[p.~27]{gray1988} or \citep[p.4]{martin1984}.)  It arises because Assumption~($*$) (and indeed
the operational interpretation the led to its introduction) does not respect the pointwise nature of
information.  This operational view does not take into account the fact that individual events $s_1$
and $s_2$ may provide different information about the event $t$, even if the probability
distributions $P(S_1,T)$ and $P(S_2,T)$ are the same. Hence, we contend that for any measure to
capture the same information (not merely the same amount), it must respect the pointwise nature of
information.

\begin{table}[t]
 \centering 
  \ifarXiv
  \vspace{5pt}
  \else
  \tablesize{\footnotesize}
  \fi
  \setlength{\tabcolsep}{5pt}
  \begin{tabular}{c || c c | c || c c | c || c c c c}
    \hline
    $p$ & $s_1$ & $s_2$ & $t$ & $\isot$ & $\istt$ & $\isott$ & $\uio$ & $\uit$ & $\ri$ & $\ci$ \\
    \hline\hline
    \nf{1}{4} & 0 & 1 & 1 & 0 & 1 & 1 & 0 & 1 & 0 & 0 \\
    \nf{1}{4} & 1 & 0 & 1 & 1 & 0 & 1 & 1 & 0 & 0 & 0 \\
    \nf{1}{4} & 0 & 2 & 2 & 0 & 1 & 1 & 0 & 1 & 0 & 0 \\
    \nf{1}{4} & 2 & 0 & 2 & 1 & 0 & 1 & 1 & 0 & 0 & 0 \\
    \hline\hline
    \multicolumn{4}{c||}{\scriptsize Expected values}
        & \nf{1}{2} & \nf{1}{2} & 1 & \nf{1}{2} & \nf{1}{2} & 0 & 0 \\
    \hline
  \end{tabular}
  \caption{\label{tb:locunq} Example \locunq{}. For each realisation, the pointwise mutual
    information proided by each individual and joint predictor events, about the target event has
    been evaluated.  Note that one predictor event always provides full information about the target
    while the other provides zero information.  Based on the this, it is \emph{assumed} that there
    must be zero redundant information.  The PPI atoms are then calculated via
    \eq{local_decomp_bivar}.}
 
\end{table}


\subsection{Pointwise Partial Information Decomposition}
\label{sec:ppid}

\newtheorem{paxiom}{PPID Axiom}

With the pointwise unique problem in mind, consider constructing an information decomposition with
the pointwise nature of information as an inherent property.  Let $\mb{a}_1,\ldots,\mb{a}_k$ be
potentially intersecting subsets of the predictor events $\mb{s}=\{s_1,\ldots,s_n\}$, called source
events.  Now consider rewriting the Williams and Beer axioms in terms of a measure of pointwise
redundant information $i_\cap$ with the aim of deriving a \emph{pointwise partial information
  decomposition} (PPID).
\begin{paxiom}[Symmetry]
  Pointwise redundant information is invariant under permutations $\sigma$ of source events,
  \begin{equation*}
    i_\cap\big(\mb{a}_1,\ldots,\mb{a}_k \ra t\big)
    = i_\cap\big(\sigma(\mb{a}_1),\dots,\sigma(\mb{a}_k) \ra T\big).
  \end{equation*}
\end{paxiom}
\begin{paxiom}[Monotonicity]
  \label{ax:p_mono}
  Pointwise redundant information decreases monotonically as more source events are included,
  \begin{equation*}
    i_\cap\big(\mb{a}_1,\ldots,\mb{a}_{k-1} \ra t\big)
    \leq i_\cap\big(\mb{a}_1,\ldots,\mb{a}_k \ra t\big)
  \end{equation*}
  with equality if $\mb{a}_k \supseteq \mb{a}_i$ for any
  $\mb{a}_i \in \{\mb{a}_1,\ldots,\mb{a}_{k-1} \}$.
\end{paxiom}
\begin{paxiom}[Self-redundancy]
  Pointwise redundant information for a single source event $\mb{a}_i$ equals the pointwise mutual
  information,
  \begin{equation*}
    i_\cap\big(\mb{a}_i \ra t\big) = i\big(\mb{a}_i \,; t\big).
  \end{equation*}
\end{paxiom}

It seems that the next step should be to define some measure of pointwise redundant information
which is compatible with these PPID axioms; however, there is a problem---the pointwise mutual
information is not non-negative.  While is not an issue for the examples like \locunq{}, where none
of the source events provide negative pointwise information, it is an issue in general (e.g.\ see
\imprdn{} in \secRef{imprdn}).  The problem is that set-theoretic intuition behind
Axiom~\ref{ax:p_mono} (monotonicity) makes little sense when considering signed measures like the
pointwise mutual information.

Given the desire to address the pointwise unique problem, there is a need to overcome this issue.
\citet{ince2017} suggested that the set-theoretic intuition is only valid when all source events
provide either positive or negative pointwise information.  Ince contends that information and
misinformation are ``fundamentally different''~\citep[p.~11]{ince2017} and that the set-theoretic
intuition should be admitted in the difficult to interpret situations where both are present.  We
however, will take a different approach---one which aims to deal with these difficult to interpret
situations whilst preserving the set-theoretic intuition that redundancy corresponds to overlapping
information.

By way of a preview, we first consider precisely how an event $s_1$ provides information about an
event $t$ by the means of two distinct types of probability mass exclusion.  We show how considering
the process in this way naturally splits the pointwise mutual information into particular entropic
components, and how one can consider redundancy on each of these components separately.  Splitting
the signed pointwise mutual information into these unsigned entropic components circumvents the
above issue with Axiom~\ref{ax:p_mono} (monotonicity).  Crucially, however, by deriving these
entropic components from the probability mass exclusions, we retain the set-theoretic intuition of
redundancy---redundant information will correspond to overlapping probability mass exclusions in the
two-event partition~${\mc{T}^t=\{t,\ob{t}\}}$.


\section{Probability Mass Exclusions and the Directed Components\\of Pointwise Mutual Information}
\label{sec:same_info}

By definition, the pointwise information provided by $s$ about $t$ is associated with a change from
the prior $p(t)$ to the posterior $p(t|s)$.  As we explored from first principles in
\citet{finn17a}, this change is a consequence of the \emph{exclusion} of probability mass in the
target distribution $P(T)$ induced by the occurrence of the event $s$ and inferred via the joint
distribution $P(S,T)$.  To be specific, when the event $s$ occurs, one knows that the complementary
event \mbox{$\ob{s} = \{\mc{S} \sm s\}$} did not occur.  Hence one can \emph{exclude} the
probability mass in the joint distribution $P(S,T)$ associated with the complementary event, i.e.\
exclude $P(\ob{s},T)$, leaving just the probability mass $P(s,T)$ remaining.  The new target
distribution $P(T|s)$ is evaluated by normalising this remaining probability mass.  In
\citep{finn17a} we introduced \emph{probability mass diagrams} in order to visually explore the
exclusion process.  \fig{pmass_intro} provides an example of such a diagram.  Clearly, this process
is merely a description of the definition of conditional probability.  Nevertheless, we content that
by viewing the change from the prior to the posterior in this way---by focusing explicitly on the
exclusions rather than the resultant conditional probability---the vague intuition that redundancy
corresponds to overlapping information becomes more apparent. This point will elaborated upon in
\secRef{op_interp_red}. However, in order to do so, we need to first discuss the two distinct types
of probability mass exclusion (which we do in \secRef{two_types_excl}) and then relate these to
information-theoretic quantities (which we do in \secRef{spec_amb}).

\begin{figure}[t]
  \centering \begin{tikzpicture}

  \def\originy{0} 
  \def\height{3}
  \def\width{1.5}
  \def\overdrawn{0.2}

  
  \def\originx{0}
  
  \draw[black, line width=1pt] (\originx,\originy)
    -- (\originx,\originy+\height);
  \draw[black, line width=1pt] (\originx-\overdrawn,\originy+\height)
    -- (\originx+\width+\overdrawn,\originy+\height);
  \draw[black, line width=1pt] (\originx+\width,\originy+\height)
    -- (\originx+\width,\originy);
  \draw[black, line width=1pt] (\originx+\width+\overdrawn,\originy)
    -- (\originx-\overdrawn,\originy);

  \draw[black, line width=0.75pt] (\originx,\originy+7/8*\height)
    -- (\originx+\width+\overdrawn,\originy+7/8*\height);
  \draw[black, line width=1pt] (\originx-\overdrawn,\originy+\height/2)
    -- (\originx+\width+\overdrawn,\originy+\height/2);
  \draw[black, line width=0.75pt] (\originx,\originy+1/4*\height)
    -- (\originx+\width+\overdrawn,\originy+1/4*\height);

  \node at (\originx,3/4*\height)[anchor=east] {$t^1$};
  \node at (\originx,1/4*\height)[anchor=east] {$t^2$};  
  \node at (\originx+\width,15/16*\height)[anchor=west] {$s^1$};
  \node at (\originx+\width,11/16*\height)[anchor=west] {$s^2$};
  \node at (\originx+\width,3/8*\height)[anchor=west] {$s^1$};
  \node at (\originx+\width,1/8*\height)[anchor=west] {$s^3$};
  \node at (\originx+\width/2,\originy+\height)[anchor=south] {$P(S,T)$};
  \node at (\originx+\width/2,\originy+15/16*\height)[anchor=center] {$\nf{1}{8}$};
  \node at (\originx+\width/2,\originy+11/16*\height)[anchor=center] {$\nf{3}{8}$};
  \node at (\originx+\width/2,\originy+3/8*\height)[anchor=center] {$\nf{1}{4}$};
  \node at (\originx+\width/2,\originy+1/8*\height)[anchor=center] {$\nf{1}{4}$};

  
  \def\originxtwo{3}
  
  \draw[black, line width=1pt] (\originxtwo,\originy)
    -- (\originxtwo,\originy+\height);
  \draw[black, line width=1pt] (\originxtwo-\overdrawn,\originy+\height)
    -- (\originxtwo+\width+\overdrawn,\originy+\height);
  \draw[black, line width=1pt] (\originxtwo+\width,\originy+\height)
    -- (\originxtwo+\width,\originy);
  \draw[black, line width=1pt] (\originxtwo+\width+\overdrawn,\originy)
    -- (\originxtwo-\overdrawn,\originy);

  \draw[black, line width=0.75pt] (\originxtwo,\originy+7/8*\height)
    -- (\originxtwo+\width+\overdrawn,\originy+7/8*\height);
  \draw[black, line width=1pt] (\originxtwo-\overdrawn,\originy+\height/2)
    -- (\originxtwo+\width+\overdrawn,\originy+\height/2);
  \draw[black, line width=0.75pt] (\originxtwo,\originy+1/4*\height)
    -- (\originxtwo+\width+\overdrawn,\originy+1/4*\height);

  \draw[pattern=north east lines, pattern color=red] (\originxtwo,\originy+\height/2) rectangle
    (\originxtwo+\width,\originy+7/8*\height);
  \draw[pattern=vertical lines, pattern color=red] (\originxtwo,\originy) rectangle
    (\originxtwo+\width,\originy+\height/4);

  \node at (\originxtwo,3/4*\height)[anchor=east] {$t^1$};
  \node at (\originxtwo,1/4*\height)[anchor=east] {$\ob{t^1}$};  
  \node at (\originxtwo+\width,15/16*\height)[anchor=west] {$s^1$};
  \node at (\originxtwo+\width,11/16*\height)[anchor=west] {$\ob{s^1}$};
  \node at (\originxtwo+\width,3/8*\height)[anchor=west] {$s^1$};
  \node at (\originxtwo+\width,1/8*\height)[anchor=west] {$\ob{s^1}$};
  \node at (\originxtwo+\width/2,\originy+\height)[anchor=south] {$P(s^1,T)$};
  \node at (\originxtwo+\width/2,\originy+15/16*\height)[anchor=center] {$\nf{1}{8}$};
  \node at (\originxtwo+\width/2,\originy+3/8*\height)[anchor=center] {$\nf{1}{4}$};


  \def\originxthree{6.5}
  \def\arrowgap{0.2}


  \draw[->, black, line width=0.75pt] (\originxtwo+\width+\overdrawn+\arrowgap,
      \originy+\height)
    -- (\originxthree-\overdrawn-\arrowgap,\originy+\height);
  
  \draw[->, black, line width=0.75pt] (\originxtwo+\width+\overdrawn+\arrowgap,
      \originy+7/8*\height)
    -- (\originxthree-\overdrawn-\arrowgap/1.41,\originy+2/3*\height+\arrowgap/1.41);

  \draw[->, black, line width=0.75pt] (\originxtwo+\width+\overdrawn+\arrowgap,
      \originy+\height/2)
    -- (\originxthree-\overdrawn-\arrowgap/1.41,\originy+2/3*\height-\arrowgap/1.41);

  \draw[->, black, line width=0.75pt] (\originxtwo+\width+\overdrawn+\arrowgap,
      \originy+1/4*\height)
    -- (\originxthree-\overdrawn-\arrowgap/1.41,\originy+\arrowgap/1.41);

  \draw[->, black, line width=0.75pt] (\originxtwo+\width+\overdrawn+\arrowgap,
      \originy)
    -- (\originxthree-\overdrawn-\arrowgap,\originy); 

    
  \draw[black, line width=1pt] (\originxthree,\originy)
    -- (\originxthree,\originy+\height);
  \draw[black, line width=1pt] (\originxthree-\overdrawn,\originy+\height)
    -- (\originxthree+\width+\overdrawn,\originy+\height);
  \draw[black, line width=1pt] (\originxthree+\width,\originy+\height)
    -- (\originxthree+\width,\originy);
  \draw[black, line width=1pt] (\originxthree+\width+\overdrawn,\originy)
    -- (\originxthree-\overdrawn,\originy);

  \draw[black, line width=1pt] (\originxthree-\overdrawn,\originy+2/3*\height)
    -- (\originxthree+\width+\overdrawn,\originy+2/3*\height);

  \node at (\originxthree,5/6*\height)[anchor=east] {$t^1$};
  \node at (\originxthree,2/6*\height)[anchor=east] {$\ob{t^1}$};
  \node at (\originxthree+\width,5/6*\height)[anchor=west] {$s^1$}; 
  \node at (\originxthree+\width,2/6*\height)[anchor=west] {$s^1$};
  \node at (\originxthree+\width/2,\originy+\height)[anchor=south] {$P(T|s^1)$};
  \node at (\originxthree+\width/2,\originy+5/6*\height)[anchor=center] {$\nf{1}{3}$};
  \node at (\originxthree+\width/2,\originy+2/6*\height)[anchor=center] {$\nf{2}{3}$};

\end{tikzpicture}
  \caption[]{Sample probability mass diagrams, which use length to represent the probability mass of
    each joint event from $\mc{T} \by \mc{S}$. \emph{Left}:\ the joint distribution $P(S,T)$;
    \emph{Middle}: The occurrence of the event $s^1$ leads to exclusions of the complementary event
    $\ob{s^1}$ which consists of two elementary event, i.e.  $\ob{s^1} = \{ s^2, s^3\}$.  This
    leaves the probability mass $P(s^1,T)$ remaining.  The exclusion of the probability mass
    $p(\ob{s^1},t^1)$ was misinformative since the event $t^1$ did occur. By convention,
    misinformative exclusions will be indicated with diagonal hatching.  On the other hand, the
    exclusion of the probability mass $p(\ob{t^1},\ob{s^1})$ was informative since the complementary
    event $\ob{t^1}$ did not occur. By convention, informative exclusions will be indicated with
    horizontal or vertical hatching. \emph{Right}:\ this remaining probability mass can be
    normalised yielding the conditional distribution $P(T|s^1)$.}
  \label{fig:pmass_intro}
\end{figure}
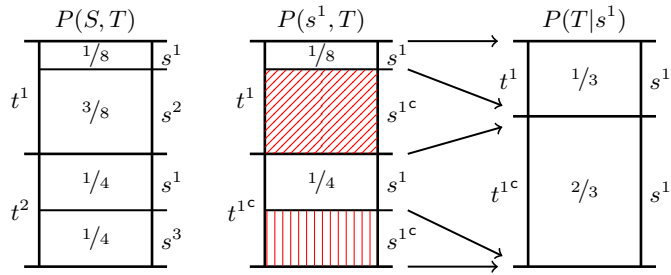

\subsection{Two Distinct Types of Probability Mass Exclusions}
\label{sec:two_types_excl}

In \citep{finn17a} we examined the two distinct types of probability mass exclusions.  The
difference between the two depends on where the exclusion occurs in the target distribution $P(T)$
and the particular target event $t$ which occurred.  \emph{Informative exclusions} are those which
are confined to the probability mass associated with the set of elementary events in the target
distribution which \emph{did not occur}, i.e.\ exclusions confined to the probability mass of the
complementary event $p(\ob{t})$.  They are called such because the pointwise mutual information
$i(s;t)$ is a monotonically increasing function of the total size of these exclusions $p(\ob{t})$.
By convention, informative exclusions are represented on the probability mass diagrams by horizontal
or vertical lines.  On the other hand, the \emph{misinformative exclusion} is confined to the
probability mass associated with the elementary event in the target distribution which \emph{did
  occur}, i.e.\ an exclusion confined to $p(t)$. It is referred to as such because the pointwise
mutual information $i(s;t)$ is a monotonically decreasing function of the size of this type of
exclusion $p(t)$.  By convention, misinformative exclusions are represented on the probability mass
diagrams by diagonal lines.

Although an event $s$ may exclusively induce either type of exclusion, in general both types of
exclusion are present simultaneously.  The distinction between the two types of exclusions leads
naturally to the following question---can one decompose the pointwise mutual information $i(s;t)$
into a positive informational component associated with the informative exclusions, and a negative
informational component associated with the misinformative exclusions?  This question is considered
in detail in \secRef{spec_amb}.  However, before moving on, there is a crucial observation to be
made about the pointwise mutual information which will have important implications for the measure
of redundant information to be introduced later. 

\begin{Remark}
  \label{re:crucial}
  The pointwise mutual information $i(s;t)$ depends only on the size of informative and
  misinformative exclusions.  In particular, it does not depend on the apportionment of the
  informative exclusions across the set of elementary events contained in the complementary event
  $\ob{t}$.
\end{Remark}

In other words, whether the event $s$ turns out to be net informative or misinformative about the
event $t$---whether $i(s;t)$ is positive or negative---depends on the size of the two types of
exclusions; but, to be explicit, does \emph{not} depend on the distribution of the informative
exclusion across the set of target events which did not occur.  This remark will be crucially
important when it comes to providing the operational interpretation of redundant information in
\secRef{op_interp_red}. (It is also further discussed in terms of Kelly gambling~\citep{kelly1956}
in \appRef{operational}).

\subsection{The Directed Components of Pointwise Information:\ Specificity and Ambiguity}
\label{sec:spec_amb}

We return now to the idea that one might be able to decompose the pointwise mutual information into
a positive and negative component associated with the informative amd misinformative exclusions
respectively.  In \citep{finn17a} we proposed four postulates for such a decomposition.  Before
stating the postulates, it is important to note that although there is a ``surprising symmetry''
\citep[p.\ 23]{ash1965} between the information provided by $s$ about $t$ and the information
provided by $t$ about $s$, there is nothing to suggest that the components of the decomposition
should be symmetric---indeed the intuition behind the decomposition only makes sense when
considering the information is considered in a directed sense. As such, directed notation will be
used to explicitly denote the information provided by $s$ about $t$.

\newtheorem{postulate}{Postulate}

\begin{postulate}[Decomposition]
  \label{post:decomp}
  The pointwise information provided by $s$ about $t$ can be decomposed into two non-negative
  components, such that
  \begin{equation*}
    i(s;t) = i_+(s \ra t) - i_-(s \ra t).
    \end{equation*}
\end{postulate}

\begin{postulate}[Monotonicity]
  \label{post:monotonicity}
  For all fixed $p(s,t)$ and $p(\ob{s},t)$, the function $i_+(s \ra t)$ is a monotonically
  increasing, continuous function of $p(\ob{t},\ob{s})$.  For all fixed $p(\ob{t},s)$ and
  $p(\ob{t},\ob{s})$, the function $i_-(s \ra t)$ is a monotonically increasing continuous function
  of $p(\ob{s},t)$.  For all fixed $p(s,t)$ and $p(\ob{t},s)$, the functions $i_+(s \ra t)$ and
  $i_-(s \ra t)$ are monotonically increasing and decreasing functions of $p(\ob{t},\ob{s})$,
  respectively.
\end{postulate}

\begin{postulate}[Self-Information]
  \label{post:self_info}
  An event cannot misinform about itself, $i_+(s \ra s) = i(s;s) = - \log p(s)$.
\end{postulate}

\begin{postulate}[Chain Rule]
  \label{post:chain_rule}
  The functions $i_+(s_{1,2} \ra t)$ and $i_-(s_{1,2} \ra t)$ satisfy a chain rule, i.e.
  \begin{align*}
    i_+(s_{1,2} \ra t) &= i_+(s_1 \ra t) + i_+(s_2 \ra t | s_1) \\
                       &= i_+(s_2 \ra t) + i_+(s_1 \ra t | s_2), \\
    i_-(s_{1,2} \ra t) &= i_-(s_1 \ra t) + i_-(s_2 \ra t | s_1) \\
                       &= i_-(s_2 \ra t) + i_-(s_1 \ra t | s_2)
  \end{align*}
\end{postulate}

%
%

In \citet{finn17a}, we proved that these postulates lead to the following forms which are unique up
to the choice of the base of the logarithm in the mutual information in Postulates~\ref{post:decomp}
and \ref{post:self_info},
\begin{alignat}{3}
  &i^+(s_1 \ra t)          &&= h(s_1)             &&= -\log p(s_1),   \label{eq:specificity} \\
  &i^+(s_1 \ra t | s_2)    &&= h(s_1 | s_2)       &&= -\log p(s_1 | s_2),                    \\
  &i^+(s_{1,2} \ra t)      &&= h(s_{1,2})         &&= -\log p(s_{1,2}),                      \\
  &i^-(s_1 \ra t)          &&= h(s_1 | t)         &&= -\log p(s_1 | t), \label{eq:ambiguity} \\
  &i^-(s_1 \ra t | s_2) \; &&= h(s_1 | t, s_2) \; &&= -\log p(s_1 | t, s_2),                 \\
  &i^-(s_{1,2} \ra t)      &&= h(s_{1,2} | t)     &&= -\log p(s_{1,2} | t). 
\end{alignat}
That is, the Postulates~\ref{post:decomp}--\ref{post:chain_rule} uniquely decompose the pointwise information provided by $s$ about $t$ into
the following entropic components,
\begin{align}
  \label{eq:decomp}
  i(s; t) &= i^+(s \ra t) - i^-(s \ra t)     \nn\\
          &= h(s) - h(s | t).  
\end{align}
Although the decomposition of mutual information into entropic components is well-known, it is
non-trivial that Postulates~\ref{post:decomp} and \ref{post:self_info}, based on the size of the two
distinct types of probability mass exclusions, lead to this particular form, but not
$i(s;t) = h(t)-h(t|s)$ or $ i(s;t) =h(s) + h(t) - h(s,t)$.

It is important to note that although the original motivation was to decompose the pointwise mutual
information into separate components associated with informative and misinformative exclusion, the
decomposition \eq{decomp} does not quite possess this direct correspondence:
\begin{itemize}

\item The positive informational component ${i^+(s \ra t)}$ does not depend on $t$ but rather only
  on $s$.  This can be interpreted as follows:\ the less likely $s$ is to occur, the more specific
  it is when it does occur, the greater the total amount of probability mass excluded $p(\ob{s})$,
  and the greater the potential for $s$ to inform about $t$ (or indeed any other target
  realisation).

\item The negative informational component ${i^-(s \ra t)}$ depends on both $s$ and $t$, and can
  be interpreted as follows:\ the less likely $s$ is to coincide with the event $t$, the more
  uncertainty in $s$ given $t$, the greater size of the misinformative probability mass exclusion
  $p(\ob{s},t)$, and therefore the greater the potential for $s$ to misinform about $t$.

  
\end{itemize}
In other words, although the negative informational component ${i^-(s \ra t)}$ does correspond
directly to the size of the misinformative exclusion $p(\ob{s},t)$, the positive informational
component ${i^+(s \ra t)}$ does not correspond directly to the size of the informative exclusion
$p(\ob{t},\ob{s})$.  Rather, the positive informational component ${i^+(s \ra t)}$ corresponds to
the \emph{total} size of the probability mass exclusions $p(\ob{s})$, which is the sum of the sum of
the informative and misinformative exclusions.  For the sake of brevity, the positive informational
component ${i^+(s \ra t)}$ will be referred to as the \emph{specificity}, while the negative
informational component ${i^-(s \ra t)}$ will be referred to as the \emph{ambiguity}.\footnote{The
  usage of the term ambiguity in this context is due to Shannon:\ ``equivocation measures the
  average ambiguity of the received signal''.  Specificity is an antonym of ambiguity and the usage
  here is inline with the definition since the more specific an event $s$, the more information it
  could provide about $t$ after the ambiguity is taken into account.}

\subsection{Operational Interpretation of Redundant Information}
\label{sec:op_interp_red}

Arguing about whether one piece of information differs from another piece of information is
nonsensical without some kind of unambiguous definition of what it means for two pieces of
information to be the same.  As such, \citet{bertschinger2014} advocate the need to provide an
operational interpretation of what it means for information to be unique or redundant.  This section
provides our operational definition of what it means for information to be the same. This definition
provides a concrete interpretation of what it means for information to be redundant in terms of
overlapping probability mass exclusions.

The operational interpretation of redundancy adopted here is based upon the following idea:\ since
the pointwise information is ultimately derived from probability mass exclusions, the \emph{same
  information} must induce the \emph{same exclusions}.  More formally, the information provided by a
set of predictor events $s_1,\ldots,s_k$ about a target event~$t$ must be the same information if
each source event induces the same exclusions with respect to the two-event partition
${\mc{T}^t=\{t,\ob{t}\}}$.  While this statement makes the motivational intuition clear, it is not
yet sufficient to serve as an operational interpretation of redundancy:\ there is no reference to
the two distinct types of probability mass exclusions, the specific reference to the pointwise event
space $\mc{T}^t$ has not been explained, and there is no reference to the fact the exclusions from
each source may differ in size.

Informative exclusions are fundamentally different from misinformative exclusions and hence each
type of exclusion should be compared separately:\ informative exclusions can overlap with
informative exclusions, and misinformative exclusions can overlap with misinformative exclusions.
In information-theoretic terms, this means comparing the specificity and the ambiguity of the
sources separately---i.e.\ considering a measure of redundant specificity and a separate measure of
redundant ambiguity.  Crucially, these quantities (being pointwise entropies) are unsigned meaning
that the difficulties associated with Axiom~\ref{ax:p_mono} (Monotonicity) and signed pointwise
mutual information in \secRef{ppid} will not be an issue here.

The specific reference to the two-event partition $\mc{T}^t$ in the above statement is based upon
Remark~\ref{re:crucial} and is crucially important.  The pointwise mutual information does not
depend on the apportionment of the informative exclusions across the set of events which did not
occur, hence the pointwise redundant information should not depend on this apportionment either.  In
other words, it is immaterial if two predictor events $s_1$ and $s_2$ exclude different elementary
events within the target complementary event $\ob{t}$ (assuming the probability mass excluded is
equal) since with respect to the realised target event $t$ the difference between the exclusions is
only semantic.  This has important implications for the comparison of exclusions from different
predictor events.  As the pointwise mutual information depends on, and only depends on, the size of
the exclusions, then the only sensible comparison is a comparison of size.  Hence, the common or
overlapping exclusion must be the smallest exclusion. Thus, consider the following operational
interpretation of redundancy:

\newtheorem*{op_interp}{Operational Interpretation}

\begin{op_interp}[Redundant Specificity]
  The redundant specificity between a set of predictor events $s_1,\ldots,s_n$ is the specificity
  associated with the source event which induces the smallest total exclusions.
\end{op_interp}

\begin{op_interp}[Redundant Ambiguity]
  The redundant ambiguity between a set of predictor events $s_1,\ldots,s_n$ is the ambiguity
  associated with the source event which induces the smallest misinformative exclusion.
\end{op_interp}

\subsection{Motivational Example}
\label{sec:motiv_example}

To motivate the above operational interpretation, and in particular the need to treat the
specificity separately to the ambiguity, consider \fig{pmass_spec_amb}. In this pointwise example,
two different predictor events provide the same amount of pointwise information since
\mbox{$P(T|s_1^1) = P(T|s_2^1)$}, and yet the information provided by each event is in some way
different since each excludes different sections of the target distribution $P(T)$.  In particular,
$s_1^1$ and $s_2^1$ both preclude the target event $t^2$, while $s_2^1$ additionally excludes
probability mass associated with target events $t^1$ and $t^3$.  From the perspective of the
pointwise mutual information the events $s_1^1$ and $s_2^1$ seem to be providing the same
information as
\begin{equation}
  \label{eq:local_inf}
  i(s_1^1 \ra t^1) = i(s_2^1 \ra t^1) = \log \nf{4}{3} \text{ bit}.
\end{equation}
However, from the perspective of the specificity and the ambiguity it can be seen that information
is being provided in different ways since
\begin{align}
  \label{eq:local_inf_decomp}
  i^+(s_1^1 \ra t^1) &= \log \nf{4}{3} \text{ bit}, & i^-(s_1^1 \ra t^1) &= 0 \text{ bit}, \nn\\
  i^+(s_2^1 \ra t^1) &= \log \nf{8}{3} \text{ bit}, & i^-(s_2^1 \ra t^1) &= 1 \text{ bit}.
\end{align}

Now consider the problem of decomposing information into its unique, redundant and complementary
components.  \fig{pmass_spec_amb} shows where exclusions induced by $s_1^1$ and $s_2^1$ overlap
where they both exclude the target event $t^2$ which is an informative exclusion.  This is the only
exclusion induced by $s_1^1$ and hence all of the information associated with this exclusion must be
redundantly provided by the event $s_2^1$.  Without any formal framework, consider taking the
redundant specificity and redundant ambiguity,
\begin{alignat}{4}
  &r^+(s_1^1, s_2^1 \ra t^1) &&= i^+(s_1^1 \ra t^1) =& \; \log \nf{4}{3} & \text{ bit,} \\
  &r^-(s_1^1, s_2^1 \ra t_1) &&= i^-(s_1^1 \ra t^1) =& 0 &\text{ bit.}
\end{alignat}
This would mean that the event $s_2^1$ provides the following unique specificity and unique
ambiguity,
\begin{alignat}{3}
  &u^+(s_1^1 \sm s_2^1 \ra t^1) &&= i^+(s_1^1 \ra t^1)-r^+(s_1^1,s_2^1 \ra t^1)
    &&= 1 \text{ bit,} \\
  &u^-(s_1^1 \sm s_2^1 \ra t^1) &&= i^-(s_1^1 \ra t^1)-r^-(s_1^1,s_2^1 \ra t^1)
    &&= 1 \text{ bit.}
\end{alignat}
The redundant specificity $\log \nf{4}{3}$~bit accounts for the overlapping informative exclusion of
the event~$t^2$. The unique specificity and unique ambiguity from $s_2^1$ are associated with its
non-overlapping informative and misinformative exclusions; however, both of these $1$~bit and hence,
on net, $s_2^1$ is no more informative than $s_1^1$.  Although obtained without a formal framework,
this example highlights a need to consider the specificity and ambiguity rather than merely the
pointwise mutual information.

\begin{figure}[t]
  \centering \input{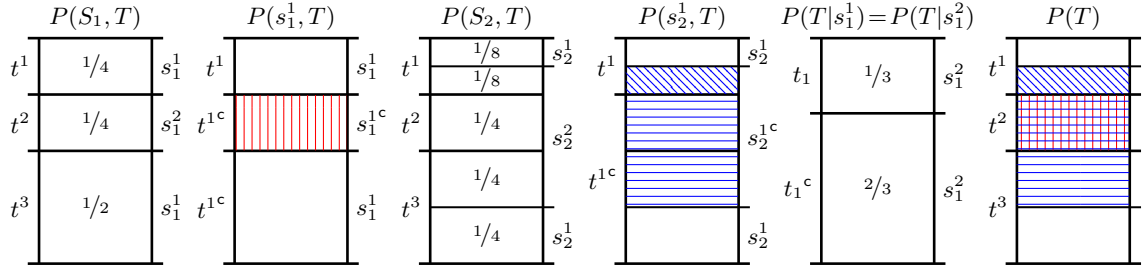}
  \caption{Sample probability mass diagrams for two predictors $S_1$ and $S_2$ to a given target
    $T$.  Here events in the two different predictor spaces provide the same amount of pointwise
    information about the target event, $\log_2 \nf{4}{3}$ bits, since
    \mbox{$P(T|s_1^1) = P(T|s_2^1)$}, although each excludes different sections of the target
    distribution $P(T)$. Since they both provide the same amount of information, is there a way to
    characterise what information the additional unique exclusions from the event $s_2^1$ are
    providing?}
  \label{fig:pmass_spec_amb}
\end{figure}


\section{Pointwise Partial Information Decomposition Using Specificity and Ambiguity}
\label{sec:local_framework}

\newtheorem{axiom}{Axiom}

\newcommand{\icp}{i_\cap^+}
\newcommand{\icn}{i_\cap^-}

Based upon the argumentation of \secRef{same_info}, consider the following axioms:
\begin{axiom}[Symmetry]
  \label{ax:symmetry}
  Pointwise redundant specificity $\icp$ and pointwise redundant ambiguity $\icn$ are invariant
  under any permutation $\sigma$ of source events,
  \begin{align*}
    \icp\big( \mb{a}_1, \dots, \mb{a}_k \ra t \big)
    &= \icp\big( \sigma(\mb{a}_1), \dots, \sigma(\mb{a}_k) \ra t \big), \\
    \icn\big( \mb{a}_1, \dots, \mb{a}_k \ra t \big)
    &= \icn\big( \sigma(\mb{a}_1), \dots, \sigma(\mb{a}_k) \ra t \big).
  \end{align*}
\end{axiom}

\begin{axiom}[Monotonicity]
  \label{ax:mono}
  Pointwise redundant specificity $\icp$ and pointwise redundant ambiguity $\icn$ decreases
  monotonically as more source events are included,
  \begin{align*} 
    \icp\big( \mb{a}_1, \dots, \mb{a}_{k-1}, \mb{a}_k \ra t \big)
      &\leq \icp\big(\mb{a}_1, \dots, \mb{a}_{k-1} \ra t \big),    \\
    \icn\big( \mb{a}_1, \dots, \mb{a}_{k-1}, \mb{a}_k \ra t \big)
      &\leq \icn\big(\mb{a}_1,\dots, \mb{a}_{k-1} \ra t \big).
  \end{align*}
  with equality if $\mb{a}_k \supseteq \mb{a}_i$ for any
  $\mb{a}_i \in \{\mb{a}_1, \ldots, \mb{a}_{k-1} \}$.
\end{axiom}

\begin{axiom}[Self-redundancy]
  \label{ax:sr}
  Pointwise redundant specificity $\icp$ and pointwise redundant ambiguity $\icn$ for a single
  source event $\mb{a}_i$ equals the specificity and ambiguity respectively,
  \begin{alignat*}{3}
    &\icp(\mb{a}_i \ra t) &&= i^+(\mb{a}_i \ra t) &&= h(\mb{a}_i),      \\
    &\icn(\mb{a}_i \ra t) &&= i^-(\mb{a}_i \ra t) &&= h(\mb{a}_i | t).
  \end{alignat*}
\end{axiom}

\begin{figure}[t]
 \centering 
  \begin{tikzpicture}

  \def\ox{0}
  \def\oy{0}
  \def\gx{1.8}
  \def\gy{1.5}

  \tikzstyle{block} = [rectangle, draw=none, 
    text width=5em, text centered, minimum height=1em]

    \small
    
    
  \node [block] (abc)      at (\ox,       \oy+6*\gy) {$\{123\}$};  

  \node [block] (bc)       at (\ox+\gx,   \oy+5*\gy) {$\{23\}$};
  \node [block] (ac)       at (\ox,       \oy+5*\gy) {$\{13\}$};
  \node [block] (ab)       at (\ox-\gx,   \oy+5*\gy) {$\{12\}$};
  
  \node [block] (acIbc)    at (\ox+\gx,   \oy+4*\gy) {$\{13\}\{23\}$};
  \node [block] (abIbc)    at (\ox,       \oy+4*\gy) {$\{12\}\{23\}$};
  \node [block] (abIac)    at (\ox-\gx,   \oy+4*\gy) {$\{12\}\{13\}$};

  \node [block] (abIacIbc) at (\ox+2*\gx, \oy+3*\gy) {$\{12\}\{13\}\{23\}$};
  
  \node [block] (c)        at (\ox+\gx,   \oy+3*\gy) {$\{3\}$};
  \node [block] (b)        at (\ox,       \oy+3*\gy) {$\{2\}$};
  \node [block] (a)        at (\ox-\gx,   \oy+3*\gy) {$\{1\}$};   

  \node [block] (cIab)     at (\ox+\gx,   \oy+2*\gy) {$\{3\}\{12\}$};
  \node [block] (bIac)     at (\ox,       \oy+2*\gy) {$\{2\}\{13\}$};
  \node [block] (aIbc)     at (\ox-\gx,   \oy+2*\gy) {$\{1\}\{23\}$}; 

  \node [block] (bIc)      at (\ox+\gx,   \oy+1*\gy) {$\{2\}\{3\}$};
  \node [block] (aIc)      at (\ox,       \oy+1*\gy) {$\{1\}\{3\}$};
  \node [block] (aIb)      at (\ox-\gx,   \oy+1*\gy) {$\{1\}\{2\}$};

  \node [block] (aIbIc)    at (\ox,       \oy      ) {$\{1\}\{2\}\{3\}$};

  \draw[->, black, line width=0.75pt] (abc) -- (bc);
  \draw[->, black, line width=0.75pt] (abc) -- (ac);
  \draw[->, black, line width=0.75pt] (abc) -- (ab); 
  
  \draw[->, black, line width=0.75pt] (bc) -- (abIbc);
  \draw[->, black, line width=0.75pt] (ac) -- (acIbc);
  \draw[->, black, line width=0.75pt] (ab) -- (abIbc); 
  \draw[->, black, line width=0.75pt] (bc) -- (acIbc);
  \draw[->, black, line width=0.75pt] (ac) -- (abIac);
  \draw[->, black, line width=0.75pt] (ab) -- (abIac);

  \draw[->, black, line width=0.75pt] (acIbc) -- (abIacIbc);
  \draw[->, black, line width=0.75pt] (abIbc) -- (abIacIbc);
  \draw[->, black, line width=0.75pt] (abIac) -- (abIacIbc); 

  \draw[->, black, line width=0.75pt] (acIbc) -- (c);
  \draw[->, black, line width=0.75pt] (abIbc) -- (b);
  \draw[->, black, line width=0.75pt] (abIac) -- (a);

  \draw[->, black, line width=0.75pt] (abIacIbc) -- (cIab);
  \draw[->, black, line width=0.75pt] (abIacIbc) -- (bIac);
  \draw[->, black, line width=0.75pt] (abIacIbc) -- (aIbc);
  
  \draw[->, black, line width=0.75pt] (c) -- (cIab);
  \draw[->, black, line width=0.75pt] (b) -- (bIac);
  \draw[->, black, line width=0.75pt] (a) -- (aIbc);

  \draw[->, black, line width=0.75pt] (cIab) -- (bIc);
  \draw[->, black, line width=0.75pt] (bIac) -- (bIc);
  \draw[->, black, line width=0.75pt] (aIbc) -- (aIc); 
  \draw[->, black, line width=0.75pt] (cIab) -- (aIc);
  \draw[->, black, line width=0.75pt] (bIac) -- (aIb);
  \draw[->, black, line width=0.75pt] (aIbc) -- (aIb);
  
  \draw[->, black, line width=0.75pt] (bIc) -- (aIbIc);
  \draw[->, black, line width=0.75pt] (aIc) -- (aIbIc);
  \draw[->, black, line width=0.75pt] (aIb) -- (aIbIc);

  \def\obx{\ox-6}
  \def\oby{\oy+3*\gy}
  \def\gbx{1.7}
  \def\gby{1.7}
  
  \node [block] (xy)  at (\obx,     \oby+\gby) {$\{12\}$};
  \node [block] (y)   at (\obx+\gbx, \oby     ) {$\{2\}$};
  \node [block] (x)   at (\obx-\gbx, \oby     ) {$\{1\}$};
  \node [block] (xIy) at (\obx,    \oby-\gby) {$\{1\}\{2\}$};

  \draw[->, black, line width=0.75pt] (xy) -- (y);
  \draw[->, black, line width=0.75pt] (xy) -- (x);
  \draw[->, black, line width=0.75pt] (y) -- (xIy);
  \draw[->, black, line width=0.75pt] (x) -- (xIy);
  
\end{tikzpicture}
  \caption{\label{fig:lattices} The lattice induced by the partial order $\preceq$
    \eq{partial_order} over the sources $\msc{A}(\mb{s})$ \eq{nodes}. \emph{Left}:\ the lattice for
    ${\mb{s}=\{s_1,s_2\}}$. \emph{Right}:~the lattice for ${\mb{s}=\{s_1,s_2,s_3\}}$.  See
    \appRef{tech} for further details.  Each node corresponds to the self-redundancy
    (Axiom~\ref{ax:sr}) of a source event, e.g.\ $\{1\}$ corresponds to the source event
    $\big\{\{s_1\}\big\}$, while $\{12,13\}$ corresponds to the source event
    $\big\{\{s_{1,2}\},\{s_{1,3}\}\big\}$.  Note that the specificity and ambiguity lattices share
    the same structure as the redundancy lattice of PID (cf. FIG.\ 2 in \citep{williams2010}).} 
\end{figure}
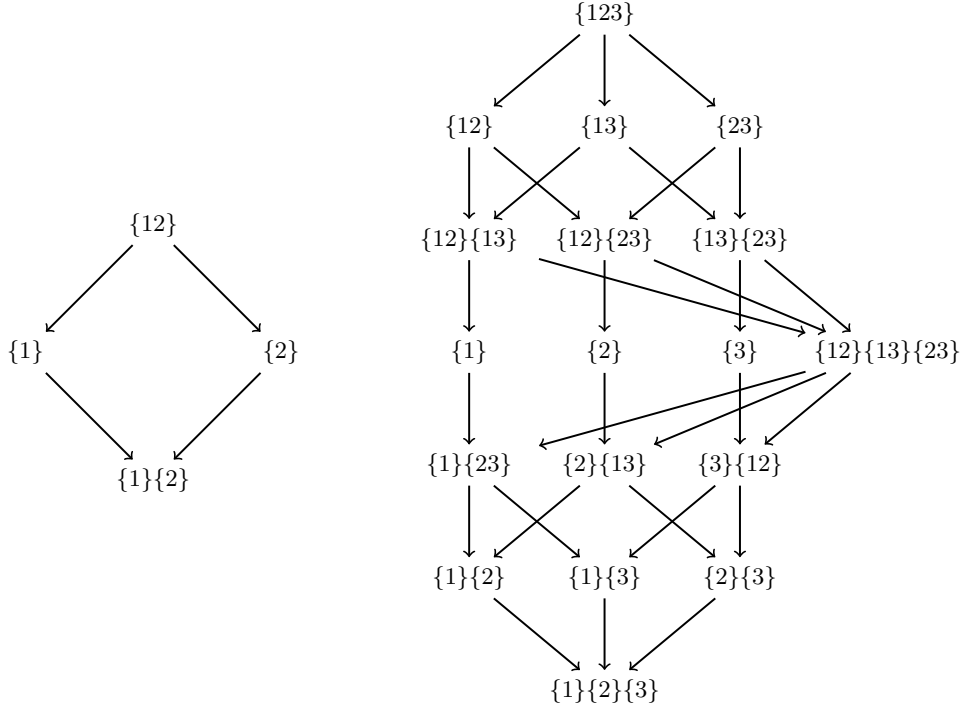

As shown in \appRef{lattices}, Axioms~\ref{ax:symmetry}--\ref{ax:sr} induce two lattices---namely
the \emph{specificity lattice} and \emph{ambiguity lattice}---which are depicted in \fig{lattices}.
Furthermore, each lattice is defined for every discrete realisation from $P(S_1, \ldots, S_n, T)$.
The redundancy measures $i_\cap^+$ or $i_\cap^-$ can be thought of as a cumulative information
functions which integrate the specificity or ambiguity uniquely contributed by each node as one
moves up each lattice.  Finally, just as in PID, performing a M\"obius inversion over each lattice
yielding the unique contributions of specificity and ambiguity from each sources event.

Similarly to PID, the specificity and ambiguity lattices provide a structure for information
decomposition, but unique evaluation requires a separate definition of redundancy.  However, unlike
PID (or even PPID), this evaluation requires both a definition of pointwise redundant specificity
and pointwise redundant ambiguity.  Before providing these definitions, it is helpful to first see
how the specificity and ambiguity lattices can be used to decompose multivariate information in the
now familiar bivariate case.

\subsection{Bivariate PPID using the Specificity and Ambiguity}
\label{sec:spec_amb_decomp}
 
Consider again the bivariate case where the aim is to decompose the information provided by $s_1$
and $s_2$ about $t$.  The specificity lattice can be used to decompose the pointwise specificity,
\begin{align}
  i^+(s_{1,2} \ra t) &= \rip + \uiop + \uitp + \cip,  \nn\\
  i^+(s_1 \ra t)      &= \rip + \uiop,                \nn\\
  i^+(s_2 \ra t)      &= \rip + \uitp; \label{eq:local_decomp_bivar_separate_spec}  \\
  \intertext{while the ambiguity lattice can be used to decompose the pointwise ambiguity,}
  i^-(s_{1,2} \ra t) &= \rin + \uion + \uitn + \cin,  \nn\\
  i^-(s_1 \ra t)      &= \rin + \uion,                \nn\\
  i^-(s_2 \ra t)      &= \rin + \uitn. \label{eq:local_decomp_bivar_separate_amb}
\end{align}
These equations share the same structural form as \eq{local_decomp_bivar} only now decompose the
specificity and the ambiguity rather than the pointwise mutual information , e.g.\ $\rip$ denotes
the redundant specificity while $\uion$ denoted the unique ambiguity from $s_1$.  Just as in for
\eq{local_decomp_bivar}, this decomposition could be considered for every discrete realisation on
the support of the joint distribution \mbox{$P(S_1,S_2,T)$}.

There are two ways one can be combine these values. Firstly, in a similar manner to
\eq{local_decomp_bivar_avg}, one could take the expectation of the atoms of specificity, or the
atoms of ambiguity, over all discrete realisations yielding the average PI atoms of specificity and
ambiguity,
\begin{align}
  \label{eq:local_decomp_bivar_avg_separate}
  \UIop &= \big\langle \uiop \big\rangle, & \UIon &= \big\langle \uion \big\rangle,  \nn\\
  \UItp &= \big\langle \uitp \big\rangle, & \UItn &= \big\langle \uitn \big\rangle,  \nn\\
  \RIp  &= \big\langle \rip  \big\rangle, & \RIn  &= \big\langle \rin  \big\rangle,  \nn\\
  \CIp  &= \big\langle \cip  \big\rangle. & \CIn  &= \big\langle \cin  \big\rangle.
\end{align}
Alternatively, one could subtract the pointwise unique, redundant and complementary ambiguity from
the pointwise unique, redundant and complementary specificity yielding the pointwise unique,
pointwise redundant and pointwise complementary information, i.e. recover the atoms from PPID,
\begin{alignat}{3}
  \label{eq:local_recomp}
  \ri  &= \rip  &&- \rin,   \nn\\
  \uio &= \uiop &&- \uion,  \nn\\
  \uit &= \uitp &&- \uitn,  \nn\\
  \ci  &= \cip  &&- \cin.
\end{alignat}
Both \eq{local_decomp_bivar_avg_separate} and \eq{local_recomp} are linear operations, hence one
could perform both of these operations (in either order) to obtain the average unique, average
redundant and average complementary information, i.e.\ recover the atoms from PID,
\begin{alignat}{3}
  \label{eq:recomp}
  \RI  &= \RIp  &&- \RIn,   \nn\\
  \UIo &= \UIop &&- \UIon,  \nn\\
  \UIt &= \UItp &&- \UItn,  \nn\\
  \CI  &= \CIp  &&- \CIn.
\end{alignat}

\subsection{Redundancy Measures on the Specificity and Ambiguity Lattices}
\label{sec:measure}

\newcommand{\rmin}{r_\text{min}}
\newcommand{\rminp}{\rmin^+}
\newcommand{\rminn}{\rmin^-}
\newcommand{\rminpn}{\rmin^\pm}
\newcommand{\Rmin}{R_\text{min}}
\newcommand{\Rminp}{\Rmin^+}
\newcommand{\Rminn}{\Rmin^-}
\newcommand{\Rminpn}{\Rmin^\pm}

\newenvironment{dudproof}
{\renewcommand{\qedsymbol}{}\proof}
{\endproof}

Now that we have a structure for our information decomposition, there is a need to provide a
definition of the pointwise redundant specificity and pointwise redundant ambiguity.  However,
before attempting to provide such a definition, there is a need to consider Remark~\ref{re:crucial}
and the operational interpretation of in \secRef{op_interp_red}.  In particular, the pointwise
redundant specificity $\icp$ and pointwise redundant ambiguity $\icn$ should only depend on the size
of informative and misinformative exclusions.  They should not depend on the apportionment of the
informative exclusions across the set of elementary events contained in the complementary event
$\ob{t}$.  Formally, this will be requirement will be enshrined via the following axiom.

\begin{axiom}[Two-event Partition]
  \label{ax:local_comp}
  The pointwise redundant specificity $\icp$ and pointwise redundant ambiguity $\icn$ are functions
  of the probability measures on the two-event partitions
  $\mc{A}_1^{\mb{a}_1} \by \mc{T}^t, \dots ,\mc{A}_k^{\mb{a}_k} \by \mc{T}^t$.
\end{axiom}

Since the pointwise redundant specificity $\icp$ is specificity associated with the source event
which induces the smallest total exclusions, and pointwise redundant ambiguity $\icn$ is the
ambiguity associated with the source event which induces the smallest misinformative exclusion,
consider the following definitions.

\begin{Definition}
  \label{def:specificity}
  The pointwise redundant specificity is given by
  \begin{equation}
    \label{eq:red_specificity}
    \rminp \big( \mb{a}_1, \dots, \mb{a}_k \ra t \big)
      = \min_{\mb{a}_i}\, i^+(\mb{a}_i \ra t)
      = \min_{\mb{a}_i}\, h(\mb{a}_i).
  \end{equation}
\end{Definition}

\begin{Definition}
  \label{def:ambiguity}
  The pointwise redundant ambiguity is given by
  \begin{equation}
    \label{eq:red_ambiguity}
    \rminn \big( \mb{a}_1, \dots, \mb{a}_k \ra t \big)
      = \min_{\mb{a}_i}\, i^-(\mb{a}_i \ra t)
      = \min_{\mb{a}_j}\, h(\mb{a}_j | t).
  \end{equation}
\end{Definition}

\ifarXiv
  \pagebreak
\else
\fi

\begin{restatable}{restatetheorem}{theoremsatisfy}
  \label{thm:satisfy}
  The definitions of $\rminp$ and $\rminn$ satisfy Axioms~\ref{ax:symmetry}--\ref{ax:local_comp}.
\end{restatable}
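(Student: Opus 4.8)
The plan is to verify each of the four axioms directly from the explicit minimum forms given in Definitions~\ref{def:specificity} and \ref{def:ambiguity}, since $\rminp$ and $\rminn$ are just pointwise minima of the per-source quantities $i^+(\mb{a}_i \ra t) = h(\mb{a}_i)$ and $i^-(\mb{a}_i \ra t) = h(\mb{a}_i \mid t)$. Two of the axioms are essentially immediate. \emph{Symmetry} (Axiom~\ref{ax:symmetry}) follows because the minimum of a finite collection of values is invariant under any reordering: permuting the source events merely permutes the arguments over which the minimum is taken. \emph{Self-redundancy} (Axiom~\ref{ax:sr}) is the degenerate case $k=1$, where the minimum over a single value is that value itself, so that $\rminp(\mb{a}_1 \ra t) = h(\mb{a}_1) = i^+(\mb{a}_1 \ra t)$ and likewise for the ambiguity.

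For the \emph{two-event partition} axiom (Axiom~\ref{ax:local_comp}), I would show that each summand, and hence the minimum, is already a function of the probability masses on the cells of $\mc{A}_i^{\mb{a}_i} \by \mc{T}^t$. Writing the specificity term as
\begin{equation*}
  i^+(\mb{a}_i \ra t) = h(\mb{a}_i) = -\log\big( p(\mb{a}_i,t) + p(\mb{a}_i,\ob{t}) \big)
\end{equation*}
shows it depends only on these cell masses, and the ambiguity term
\begin{equation*}
  i^-(\mb{a}_i \ra t) = h(\mb{a}_i \mid t)
    = -\log \frac{p(\mb{a}_i,t)}{p(\mb{a}_i,t) + p(\ob{\mb{a}_i},t)}
\end{equation*}
likewise depends only on the masses of $\mc{A}_i^{\mb{a}_i} \by \mc{T}^t$. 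Taking minima over $i$ preserves this dependence, establishing the axiom.

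The substantive clause is \emph{monotonicity} (Axiom~\ref{ax:mono}). The inequality itself is immediate: enlarging the collection from $\{\mb{a}_1,\dots,\mb{a}_{k-1}\}$ to $\{\mb{a}_1,\dots,\mb{a}_k\}$ only introduces an additional candidate into the minimisation, so both measures can only decrease or stay equal. The real content is the equality condition: when $\mb{a}_k \supseteq \mb{a}_i$ for some $i<k$, the added event must never strictly lower the minimum. I would establish this from the monotonicity of pointwise entropy under inclusion of the underlying events—since $\mb{a}_i \subseteq \mb{a}_k$ as sets of predictor events, the joint event $\mb{a}_k$ is more restrictive, so $p(\mb{a}_k) \leq p(\mb{a}_i)$ and hence $h(\mb{a}_k) \geq h(\mb{a}_i)$, with the same inequality holding after conditioning on $t$. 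Thus $i^+(\mb{a}_k \ra t) \geq i^+(\mb{a}_i \ra t) \geq \min_{j<k} i^+(\mb{a}_j \ra t)$, so $\mb{a}_k$ cannot attain a value below the existing minimum and the specificity minimum is unchanged; the argument for the ambiguity is identical.

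The main obstacle—modest as it is—is precisely this equality condition, which hinges on the elementary but essential fact that appending predictor events to a source event can only increase its pointwise (conditional) entropy. Once that monotonicity of $h$ under set inclusion is secured, every clause of Axioms~\ref{ax:symmetry}--\ref{ax:local_comp} follows from the standard behaviour of minima, completing the proof.
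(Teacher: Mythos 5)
Your proposal is correct and follows essentially the same route as the paper: symmetry, self-redundancy and the two-event-partition axiom from elementary properties of the minimum, and the only substantive step being the equality clause of monotonicity, settled by showing $h(\mb{a}_k)\geq h(\mb{a}_i)$ when $\mb{a}_i\subseteq\mb{a}_k$. The only cosmetic difference is that you derive this last inequality directly from $p(\mb{a}_k)\leq p(\mb{a}_i)$, whereas the paper cites Postulate~\ref{post:chain_rule} together with non-negativity of conditional pointwise entropy; the two arguments are equivalent.
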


\begin{restatable}{restatetheorem}{theoremmonotonically}
  \label{thm:monotonically}
  The redundancy measures $\rminp$ and $\rminn$ increase monotonically on the
  $\big\langle \msc{A}(\mb{s}),\preceq \big\rangle$.
\end{restatable}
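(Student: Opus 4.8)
The plan is to prove the statement directly from the explicit formulas \eq{red_specificity} and \eq{red_ambiguity}, reducing it to the monotonicity of the pointwise entropies $h(\mb{a})$ and $h(\mb{a}\,|\,t)$ under inclusion of source events. Write a node of the lattice as a collection $\alpha = \{\mb{a}_1,\dots,\mb{a}_k\}$ of source events, and abbreviate $\rminp(\alpha \ra t) := \rminp(\mb{a}_1,\dots,\mb{a}_k \ra t)$. Recall that the partial order \eq{partial_order} reads $\alpha \preceq \beta$ precisely when, for every $\mb{b} \in \beta$, there is some $\mb{a} \in \alpha$ with $\mb{a} \subseteq \mb{b}$. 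The goal is therefore to show that $\alpha \preceq \beta$ implies $\rminp(\alpha \ra t) \leq \rminp(\beta \ra t)$, and likewise for $\rminn$; that is, both measures are non-decreasing as one moves up $\big\langle \msc{A}(\mb{s}),\preceq \big\rangle$.

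The substantive ingredient is the following monotonicity lemma: if $\mb{a} \subseteq \mb{b}$ as source events, then $h(\mb{a}) \leq h(\mb{b})$ and $h(\mb{a}\,|\,t) \leq h(\mb{b}\,|\,t)$. The one place where a direction must be watched carefully is that a larger source event constrains the outcomes of more predictors, so as an event in the joint predictor space it is the \emph{smaller} set; hence $\mb{a} \subseteq \mb{b}$ yields $p(\mb{b}) \leq p(\mb{a})$ and, conditioning throughout on $t$, $p(\mb{b}\,|\,t) \leq p(\mb{a}\,|\,t)$. Since $-\log$ is decreasing, $h(\mb{a}) = -\log p(\mb{a}) \leq -\log p(\mb{b}) = h(\mb{b})$ and similarly $h(\mb{a}\,|\,t) \leq h(\mb{b}\,|\,t)$. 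Equivalently, the specificity $i^+(\mb{a} \ra t) = h(\mb{a})$ and the ambiguity $i^-(\mb{a} \ra t) = h(\mb{a}\,|\,t)$ are non-decreasing with respect to source inclusion.

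With the lemma established, the monotonicity of the measures is immediate. Let $\mb{b}^\star \in \beta$ attain the minimum in \eq{red_specificity}, so that $\rminp(\beta \ra t) = h(\mb{b}^\star)$. By $\alpha \preceq \beta$ there is $\mb{a}^\star \in \alpha$ with $\mb{a}^\star \subseteq \mb{b}^\star$, and the lemma gives $h(\mb{a}^\star) \leq h(\mb{b}^\star)$. Because $\rminp(\alpha \ra t)$ is a minimum over all of $\alpha$, it is bounded above by the single term $h(\mb{a}^\star)$, and chaining the inequalities gives
\begin{equation*}
  \rminp(\alpha \ra t) \;\leq\; h(\mb{a}^\star) \;\leq\; h(\mb{b}^\star) \;=\; \rminp(\beta \ra t).
\end{equation*}
The identical argument with $h(\cdot\,|\,t)$ in place of $h(\cdot)$ establishes $\rminn(\alpha \ra t) \leq \rminn(\beta \ra t)$.

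I expect the only real care-point to be notational rather than analytic: one must keep straight the inclusion between predictor sets, the \emph{reversed} inclusion between the corresponding joint events noted above, and the order $\preceq$ between collections. A slick alternative bypasses the lemma by appealing to Theorem~\ref{thm:satisfy}: given $\alpha \preceq \beta$, form the combined collection $\gamma = \alpha \cup \beta$, to which $\rminp$ extends through the same minimum. The equality clause of Axiom~\ref{ax:mono} gives $\rminp(\gamma \ra t) = \rminp(\alpha \ra t)$, since every source event of $\beta$ is a superset of one retained in $\alpha$ and may be dropped, while the inequality clause gives $\rminp(\gamma \ra t) \leq \rminp(\beta \ra t)$, since $\gamma$ consists of $\beta$ together with additional source events; combining the two recovers the claim, and identically for $\rminn$.
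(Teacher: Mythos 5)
Your proof is correct and takes essentially the same route as the paper: both reduce the claim to the monotonicity of $i^\pm(\mb{a}\ra t)$ under source-event inclusion (the paper's Lemma~\ref{lem:monotonicity}) and then use the definition of $\preceq$ to produce a witness $\mb{a}^\star\subseteq\mb{b}^\star$ below the minimiser in $\beta$. The only differences are cosmetic --- you argue directly where the paper argues by contradiction, and you spell out the probability-mass justification of the lemma that the paper attributes to Postulate~\ref{post:chain_rule}.
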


\begin{restatable}{restatetheorem}{theoremnonnegativity}
  \label{thm:nonnegativity}
  The atoms of partial specificity $\pi^+$ and partial ambiguity $\pi^-$ evaluated using the
  measures $\rminp$ and $\rminn$ on the specificity and ambiguity lattices (respectively), are
  non-negative.
\end{restatable}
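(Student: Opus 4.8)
The plan is to establish local positivity by a layer-cake (distribution-function) decomposition that expresses each redundancy measure as a positive superposition of indicators of principal filters of the lattice, for which the M\"obius atoms are manifestly $\{0,1\}$-valued. Because the M\"obius inversion over $\langle\msc{A}(\mb{s}),\preceq\rangle$ is linear and the specificity and ambiguity lattices carry the identical order, it is enough to treat one generic measure $\rho(\mb{a}_1,\dots,\mb{a}_k\ra t)=\min_i g(\mb{a}_i)$, taking $g(\mb{a})=h(\mb{a})$ for $\rminp$ and $g(\mb{a})=h(\mb{a}\mid t)$ for $\rminn$. The only property of $g$ I will use is its monotonicity under source inclusion: if $\mb{a}\subseteq\mb{b}$ then $p(\mb{b})\le p(\mb{a})$ and $p(\mb{b}\mid t)\le p(\mb{a}\mid t)$, hence $g(\mb{a})\le g(\mb{b})$ --- the same monotonicity that underlies Theorem~\ref{thm:monotonically} --- together with $g\ge 0$.

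First I would record the layer-cake identity. Since $\rho\ge 0$ and $\min_i g(\mb{a}_i)>\theta$ exactly when $g(\mb{a})>\theta$ for every $\mb{a}\in\alpha$,
\begin{equation*}
  \rho(\alpha)=\int_0^\infty \phi_\theta(\alpha)\,d\theta,
  \qquad
  \phi_\theta(\alpha):=\mathbf{1}\big[\,g(\mb{a})>\theta\ \text{for all}\ \mb{a}\in\alpha\,\big].
\end{equation*}
By linearity of the M\"obius inversion, the partial-information atom of $\rho$ at a node $\alpha$ is $\int_0^\infty \pi_\theta(\alpha)\,d\theta$, where $\pi_\theta$ is the M\"obius inverse of the $\{0,1\}$-valued $\phi_\theta$. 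It thus suffices to prove $\pi_\theta(\alpha)\ge 0$ for each threshold $\theta$.

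The key step is to recognise $\phi_\theta$ as the indicator of a principal filter. Put $B_\theta=\{\mb{a}\in\msc{P}_1(\mb{s}):g(\mb{a})>\theta\}$. Monotonicity of $g$ makes $B_\theta$ an up-set of the source poset $(\msc{P}_1(\mb{s}),\subseteq)$; when nonempty its minimal elements form an antichain $\hat\alpha_\theta\in\msc{A}(\mb{s})$, and every member of $B_\theta$ contains some member of $\hat\alpha_\theta$. Now $\phi_\theta(\alpha)=1$ iff every source of $\alpha$ lies in $B_\theta$, i.e.\ iff every source of $\alpha$ contains some source of $\hat\alpha_\theta$ --- which is exactly the defining relation $\hat\alpha_\theta\preceq\alpha$ of the partial order \eq{partial_order}. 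Hence $\phi_\theta(\beta)=\mathbf{1}[\hat\alpha_\theta\preceq\beta]$ is the indicator of the principal filter generated by $\hat\alpha_\theta$ (and $\phi_\theta\equiv 0$ if $B_\theta=\emptyset$). Substituting this into the inversion and using the M\"obius identity $\sum_{\hat\alpha_\theta\preceq\beta\preceq\alpha}\mu(\beta,\alpha)=[\hat\alpha_\theta=\alpha]$ on the interval $[\hat\alpha_\theta,\alpha]$ (the sum being empty, hence $0$, when $\hat\alpha_\theta\not\preceq\alpha$) gives
\begin{equation*}
  \pi_\theta(\alpha)=\sum_{\beta\preceq\alpha}\mu(\beta,\alpha)\,\phi_\theta(\beta)
    =\sum_{\hat\alpha_\theta\preceq\beta\preceq\alpha}\mu(\beta,\alpha)
    =\begin{cases}1 & \alpha=\hat\alpha_\theta,\\ 0 & \text{otherwise.}\end{cases}
\end{equation*}
Thus each $\pi_\theta(\alpha)\in\{0,1\}$, and integrating over $\theta$ delivers $\pi^+(\alpha)\ge 0$ and $\pi^-(\alpha)\ge 0$.

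I expect the principal-filter identification to be the crux: one must verify that the super-level set of a monotone source function corresponds, under the lattice order $\preceq$, to the filter generated by the single antichain $\min(B_\theta)$, and that this holds uniformly across all thresholds, including the degenerate ones where $B_\theta$ is empty or exhausts $\msc{P}_1(\mb{s})$ (so that $\hat\alpha_\theta$ is the bottom node). The remaining ingredients --- the layer-cake identity, linearity of M\"obius inversion, and the collapse of the M\"obius sum over an interval to a Kronecker delta --- are standard once this correspondence is in hand.
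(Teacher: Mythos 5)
Your proof is correct, but it takes a genuinely different route from the paper's. The paper first establishes a closed-form expression for the atoms (Theorem~\ref{thm:closed_form}), namely $\pi^\pm(\alpha \ra t) = \min_{\mb{a}\in\alpha} i^\pm(\mb{a}\ra t) - \max_{\beta\in\alpha^-}\min_{\mb{b}\in\beta} i^\pm(\mb{b}\ra t)$, obtained via the inclusion--exclusion principle, the maximum--minimums identity, and the explicit description of meets in the lattice; non-negativity then follows by contradiction, using the monotonicity of $h(\mb{a})$ and $h(\mb{a}\mid t)$ under source inclusion (Lemma~\ref{lem:monotonicity}) together with the covering structure of $\preceq$. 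Your layer-cake argument bypasses the closed form entirely: the same monotonicity is invoked only to show that each super-level set $B_\theta$ is an up-set, so that each threshold slice $\phi_\theta$ is the indicator of the principal filter generated by $\hat\alpha_\theta=\min(B_\theta)$, whose M\"obius inverse collapses to a Kronecker delta. What the paper's route buys is the closed form itself, which it reuses elsewhere for computation; what yours buys is a shorter, more structural proof that avoids the maximum--minimums identity and the form of the meet, yields the stronger statement that $\pi^\pm(\alpha\ra t)$ equals the Lebesgue measure of the set of thresholds $\theta$ with $\hat\alpha_\theta=\alpha$ (from which the paper's closed form can in fact be read off), and applies verbatim to any redundancy measure of the form $\min_i g(\mb{a}_i)$ with $g$ non-negative and monotone under inclusion. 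The step you flag as the crux --- that $\phi_\theta(\alpha)=\mathbf{1}\big[\hat\alpha_\theta\preceq\alpha\big]$ --- does check out against the partial order \eq{partial_order}, and the degenerate cases ($B_\theta$ empty, or $\hat\alpha_\theta=\perp$) are handled as you describe, so I see no gap.
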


\appRef{red} contains the proof of Theorems~\ref{thm:satisfy}--\ref{thm:nonnegativity} and further
relevant consideration of Defintions~\ref{def:specificity} and \ref{def:ambiguity}.  As in
\eq{local_decomp_bivar_avg_separate}, one can take the expectation of the either the pointwise
redundant specificity $\rminp$ or the pointwise redundant ambiguity $\rminn$ to get the average
redundant specificity $\Rminp$ or the average redundant ambiguity $\Rminn$.  Alternatively, just as
in \eq{local_recomp}, one can recombine the pointwise redundant specificity $\rminp$ and the
pointwise redundant ambiguity $\rminn$ to get the pointwise redundant information $\rmin$.  Finally,
as per \eq{recomp}, one could perform both of these (linear) operations in either order to obtain
the average redundant information $\Rmin$.  Note that while Theorem~\ref{thm:nonnegativity} proves
that the atoms of partial specificity $\pi^+$ and partial ambiguity $\pi^-$ are non-negative, it is
trivial to see that $\rmin$ could be negative since when source events can redundantly provide
misinformation about a target event.  As shown in the following theorem, $\Rmin$ can also be
negative.

\begin{restatable}{restatetheorem}{theoremnonnegativityavg}
  \label{thm:nonnegativity_avg}
  The atoms of partial average information $\Pi$ evaluated by recombining and averaging $\pi^\pm$
  are not non-negative. 
\end{restatable}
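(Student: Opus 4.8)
The plan is to prove the statement by counterexample, since ``not non-negative'' requires only that a single average atom $\Pi$ be strictly negative for some distribution on $\mc{S}_1 \by \mc{S}_2 \by \mc{T}$. The structural observation driving the construction is that, by linearity of expectation together with the recombination \eq{local_recomp} and \eq{recomp}, every average atom splits as $\Pi = \Pi^+ - \Pi^-$, where $\Pi^+ = \langle \pi^+ \rangle$ and $\Pi^- = \langle \pi^- \rangle$ are the averaged specificity and ambiguity atoms. Theorem~\ref{thm:nonnegativity} guarantees that $\pi^+$ and $\pi^-$ are each pointwise non-negative, so both $\Pi^+$ and $\Pi^-$ are non-negative; their \emph{difference}, however, carries no such guarantee. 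It therefore suffices to produce one distribution and one atom for which the averaged ambiguity contribution strictly exceeds the averaged specificity contribution.

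I would target the bottom atom of the two lattices, namely the redundant atom $\Rmin$, because it requires no M\"obius bookkeeping: being the minimal node, its partial-information value coincides with the cumulative redundancy evaluated there, so that Definitions~\ref{def:specificity} and \ref{def:ambiguity} give directly
\begin{align*}
  \Rminp &= \big\langle \min\{ h(s_1), h(s_2) \} \big\rangle, \\
  \Rminn &= \big\langle \min\{ h(s_1|t), h(s_2|t) \} \big\rangle,
\end{align*}
and hence $\Rmin = \Rminp - \Rminn$. Since $h(s_i) = -\log p(s_i)$ and $h(s_i|t) = -\log p(s_i|t)$, the pointwise gap $\rminn - \rminp$ is positive exactly when $\max_i p(s_i|t) < \max_i p(s_i)$, i.e.\ in those realisations where the source events jointly \emph{misinform} about $t$, and is negative in the informative realisations. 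All that remains is to arrange a distribution in which the misinformative realisations dominate the expectation.

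Concretely, I would use the example \imprdn{} of \secRef{imprdn}, which is constructed precisely so that the two predictors redundantly carry a correlated error about the target. In each error realisation both conditional probabilities $p(s_i|t)$ drop below their marginals $p(s_i)$, pushing $\min_i h(s_i|t)$ above $\min_i h(s_i)$; the remaining (correct) realisations pull the other way but, by the design of \imprdn{}, do not compensate. Averaging then yields $\Rminn > \Rminp$ and therefore $\Rmin < 0$, exhibiting a negative average atom and establishing the claim. Because only one negative atom is needed, there is no need to inspect the unique or complementary atoms, though the same $\Pi = \Pi^+ - \Pi^-$ mechanism applies to them via \eq{recomp}.

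The main obstacle is not conceptual but arithmetic: one must verify, realisation by realisation on the support of \imprdn{} (tabulating $p(s_1),p(s_2),p(s_1|t),p(s_2|t)$ as in \tb{locunq}), that $\rminp = \min_i h(s_i)$ and $\rminn = \min_j h(s_j|t)$ take the asserted values, and then confirm numerically that $\langle \rminn \rangle - \langle \rminp \rangle > 0$ rather than merely having a few negative pointwise terms that cancel under the expectation. The delicate point is that the minimising source event for specificity and for ambiguity may differ between realisations, so the two minima must be read off independently before summing; once this is done, the sign of $\Rmin$ follows immediately.
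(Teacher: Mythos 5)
Your overall strategy --- exhibit a single strictly negative average atom by counterexample --- is exactly the paper's, and you even pick the same distribution, \imprdn{}; the paper's proof is literally one line citing that example. However, the specific atom you target does not work. In \imprdn{} the error-free predictor $S_1$ satisfies $i^-(s_1 \ra t)=0$ in \emph{every} realisation (its exclusions are purely informative), so $\rminn = \min_j h(\mb{a}_j|t) = 0$ throughout, giving $\Rminn = 0$ and hence $\Rmin = \Rminp - \Rminn = 1 - 0 = 1\bit > 0$. The mechanism you describe --- arranging for $\langle \rminn\rangle > \langle \rminp\rangle$ --- simply does not occur here, and your own ``delicate point'' (that the minimising source event for specificity and for ambiguity may differ) is precisely what defeats the plan: the ambiguity minimum is always attained at $S_1$, never at the error-prone $S_2$.

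The atom that actually goes negative in \imprdn{} is the unique information of $S_2$. One step up the lattice from the bottom node, $\uitp = i^+(s_2\ra t) - \rip = 1-1 = 0$ in every realisation, while $\uitn = i^-(s_2\ra t) - \rin = i^-(s_2 \ra t) > 0$ in every realisation (equal to $\log\nf{4}{3}$ with probability $\nf{3}{4}$ and $2$ with probability $\nf{1}{4}$). Averaging gives $\UItp = 0$ and $\UItn \approx 0.811\bit$, so $\UIt = \UItp - \UItn \approx -0.811\bit < 0$, which establishes the theorem. So the repair is small --- same distribution, same $\Pi = \Pi^+ - \Pi^-$ splitting argument, but applied to the unique atom rather than the redundant one --- yet as written your proof asserts a false numerical fact about \imprdn{} and would not go through.
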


This means that the measure $\Rmin$ does not satisfy local positivity.  Nonetheless the negativity
of $\Rmin$ is readily explainable in terms of the operational interpretation of
\secRef{op_interp_red}, as will be discussed further in \secRef{imprdn}.  However, failing to
satisfy local positivity does mean that $\rmin$ and $\Rmin$ do not satisfy the \emph{target
  monotonicity} property first discussed in \citet{bertschinger2013}.  Despite this, as the
following theorem shows, the measures do satisfy the target chain rule.

\begin{restatable}[Pointwise Target Chain Rule]{restatetheorem}{theoremptcr}
  \label{thm:tar_cr}
  Given the joint target realisation $t_{1,2}$, the pointwise redundant information $\rmin$
  satisfies the following chain rule,
  \begin{align}
    \label{eq:target_chain_rule_1}
    \rmin \big( \mb{a}_1,\ldots,\mb{a}_k \ra t_{1,2} \big)
      &= \rmin \big( \mb{a}_1,\ldots,\mb{a}_k \ra t_1 \big)
        + \rmin \big( \mb{a}_1,\ldots,\mb{a}_k \ra t_2 | t_1 \big),  \\
      \label{eq:target_chain_rule_2}
      &= \rmin \big( \mb{a}_1,\ldots,\mb{a}_k \ra t_2 \big)
        + \rmin \big( \mb{a}_1,\ldots,\mb{a}_k \ra t_1 | t_2 \big).
  \end{align}
\end{restatable}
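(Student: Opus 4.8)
The plan is to reduce the redundant information to its specificity and ambiguity parts and then exploit a telescoping cancellation. Recall that $\rmin = \rminp - \rminn$ with $\rminp(\mb{a}_1,\ldots,\mb{a}_k \ra t) = \min_i h(\mb{a}_i)$ and $\rminn(\mb{a}_1,\ldots,\mb{a}_k \ra t) = \min_i h(\mb{a}_i | t)$ from Definitions~\ref{def:specificity} and \ref{def:ambiguity}. First I would pin down the target-conditional forms of the two redundancy measures. Extending the component forms \eq{specificity}--\eq{ambiguity} to conditioning on a target event---equivalently, splitting the pointwise mutual information chain rule $i(\mb{a}_i; t_{1,2}) = i(\mb{a}_i; t_1) + i(\mb{a}_i; t_2 | t_1)$ into its specificity and ambiguity components exactly as in Postulate~\ref{post:chain_rule}---gives the conditional specificity $i^+(\mb{a}_i \ra t_2 | t_1) = h(\mb{a}_i | t_1)$ and the conditional ambiguity $i^-(\mb{a}_i \ra t_2 | t_1) = h(\mb{a}_i | t_{1,2})$. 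Note that, just as $i^+(\mb{a}_i \ra t)$ is independent of $t$, the conditional specificity is independent of $t_2$. Minimising over the source events then yields
\begin{align*}
  \rminp\big(\mb{a}_1,\ldots,\mb{a}_k \ra t_2 | t_1\big) &= \min_i h(\mb{a}_i | t_1), &
  \rminn\big(\mb{a}_1,\ldots,\mb{a}_k \ra t_2 | t_1\big) &= \min_i h(\mb{a}_i | t_{1,2}).
\end{align*}

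Next I would substitute these expressions, together with $\rminp(\mb{a}_1,\ldots,\mb{a}_k \ra t_1) = \min_i h(\mb{a}_i)$ and $\rminn(\mb{a}_1,\ldots,\mb{a}_k \ra t_1) = \min_i h(\mb{a}_i | t_1)$, into the right-hand side of \eq{target_chain_rule_1}:
\begin{align*}
  \rmin\big(\mb{a}_1,\ldots,\mb{a}_k \ra t_1\big) &+ \rmin\big(\mb{a}_1,\ldots,\mb{a}_k \ra t_2 | t_1\big) \\
  &= \min_i h(\mb{a}_i) - \min_i h(\mb{a}_i | t_1) + \min_i h(\mb{a}_i | t_1) - \min_i h(\mb{a}_i | t_{1,2}).
\end{align*}
The crux is the observation that the redundant ambiguity at $t_1$ and the conditional redundant specificity at $t_2 | t_1$ are literally the same minimum $\min_i h(\mb{a}_i | t_1)$ (the same function minimised over the same collection of source events), so the two middle terms cancel identically. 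What remains is $\min_i h(\mb{a}_i) - \min_i h(\mb{a}_i | t_{1,2}) = \rmin(\mb{a}_1,\ldots,\mb{a}_k \ra t_{1,2})$, which is exactly the left-hand side, establishing \eq{target_chain_rule_1}. The second identity \eq{target_chain_rule_2} then follows by the symmetric argument, conditioning on $t_2$ first via $i(\mb{a}_i; t_{1,2}) = i(\mb{a}_i; t_2) + i(\mb{a}_i; t_1 | t_2)$.

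The only real work, and the main obstacle, lies in establishing the target-conditional forms correctly: verifying that the conditional specificity collapses to the pure conditional entropy $h(\mb{a}_i | t_1)$ (independent of $t_2$) and the conditional ambiguity to $h(\mb{a}_i | t_{1,2})$, consistently with the probability-mass-exclusion derivation of \secRef{spec_amb}. Once these are in hand the result is a transparent telescoping cancellation. It is worth emphasising that the cancellation relies on combining the two lattices: neither $\rminp$ nor $\rminn$ satisfies a target chain rule on its own (for each, the analogous extra term $\min_i h(\mb{a}_i | t_1)$ would have to vanish), and it is precisely the difference $\rminp - \rminn$ that makes the conditional redundant specificity cancel against the redundant ambiguity.
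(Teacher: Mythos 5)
Your proof is correct and follows essentially the same route as the paper's: the conditional forms $\rminp(\cdot \ra t_2|t_1)=\min_i h(\mb{a}_i|t_1)$ and $\rminn(\cdot \ra t_2|t_1)=\min_i h(\mb{a}_i|t_{1,2})$ are exactly the paper's equations \eq{cond_red_specificity}--\eq{cond_red_ambiguity}, and your telescoping cancellation is precisely the content of its Propositions~\ref{cor:cond_specificity_form}--\ref{cor:cond_ambiguity_form} (target-independence of specificity, conditional ambiguity equalling joint ambiguity, and conditional specificity equalling the ambiguity of the conditioning event). The only cosmetic difference is that you substitute the explicit entropy minima and cancel directly, whereas the paper packages the same identities as named propositions and performs an add-and-subtract of $\rminn(\cdot \ra t_1)$.
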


The proof of the last theorem is deferred to \appRef{tcr}. Note that since the expectation is a
linear operation, Theorem~\ref{thm:tar_cr} also holds for the average redundant information $\Rmin$.
Furthermore, as these results apply to any of the source events, the target chain rule will hold for
any of the PPI atoms, e.g.\ \eq{local_recomp}, and any of the PI atoms, e.g.\ \eq{recomp}.  However,
no such rule holds for the pointwise redundant specificity or ambiguity.  The specificity depends
only on the predictor event, i.e.\ does not depend on the target events.  As such, when an
increasing number of target events are considered, the specificity remains unchanged.  Hence, a
target chain rule cannot hold for the specificity, or the ambiguity alone.


\section{Discussion}
\label{sec:discussion}

PPID using the specificity and ambiguity takes the ideas underpinning PID and applies them on a
pointwise scale while circumventing the monotonicity issue associated with the signed pointwise
mutual information.  This section will explore the various properties of the decomposition in an
example driven manner and compare the results to the most widely-used measures from the existing PID
literature. (Further examples can be found in \appRef{additional_examples}.)  The following
shorthand notation will be utilised in the figures throughout this section:
\begin{align*}
  i^+_1 &= i^+(s_1 \ra t), & i^+_2 &= i^+(s_2 \ra t), & i^+_{1,2} &= i^+(s_{1,2} \ra t),  \\
  i^-_1 &= i^-(s_1 \ra t), & i^-_2 &= i^-(s_2 \ra t), & i^-_{1,2} &= i^-(s_{1,2} \ra t),
\end{align*}
\vspace{-2.5em}
\begin{align*}
  u^+_1 &= \uiop, & u^+_2 &= \uitp, & r^+ &= \rip, & c^+ &= \cip,  \\
  u^-_1 &= \uion, & u^-_2 &= \uitn, & r^- &= \rin, & c^- &= \cin.  
\end{align*}

\ifarXiv
  \pagebreak
\else
\fi

\subsection{Comparison to Existing Measures}

A similar approach to the decomposition presented in this paper is due to \citet{ince2017}, who also
sought to define a pointwise information decomposition.  Despite the similarity in this regard, the
redundancy measure $\iccs$ presented in \citep{ince2017} approaches the pointwise monotonicity
problem of \secRef{ppid} in a different way to the decomposition presented in this paper.
Specifically, $\iccs$ aims to utilise the pointwise co-information as a measure of pointwise
redundant information since it ``quantifies the set-theoretic overlap of the two univariate
[pointwise] information values''~\citep[p.~14]{ince2017}.  There are, however, difficulties with
this approach.  Firstly (unlike the average mutual information and the Shannon inequalities), there
are no inequalities which support this interpretation of pointwise co-information as the
set-theoretic overlap of the univariate pointwise information terms---indeed, both the univariate
pointwise information and the pointwise co-information are signed measures.  Secondly, the pointwise
co-information conflates the pointwise redundant information with the pointwise complementary
information, since by \eq{local_decomp_bivar} we have that
\begin{equation}
  \label{eq:coi}
  co\text{-}i(s_1;s_2;t) \vcentcolon= i(s_1;t) + i(s_2;t) - i(s_{1,2},t) = \ri - \ci.
\end{equation}
Aware of these difficulties, Ince defines $\iccs$ such that it only interprets the pointwise
co-information as a measure of set-theoretic overlap in the case where all three pointwise
information terms have the same sign, arguing that these are the only situations which admit a clear
interpretation in terms of a common change in surprisal.  In the other difficult to interpret
situations, $\iccs$ defines the pointwise redundant information to be zero.  This approach
effectively assumes that $\ci=0$ in \eq{coi} when $i(s_1;t)$, $i(s_2;t)$ and
$co\text{-}i(s_1;s_2;t)$ all have the same sign.

In a subsequent paper, \citet{ince2017ped} also presented a partial entropy decomposition which aims
to decompose multivariate entropy rather than multivariate information.  As such, this decomposition
is more similar to PPID using specificity and ambiguity than Ince's aforementioned decomposition.
Although similar in this regard, the measure of pointwise redundant entropy $H_\text{cs}$ presented
in \citep{ince2017ped} takes a different approach to the one presented in this paper.  Specifically,
$H_\text{cs}$ also uses the pointwise co-information as a measure of set-theoretic overlap and hence
as a measure of pointwise redundant entropy.  As the pointwise entropy is unsigned, the difficulties
are reduced but remain present due to the signed pointwise co-information.  In a manner similar to
$\iccs$, Ince defines $H_\text{cs}$ such that it only interprets the pointwise co-information as a
measure of set-theoretic overlap when it is positive.  As per $\iccs$, this effectively assumes that
$\ci=0$ in \eq{coi} when all information terms have the same sign.  When the pointwise
co-information is negative, $H_\text{cs}$ simply ignores the co-information by defining the
pointwise redundant information to be zero.  In contrast to both of Ince's approaches, PPID using
specificity and ambiguity does not dispose of the set-theoretic intuition in these difficult to
interpret situations.  Rather, our approach considers the notion of redundancy in terms of
overlapping exclusions---i.e.\ in terms of the underlying, unsigned measures which are amenable to a
set-theoretic interpretation.

The measures of pointwise redundant specificity $\rminp$ and pointwise redundant ambiguity $\rminn$,
from Definitions~\ref{def:specificity} and \ref{def:ambiguity} are also similar to both the minimum
mutual information $I_\text{\textsc{mmi}}$~\citep{barrett2015} and the original PID redundancy
measure $\imin$ \citep{williams2010}.  Specifically, all three of these approaches consider the
redundant information to be the minimum information provided about a target event $t$.  The
difference is that $\imin$ applies this idea to the sources $\mb{A}_1,\ldots,\mb{A}_k$, i.e.\ to
collections of entire predictor variables from $\mb{S}$, while $\rminpn$ apply this notion to the
source events $\mb{a}_1,\ldots,\mb{a}_k$, i.e.\ to collections of predictor events from $\mb{s}$.
In other words, while the measure $\imin$ can be regarded as being semi-pointwise (since it
considers the information provided by the variables $S_1,\ldots,S_n$ about an event $t$), the
measures $\rminpn$ are fully pointwise (since they consider the information provided by the events
$s_1,\ldots,s_n$ about an event $t$).  This difference in approach is most apparent in the
probability distribution \locunq{}---unlike PID, PPID using the specificity and ambiguity respects
the pointwise nature of information, as we will see in \secRef{ex_locunq}.

PPID using specificity and ambiguity also share certain similarities with the bivariate PID induced
by the measure $\uitilde$ of \citet{bertschinger2014}.  Firstly, Axiom~\ref{ax:local_comp} can be
considered to be a pointwise adaptation of their Assumption~($*$), i.e.\ the measures $\rminpn$
depend only on the marginal distributions $P(S_1,T)$ and $P(S_2,T)$ with respect to the two-event
partitions $\mc{S}_1^{s_1} \by \mc{T}^t$ and $\mc{S}_2^{s_2} \by \mc{T}^t$.  Secondly, in PPID using
specificity and ambiguity, the only way one can only decide if there is complementary information
$\ci$ is by knowing the joint distribution $P(S_1, S_2, T)$ with respect to the joint two-event
partitions $\mc{S}_1^{s_1} \by \mc{S}_2^{s_2} \by \mc{T}^t$.  This is (in effect) a pointwise form
of their Assumption~($**$).  Thirdly, by definition $\rminpn$ are given by the minimum value that
any one source event provides.  This is the largest possible value that one could take for these
quantities whilst still requiring that the unique specificity and ambiguity be non-negative.  Hence,
within each discrete realisation, $\rminpn$ minimise the unique specificity and ambiguity whilst
maximising the redundant specificity and ambiguity.  This is similar to $\uitilde$ which minimises
the (average) unique information while still satisfying Assumption~($*$).  Finally, note that since
the measure $\ivk$ produces a bivariate decomposition which is equivalent to that of $\uitilde$
\citep{bertschinger2014}, the same similarities apply between PPID using specificity and ambiguity
and the decomposition induced by $\ivk$ from \citet{griffith2012}.

\subsection{Probability Distribution X\textsc{or}}

\begin{figure}[t]
 \centering 
  \begin{tikzpicture}

  \def\originy{0} 
  \def\height{3}
  \def\width{1.5}
  \def\overdrawn{0.15}

  
  \def\originx{0}
  
  \draw[black, line width=1pt] (\originx,\originy)
    -- (\originx,\originy+\height);
  \draw[black, line width=1pt] (\originx-\overdrawn,\originy+\height)
    -- (\originx+\width+\overdrawn,\originy+\height);
  \draw[black, line width=1pt] (\originx+\width,\originy+\height)
    -- (\originx+\width,\originy);
  \draw[black, line width=1pt] (\originx+\width+\overdrawn,\originy)
    -- (\originx-\overdrawn,\originy);

  \draw[black, line width=1pt] (\originx-\overdrawn,\originy+3/4*\height)
    -- (\originx+\width+\overdrawn,\originy+3/4*\height);
  \draw[black, line width=0.75pt] (\originx,\originy+\height/2)
    -- (\originx+\width+\overdrawn,\originy+\height/2);
  \draw[black, line width=1pt] (\originx-\overdrawn,\originy+1/4*\height)
  -- (\originx+\width+\overdrawn,\originy+1/4*\height);
  

  \node at (\originx,7/8*\height)[anchor=east] {$0$};
  \node at (\originx,1/2*\height)[anchor=east] {$1$};
  \node at (\originx,1/8*\height)[anchor=east] {$0$}; 
  \node at (\originx+\width,7/8*\height)[anchor=west] {$00$};
  \node at (\originx+\width,5/8*\height)[anchor=west] {$01$};
  \node at (\originx+\width,3/8*\height)[anchor=west] {$10$};
  \node at (\originx+\width,1/8*\height)[anchor=west] {$11$};
  \node at (\originx+\width/2,\originy+\height)[anchor=south] {$P(S_{1,2},T)$};
  \node at (\originx+\width/2,\originy+7/8*\height)[anchor=center] {$\nf{1}{4}$};
  \node at (\originx+\width/2,\originy+5/8*\height)[anchor=center] {$\nf{1}{4}$};
  \node at (\originx+\width/2,\originy+3/8*\height)[anchor=center] {$\nf{1}{4}$};
  \node at (\originx+\width/2,\originy+1/8*\height)[anchor=center] {$\nf{1}{4}$};
  
  
  \def\originxtwo{3}
  
  \draw[black, line width=1pt] (\originxtwo,\originy)
    -- (\originxtwo,\originy+\height);
  \draw[black, line width=1pt] (\originxtwo-\overdrawn,\originy+\height)
    -- (\originxtwo+\width+\overdrawn,\originy+\height);
  \draw[black, line width=1pt] (\originxtwo+\width,\originy+\height)
    -- (\originxtwo+\width,\originy);
  \draw[black, line width=1pt] (\originxtwo+\width+\overdrawn,\originy)
    -- (\originxtwo-\overdrawn,\originy);

  \draw[black, line width=1pt] (\originxtwo-\overdrawn,\originy+3/4*\height)
    -- (\originxtwo+\width+\overdrawn,\originy+3/4*\height);
  \draw[black, line width=0.75pt] (\originxtwo,\originy+\height/2)
    -- (\originxtwo+\width+\overdrawn,\originy+\height/2);
  \draw[black, line width=1pt] (\originxtwo-\overdrawn,\originy+1/4*\height)
  -- (\originxtwo+\width+\overdrawn,\originy+1/4*\height);

  \draw[pattern=vertical lines, pattern color=red] (\originxtwo,\originy+1/4*\height) rectangle
    (\originxtwo+\width,\originy+1/2*\height);
  \draw[pattern=north east lines, pattern color=red] (\originxtwo,\originy) rectangle
    (\originxtwo+\width,\originy+1/4*\height);
  
  \node at (\originxtwo,7/8*\height)[anchor=east] {$0$};
  \node at (\originxtwo,1/2*\height)[anchor=east] {$1$};
  \node at (\originxtwo,1/8*\height)[anchor=east] {$0$}; 
  \node at (\originxtwo+\width,7/8*\height)[anchor=west] {$00$};
  \node at (\originxtwo+\width,5/8*\height)[anchor=west] {$01$};
  \node at (\originxtwo+\width,3/8*\height)[anchor=west] {$10$};
  \node at (\originxtwo+\width,1/8*\height)[anchor=west] {$11$};
  \node at (\originxtwo+\width/2,\originy+\height)[anchor=south] {$S_1=0$};

  
  \def\originxthree{6}
  
  \draw[black, line width=1pt] (\originxthree,\originy)
    -- (\originxthree,\originy+\height);
  \draw[black, line width=1pt] (\originxthree-\overdrawn,\originy+\height)
    -- (\originxthree+\width+\overdrawn,\originy+\height);
  \draw[black, line width=1pt] (\originxthree+\width,\originy+\height)
    -- (\originxthree+\width,\originy);
  \draw[black, line width=1pt] (\originxthree+\width+\overdrawn,\originy)
    -- (\originxthree-\overdrawn,\originy);

  \draw[black, line width=1pt] (\originxthree-\overdrawn,\originy+3/4*\height)
    -- (\originxthree+\width+\overdrawn,\originy+3/4*\height);
  \draw[black, line width=0.75pt] (\originxthree,\originy+\height/2)
    -- (\originxthree+\width+\overdrawn,\originy+\height/2);
  \draw[black, line width=1pt] (\originxthree-\overdrawn,\originy+1/4*\height)
  -- (\originxthree+\width+\overdrawn,\originy+1/4*\height);
  
  \draw[pattern=horizontal lines, pattern color=blue] (\originxthree,\originy+\height/2) rectangle
    (\originxthree+\width,\originy+3/4*\height);
  \draw[pattern=north west lines, pattern color=blue] (\originxthree,\originy) rectangle
    (\originxthree+\width,\originy+1/4*\height);
    
  \node at (\originxthree,7/8*\height)[anchor=east] {$0$};
  \node at (\originxthree,1/2*\height)[anchor=east] {$1$};
  \node at (\originxthree,1/8*\height)[anchor=east] {$0$}; 
  \node at (\originxthree+\width,7/8*\height)[anchor=west] {$00$};
  \node at (\originxthree+\width,5/8*\height)[anchor=west] {$01$};
  \node at (\originxthree+\width,3/8*\height)[anchor=west] {$10$};
  \node at (\originxthree+\width,1/8*\height)[anchor=west] {$11$};
  \node at (\originxthree+\width/2,\originy+\height)[anchor=south] {$S_2=0$};

  
  \def\originxfour{9}
  
  \draw[black, line width=1pt] (\originxfour,\originy)
    -- (\originxfour,\originy+\height);
  \draw[black, line width=1pt] (\originxfour-\overdrawn,\originy+\height)
    -- (\originxfour+\width+\overdrawn,\originy+\height);
  \draw[black, line width=1pt] (\originxfour+\width,\originy+\height)
    -- (\originxfour+\width,\originy);
  \draw[black, line width=1pt] (\originxfour+\width+\overdrawn,\originy)
    -- (\originxfour-\overdrawn,\originy);

  \draw[black, line width=1pt] (\originxfour-\overdrawn,\originy+3/4*\height)
    -- (\originxfour+\width+\overdrawn,\originy+3/4*\height);
  \draw[black, line width=0.75pt] (\originxfour,\originy+\height/2)
    -- (\originxfour+\width+\overdrawn,\originy+\height/2);
  \draw[black, line width=1pt] (\originxfour-\overdrawn,\originy+1/4*\height)
  -- (\originxfour+\width+\overdrawn,\originy+1/4*\height);
  
  \draw[pattern=vertical lines, pattern color=red] (\originxfour,\originy+1/4*\height) rectangle
    (\originxfour+\width,\originy+1/2*\height);
  \draw[pattern=north east lines, pattern color=red] (\originxfour,\originy) rectangle
    (\originxfour+\width,\originy+1/4*\height);
  \draw[pattern=horizontal lines, pattern color=blue] (\originxfour,\originy+\height/2) rectangle
    (\originxfour+\width,\originy+3/4*\height);
  \draw[pattern=north west lines, pattern color=blue] (\originxfour,\originy) rectangle
    (\originxfour+\width,\originy+1/4*\height);    

  \node at (\originxfour,7/8*\height)[anchor=east] {$0$};
  \node at (\originxfour,1/2*\height)[anchor=east] {$1$};
  \node at (\originxfour,1/8*\height)[anchor=east] {$0$}; 
  \node at (\originxfour+\width,7/8*\height)[anchor=west] {$00$};
  \node at (\originxfour+\width,5/8*\height)[anchor=west] {$01$};
  \node at (\originxfour+\width,3/8*\height)[anchor=west] {$10$};
  \node at (\originxfour+\width,1/8*\height)[anchor=west] {$11$};
  \node at (\originxfour+\width/2,\originy+\height)[anchor=south] {$S_{1,2}=00$};

\end{tikzpicture} \vskip 1.5em \ifarXiv
  \else
  \tablesize{\footnotesize}
  \fi
  \setlength{\tabcolsep}{5pt}
  \begin{tabular}{c || c c | c || c c c c | c c || c c c c | c c c c }
    \hline
    $p$ & $s_1$ & $s_2$ & $t$ & $i_1^+$ & $i_1^-$ & $i_2^+$ & $i_2^-$ & $i_{12}^+$ & $i_{12}^-$
      & $r^+$ & $u_1^+$ & $u_2^+$ & $c^+$ & $r^-$ & $u_1^-$ & $u_2^-$ & $c^-$\\
    \hline\hline
    \nf{1}{4} & 0 & 0 & 0 & 1 & 1 & 1 & 1 & 2 & 1 & 1 & 0 & 0 & 1 & 1 & 0 & 0 & 0 \\
    \nf{1}{4} & 0 & 1 & 1 & 1 & 1 & 1 & 1 & 2 & 1 & 1 & 0 & 0 & 1 & 1 & 0 & 0 & 0 \\
    \nf{1}{4} & 1 & 0 & 1 & 1 & 1 & 1 & 1 & 2 & 1 & 1 & 0 & 0 & 1 & 1 & 0 & 0 & 0 \\
    \nf{1}{4} & 1 & 1 & 0 & 1 & 1 & 1 & 1 & 2 & 1 & 1 & 0 & 0 & 1 & 1 & 0 & 0 & 0 \\
    \hline\hline
    \multicolumn{4}{c||}{\scriptsize Expected values}
      & 1 & 1 & 1 & 1 & 2 & 1 & 1 & 0 & 0 & 1 & 1 & 0 & 0 & 0 \\
    \hline
  \end{tabular}
  \vskip 5pt{\small $\RI=0\bit \qquad \UIo=0\bit \qquad \UIt=0\bit \qquad \CI=1\bit$}
  \caption{\label{fig:xor} Example \xor{}. \emph{Top}:\ probability mass diagrams for the
    realisation \mbox{$(S_1=0, S_2 = 0, T=0)$}. \emph{Middle}:\ For each realisation, the pointwise
    specificity and pointwise ambiguity has been evaluated using \eq{specificity} and \eq{ambiguity}
    respectively.  The pointwise redundant specificity and pointwise redundant ambiguity are then
    determined using \eq{red_specificity} and \eq{red_ambiguity}.  The decomposition is calculated
    using \eq{local_decomp_bivar_separate_spec} and~\eq{local_decomp_bivar_separate_amb}. The
    expected specificity and ambiguity are calculated with \eq{local_decomp_bivar_avg_separate}.
    \emph{Bottom}:\ The average information is given by \eq{recomp}.  As expected, \xor{} yields
    $1\bit$ of complementary information.}
 
\end{figure}

\fig{xor} shows the canonical example of synergy, \emph{exclusive-or} (\xor{}) which considers two
independently distributed binary predictor variables $S_1$ and $S_2$ and a target variable
${T=S_1 \mathxor S_2}$.  There are several important points to note about the decomposition of
\xor{}.  Firstly, despite providing zero pointwise information, an individual predictor event does
indeed induce exclusions.  However, the informative and misinformative exclusions are perfectly
balanced such that the posterior (conditional) distribution is equal to the prior distribution,
e.g.\ see the red coloured exclusions induced by $S_1=0$ in \fig{xor}. In information-theoretic
terms, for each realisation, the pointwise specificity equals 1~bit since half of the total
probability mass remains while the pointwise ambiguity also equals 1~bit since half of the
probability mass associated with the event which subsequently occurs (i.e.\ $T=0$), remains.  These
are perfectly balanced such that when recombined, as per \eq{decomp}, the pointwise mutual
information is equal to 0~bit, as expected.

Secondly, $S_1=0$ and $S_2=0$ both induce the same exclusions with respect to the target pointwise
event space $\mc{T}^{T=0}$. Hence, as per the operational interpretation of redundancy adopted in
\secRef{op_interp_red}, there is 1~bit of pointwise redundant specificity and 1~bit of pointwise
redundant ambiguity in each realisation.  The presence of (a form of) redundancy in \xor{} is novel
amongst the existing measures in the PID literature.  (\citet{ince2017ped} also identifies a form of
redundancy in \xor{}.) Thirdly, despite the presence of this redundancy, recombining the atoms of
pointwise specificity and ambiguity for each realisation, as per \eq{local_recomp}, leaves only one
non-zero PPI atom:\ namely the pointwise complementary information ${\ci=1 \bit}$.  Furthermore,
this is true for every pointwise realisation and hence, by \eq{recomp}, the only non-zero PI atom is
the average complementary information ${\CI=1 \bit}$.

\subsection{Probability Distribution \locunq{}}
\label{sec:ex_locunq}

\fig{locunq} shows the probability distribution \locunq{} introduced in \secRef{locunq}.
Recombining the decomposition via \eq{local_recomp} yields the pointwise information decomposition
proposed in \tb{locunq}---unsurprisingly, the explicitly pointwise approach results in a
decomposition which does not suffer from the pointwise unique problem of \secRef{locunq}.

\begin{figure}[t]
 \centering 
  \begin{tikzpicture}

  \def\originy{0} 
  \def\height{3}
  \def\width{1.5}
  \def\overdrawn{0.15}

  
  \def\originx{0}
  
  \draw[black, line width=1pt] (\originx,\originy)
    -- (\originx,\originy+\height);
  \draw[black, line width=1pt] (\originx-\overdrawn,\originy+\height)
    -- (\originx+\width+\overdrawn,\originy+\height);
  \draw[black, line width=1pt] (\originx+\width,\originy+\height)
    -- (\originx+\width,\originy);
  \draw[black, line width=1pt] (\originx+\width+\overdrawn,\originy)
    -- (\originx-\overdrawn,\originy);

  \draw[black, line width=0.75pt] (\originx,\originy+3/4*\height)
    -- (\originx+\width+\overdrawn,\originy+3/4*\height);
  \draw[black, line width=1pt] (\originx-\overdrawn,\originy+\height/2)
    -- (\originx+\width+\overdrawn,\originy+\height/2);
  \draw[black, line width=0.75pt] (\originx,\originy+1/4*\height)
  -- (\originx+\width+\overdrawn,\originy+1/4*\height);
  
  \node at (\originx,3/4*\height)[anchor=east] {$1$};
  \node at (\originx,1/4*\height)[anchor=east] {$2$}; 
  \node at (\originx+\width,7/8*\height)[anchor=west] {$01$};
  \node at (\originx+\width,5/8*\height)[anchor=west] {$10$};
  \node at (\originx+\width,3/8*\height)[anchor=west] {$02$};
  \node at (\originx+\width,1/8*\height)[anchor=west] {$20$};
  \node at (\originx+\width/2,\originy+\height)[anchor=south] {$P(S_{1,2},T)$};
  \node at (\originx+\width/2,\originy+7/8*\height)[anchor=center] {$\nf{1}{4}$};
  \node at (\originx+\width/2,\originy+5/8*\height)[anchor=center] {$\nf{1}{4}$};
  \node at (\originx+\width/2,\originy+3/8*\height)[anchor=center] {$\nf{1}{4}$};
  \node at (\originx+\width/2,\originy+1/8*\height)[anchor=center] {$\nf{1}{4}$};
  
  
  \def\originxtwo{3}
  
  \draw[black, line width=1pt] (\originxtwo,\originy)
    -- (\originxtwo,\originy+\height);
  \draw[black, line width=1pt] (\originxtwo-\overdrawn,\originy+\height)
    -- (\originxtwo+\width+\overdrawn,\originy+\height);
  \draw[black, line width=1pt] (\originxtwo+\width,\originy+\height)
    -- (\originxtwo+\width,\originy);
  \draw[black, line width=1pt] (\originxtwo+\width+\overdrawn,\originy)
    -- (\originxtwo-\overdrawn,\originy);

  \draw[black, line width=0.75pt] (\originxtwo,\originy+3/4*\height)
    -- (\originxtwo+\width+\overdrawn,\originy+3/4*\height);
  \draw[black, line width=1pt] (\originxtwo-\overdrawn,\originy+\height/2)
    -- (\originxtwo+\width+\overdrawn,\originy+\height/2);
  \draw[black, line width=0.75pt] (\originxtwo,\originy+1/4*\height)
  -- (\originxtwo+\width+\overdrawn,\originy+1/4*\height);
  
  \draw[pattern=north east lines, pattern color=red] (\originxtwo,\originy+\height/2) rectangle
    (\originxtwo+\width,\originy+3/4*\height);
  \draw[pattern=vertical lines, pattern color=red] (\originxtwo,\originy) rectangle
    (\originxtwo+\width,\originy+1/4*\height);

  \node at (\originxtwo,3/4*\height)[anchor=east] {$1$};
  \node at (\originxtwo,1/4*\height)[anchor=east] {$2$}; 
  \node at (\originxtwo+\width,7/8*\height)[anchor=west] {$01$};
  \node at (\originxtwo+\width,5/8*\height)[anchor=west] {$10$};
  \node at (\originxtwo+\width,3/8*\height)[anchor=west] {$02$};
  \node at (\originxtwo+\width,1/8*\height)[anchor=west] {$20$};
  \node at (\originxtwo+\width/2,\originy+\height)[anchor=south] {$S_1=0$};

  
  \def\originxthree{6}
  
  \draw[black, line width=1pt] (\originxthree,\originy)
    -- (\originxthree,\originy+\height);
  \draw[black, line width=1pt] (\originxthree-\overdrawn,\originy+\height)
    -- (\originxthree+\width+\overdrawn,\originy+\height);
  \draw[black, line width=1pt] (\originxthree+\width,\originy+\height)
    -- (\originxthree+\width,\originy);
  \draw[black, line width=1pt] (\originxthree+\width+\overdrawn,\originy)
    -- (\originxthree-\overdrawn,\originy);

  \draw[black, line width=0.75pt] (\originxthree,\originy+3/4*\height)
    -- (\originxthree+\width+\overdrawn,\originy+3/4*\height);
  \draw[black, line width=1pt] (\originxthree-\overdrawn,\originy+\height/2)
    -- (\originxthree+\width+\overdrawn,\originy+\height/2);
  \draw[black, line width=0.75pt] (\originxthree,\originy+1/4*\height)
  -- (\originxthree+\width+\overdrawn,\originy+1/4*\height);
  
  \draw[pattern=north west lines, pattern color=blue] (\originxthree,\originy+\height/2) rectangle
    (\originxthree+\width,\originy+3/4*\height);
  \draw[pattern=horizontal lines, pattern color=blue] (\originxthree,\originy) rectangle
    (\originxthree+\width,\originy+\height/2);
    
  \node at (\originxthree,3/4*\height)[anchor=east] {$1$};
  \node at (\originxthree,1/4*\height)[anchor=east] {$2$}; 
  \node at (\originxthree+\width,7/8*\height)[anchor=west] {$01$};
  \node at (\originxthree+\width,5/8*\height)[anchor=west] {$10$};
  \node at (\originxthree+\width,3/8*\height)[anchor=west] {$02$};
  \node at (\originxthree+\width,1/8*\height)[anchor=west] {$20$};
  \node at (\originxthree+\width/2,\originy+\height)[anchor=south] {$S_2=1$};

  
  \def\originxfour{9}
  
  \draw[black, line width=1pt] (\originxfour,\originy)
    -- (\originxfour,\originy+\height);
  \draw[black, line width=1pt] (\originxfour-\overdrawn,\originy+\height)
    -- (\originxfour+\width+\overdrawn,\originy+\height);
  \draw[black, line width=1pt] (\originxfour+\width,\originy+\height)
    -- (\originxfour+\width,\originy);
  \draw[black, line width=1pt] (\originxfour+\width+\overdrawn,\originy)
    -- (\originxfour-\overdrawn,\originy);

  \draw[black, line width=0.75pt] (\originxfour,\originy+3/4*\height)
    -- (\originxfour+\width+\overdrawn,\originy+3/4*\height);
  \draw[black, line width=1pt] (\originxfour-\overdrawn,\originy+\height/2)
    -- (\originxfour+\width+\overdrawn,\originy+\height/2);
  \draw[black, line width=0.75pt] (\originxfour,\originy+1/4*\height)
  -- (\originxfour+\width+\overdrawn,\originy+1/4*\height);

  \draw[pattern=north east lines, pattern color=red] (\originxfour,\originy+\height/2) rectangle
    (\originxfour+\width,\originy+3/4*\height);
  \draw[pattern=vertical lines, pattern color=red] (\originxfour,\originy) rectangle
    (\originxfour+\width,\originy+1/4*\height);
  \draw[pattern=north west lines, pattern color=blue] (\originxfour,\originy+\height/2) rectangle
    (\originxfour+\width,\originy+3/4*\height);
  \draw[pattern=horizontal lines, pattern color=blue] (\originxfour,\originy) rectangle
    (\originxfour+\width,\originy+\height/2);

  \node at (\originxfour,3/4*\height)[anchor=east] {$1$};
  \node at (\originxfour,1/4*\height)[anchor=east] {$2$}; 
  \node at (\originxfour+\width,7/8*\height)[anchor=west] {$01$};
  \node at (\originxfour+\width,5/8*\height)[anchor=west] {$10$};
  \node at (\originxfour+\width,3/8*\height)[anchor=west] {$02$};
  \node at (\originxfour+\width,1/8*\height)[anchor=west] {$20$};
  \node at (\originxfour+\width/2,\originy+\height)[anchor=south] {$S_{1,2}=01$};

\end{tikzpicture} \vskip 1.5em \ifarXiv
  \else
  \tablesize{\footnotesize}
  \fi
  \setlength{\tabcolsep}{5pt}
  \begin{tabular}{c || c c | c || c c c c | c c || c c c c | c c c c }
    \hline
    $p$ & $s_1$ & $s_2$ & $t$ & $i_1^+$ & $i_1^-$ & $i_2^+$ & $i_2^-$ & $i_{12}^+$ & $i_{12}^-$
      & $r^+$ & $u_1^+$ & $u_2^+$ & $c^+$ & $r^-$ & $u_1^-$ & $u_2^-$ & $c^-$\\
    \hline\hline
    \nf{1}{4} & 0 & 1 & 1 & 1 & 1 & 2 & 1 & 2 & 1 & 1 & 0 & 1 & 0 & 1 & 0 & 0 & 0 \\
    \nf{1}{4} & 1 & 0 & 1 & 2 & 1 & 1 & 1 & 2 & 1 & 1 & 1 & 0 & 0 & 1 & 0 & 0 & 0 \\
    \nf{1}{4} & 0 & 2 & 2 & 1 & 1 & 2 & 1 & 2 & 1 & 1 & 0 & 1 & 0 & 1 & 0 & 0 & 0 \\
    \nf{1}{4} & 2 & 0 & 2 & 2 & 1 & 1 & 1 & 2 & 1 & 1 & 1 & 0 & 0 & 1 & 0 & 0 & 0 \\
    \hline\hline
    \multicolumn{4}{c||}{\scriptsize Expected values}
      & \nf{3}{2} & 1 & \nf{3}{2} & 1 & 2 & 1 & 1 & \nf{1}{2} & \nf{1}{2} & 0 & 1 & 0 & 0 & 0 \\
    \hline
  \end{tabular}
  \vskip 5pt{\small
    $\RI=0\bit \qquad \UIo=\nf{1}{2}\bit \qquad \UIt=\nf{1}{2}\bit \qquad \CI=0\bit$}
  \caption{\label{fig:locunq} Example \locunq{}.  \emph{Top}:\ probability mass diagrams for the
    realisation ${(S_1\!=\!0, S_2\!=\!1,T\!=\!1)}$.  \emph{Middle}:\ For each realisation, the PPID
    using specificity and ambiguity is evaluated (see \fig{xor} for details).  Upon recombination as
    per \eq{local_recomp}, the PPI decomposition from \tb{locunq} is attained.  \emph{Bottom}:\ as
    does the average information---the decomposition does not have the pointwise unique problem.}
\end{figure}
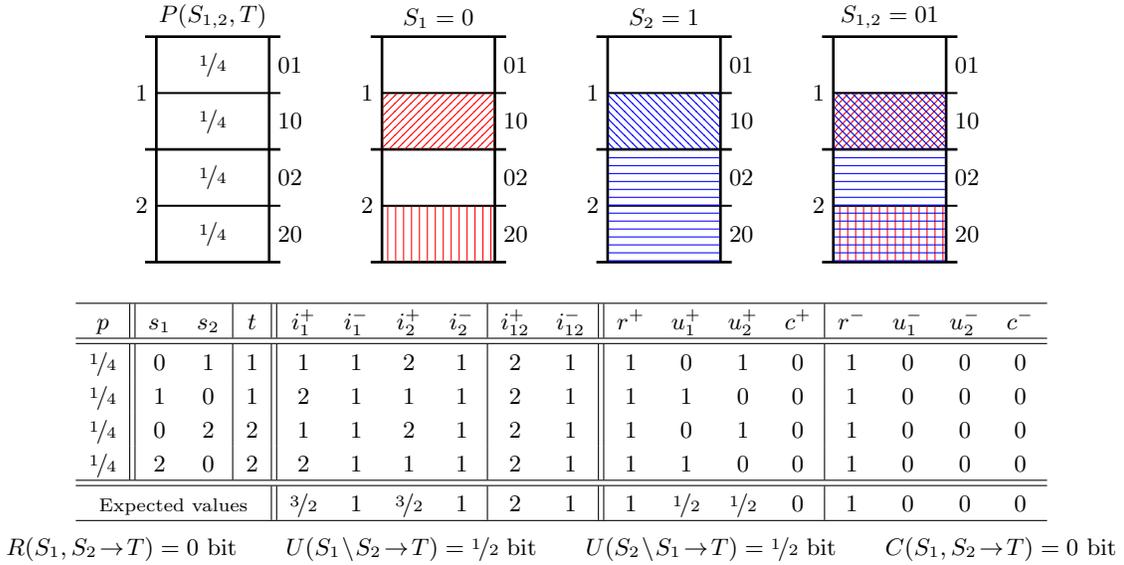

In each realisation, observing a $0$ in either source provides the same balanced informative and
misinformative exclusions as in \xor{}.  Observing either a $1$ or $2$ provides the same
misinformative exclusion as observing the $0$, but provides a larger informative exclusion than $0$.
This leaves only the probability mass associated with the event which subsequently occurs remaining
(hence why observing a $1$ and $2$ is fully informative about the target).  Information
theoretically, in each realisation the predictor events provide 1~bit of redundant pointwise
specificity and 1~bit of redundant pointwise ambiguity while the fully informative event
additionally provides 1~bit of unique specificity.

\subsection{Probability Distribution \imprdn{}}
\label{sec:imprdn}

\fig{imprdn} shows the probability distribution \emph{redundant-error} (\imprdn{}) which considers
two predictors which are nominally redundant and fully informative about the target, but where one
predictor occasionally makes an erroneous prediction.  Specifically, \fig{imprdn} shows the
decomposition of \imprdn{} where $S_2$ makes an error with a probability $\varepsilon=\nf{1}{4}$.
The important feature to note about this probability distribution is that upon recombining the
specificity and ambiguity and taking the expectation over every realisation, the resultant average
unique information from $S_2$ is $\UIt=-0.811\bit$.

On first inspection, the result that the average unique information can be negative may seem
problematic; however, it is readily explainable in terms of the operational interpretation of
\secRef{op_interp_red}.  In \imprdn{}, a source event always excludes exactly $\nf{1}{2}$ of the
total probability mass, thus every realisation contains $1\bit$ of redundant pointwise specificity.
The events of the error-free $S_1$ induce only informative exclusions and as such provide $0\bit$ of
pointwise ambiguity in each realisation.  In contrast, the events in the error-prone $S_2$ always
induce a misinformative exclusion, meaning that $S_2$ provides unique pointwise ambiguity in every
realisation.  Since $S_2$ never provides unique specificity, the average unique information is
negative on average.

\begin{figure}[t]
 \centering 
 \vspace{-5pt}
 \input{figures/pmass_imprdn_new.tikz} \vskip 1em \ifarXiv
  \else
  \tablesize{\footnotesize}
  \fi
  \setlength{\tabcolsep}{5pt}
  \begin{tabular}{c || c c | c || c c c c | c c || c c c c | c c c c }
    \hline
    $p$ & $s_1$ & $s_2$ & $t$ & $i_1^+$ & $i_1^-$ & $i_2^+$ & $i_2^-$ & $i_{12}^+$ & $i_{12}^-$
      & $r^+$ & $u_1^+$ & $u_2^+$ & $c^+$ & $r^-$ & $u_1^-$ & $u_2^-$ & $c^-$\\
    \hline\hline
    \nf{3}{8}&0&0&0& 1&0&1&$\lg\nf{4}{3}$&$\lg\nf{8}{3}$&$\lg\nf{4}{3}$&1&0&0&$\lg\nf{4}{3}$&0&0&$\lg\nf{4}{3}$&0 \\
    \nf{3}{8}&1&1&1& 1&0&1&$\lg\nf{4}{3}$&$\lg\nf{8}{3}$&$\lg\nf{4}{3}$&1&0&0&$\lg\nf{4}{3}$&0&0&$\lg\nf{4}{3}$&0 \\
    \nf{1}{8}&0&1&0& 1&0&1&2               &3             &2             &1&0&0&2            &0&0&2             &0 \\
    \nf{1}{8}&1&0&1& 1&0&1&2               &3             &2             &1&0&0&2            &0&0&2             &0 \\
    \hline\hline
    \multicolumn{4}{c||}{\scriptsize Expected values}
      & 1 & 0 & 1 & 0.811 & 1.811 & 0.811 & 1 & 0 & 0 & 0.811 & 0 & 0 & 0.811 & 0 \\
    \hline
  \end{tabular}
  \vskip 5pt{\small $\RI=1\bit \quad \UIo=0\bit \quad \UIt=-0.811\bit \quad \CI=0.811\bit$}
  \caption{\label{fig:imprdn} Example \imprdn{}.  \emph{Top}:\ probability mass diagrams for
    the realisations ${(S_1\!=\!0,S_2\!=\!0,T\!=\!0)}$ and ${(S_1\!=\!0,S_2\!=\!1,T\!=\!0)}$.
    \emph{Middle}:\ for each realisation, the PPID using specificity and ambiguity is evaluated (see
    \fig{xor} for details).  \emph{Bottom}:\ the average PI atoms may be negative as the
    decomposition does not satisfy local positivity.}
\end{figure}

Despite the negativity of the average unique information, in is important to observe that $S_2$
provides $0.189\bit$ of information since $S_2$ also provides $1\bit$ of average redundant
information.  It is not that $S_2$ provides negative information on average (as this is not
possible); rather it is that not all of the information provided by $S_2$ (i.e.\ the specificity) is
``useful''~\citep[p.~21]{shannon1998}.  This is in contrast to $S_1$ which only provides useful
specificity.  To summarise, it is the unique ambiguity which distinguishes the information provided
by variable $S_2$ from $S_1$, and hence why $S_2$ is deemed to provide negative average unique
information.  This form of uniqueness can only be distinguished by allowing the average unique
information to be negative.  This of course, requires abandoning the local positivity as a required
property, as per Theorem~\ref{thm:nonnegativity_avg}.  Few of the existing measures in the PID
literature consider dropping this requirement as negative information quantities are typically
regarded as being ``unfortunate''~\citep[p.\ 49]{cover2012}.  However, in the context of the
pointwise mutual information, negative information values are readily interpretable as being
misinformative values.  Despite this, the average information from each predictor must be
non-negative; however, it may be that what distinguishes one predictor from another are precisely
the misinformative predictor events, meaning that the unique information is in actual fact, unique
misinformation.  Forgoing local positivity makes the PPID using specificity and ambiguity novel (the
other exception in this regard is \citet{ince2017} who was first to consider allowing negative
average unique information.)

\subsection{Probability Distribution \tbc{}}
\label{sec:tbc}

\fig{tbc} shows the probability distribution \emph{two-bit-copy} (\tbc{}) which considers two
independently distributed binary predictor variables $S_1$ and $S_2$, and a target variable $T$
consisting of a separate elementary event for each joint event $S_{1,2}$.  There are several
important points to note about the decomposition of \tbc{}.  Firstly, due to the symmetry in the
probability distribution, each realisation will have the same pointwise decomposition.  Secondly,
due to the construction of the target, there is an isomorphism\footnote{Again, isomorphism should be
  taken to mean isomorphic probability spaces, e.g.\ \citep[p.~27]{gray1988} or
  \citep[p.4]{martin1984}.}  between $P(T)$ and $P(S_1, S_2)$, and hence the pointwise ambiguity
provided by any (individual or joint) predictor event is $0\bit$ (since given $t$, one knows $s_1$
and $s_2$).  Thirdly, the individual predictor events $s_1$ and $s_2$ each exclude $\nf{1}{2}$ of
the total probability mass in $P(T)$ and so each provide $1\bit$ of pointwise specificity; thus, by
\eq{red_specificity}, there is $1\bit$ of redundant pointwise specificity in each realisation.
Fourthly, the joint predictor event $s_{1,2}$ excludes $\nf{3}{4}$ of the total probability mass,
providing $2\bit$ of pointwise specificity; hence, by \eq{local_decomp_bivar_separate_spec}, each
joint realisation provides $1\bit$ of pointwise complementary specificity in addition to the $1\bit$
of redundant pointwise specificity.  Finally, putting this together via \eq{recomp}, \tbc{} consists
of $1\bit$ of average redundant information and $1\bit$ of average complementary information.

Although ``surprising''~\citep[p.~268]{bertschinger2013}, according to the operational
interpretation adopted in \secRef{op_interp_red}, two independently distributed predictor variables
can share redundant information.  That is, since the exclusions induced by $s_1$ and $s_2$ are the
same with respect to the two-event partition $\mc{T}^t$, the information associated with these
exclusions is regarded as being the same.  Indeed, this probability distribution highlights the
significance of specific reference to the two-event partition in \secRef{op_interp_red} and
Axiom~\ref{ax:local_comp}.  (This can be seen in the probability mass diagram in \fig{tbc}, where
the events $S_1=0$ and $S_2=0$ exclude different elementary target events within the complementary
event $\ob{0}$ and yet are considered to be the same exclusion with respect to the two-event
partition $\mc{T}^{0}$.)  That these exclusions should be regarded as being the same is discussed
further in \appRef{operational}.  Now however, there is a need to discuss \tbc{} in terms of
Theorem~\ref{thm:tar_cr} (Target Chain Rule).

\tbc{} was first considered as a ``mechanism''~\citep[p.~3]{harder2013} where ``the wires don't even
touch''~\citep[p.~167]{griffith2012}, which merely copies or concatenates $S_1$ and $S_2$ into a
composite target variable ${T_{1,2}=(T_1,T_2)}$ where ${T_1=S_1}$ and ${T_2=S_2}$.  However, using
causal mechanisms as a guiding intuition is dubious since different mechanisms can yield isomorphic
probability distributions~\citep[and references therein]{pearl1988}.  In particular, consider two
mechanisms which generate the composite target variables ${T_{1,3}=(T_1,T_3)}$ and
${T_{2,3}=(T_2,T_3)}$ where ${T_3=S_1 \mathxor S_2}$.  As can be seen in \fig{tbc}, both of these
mechanisms generate the same (isomorphic) probability distribution $P(S_1,S_2,T)$ as the mechanism
generating $T_{1,2}$.  If an information decomposition is to depend only on the probability
distribution $P(S_1,S_2,T)$, and no other semantic details such as labelling, then all three
mechanisms must yield the same information decomposition---this is not clear from the mechanistic
intuition.

\begin{figure}[t]
 \centering 
  \input{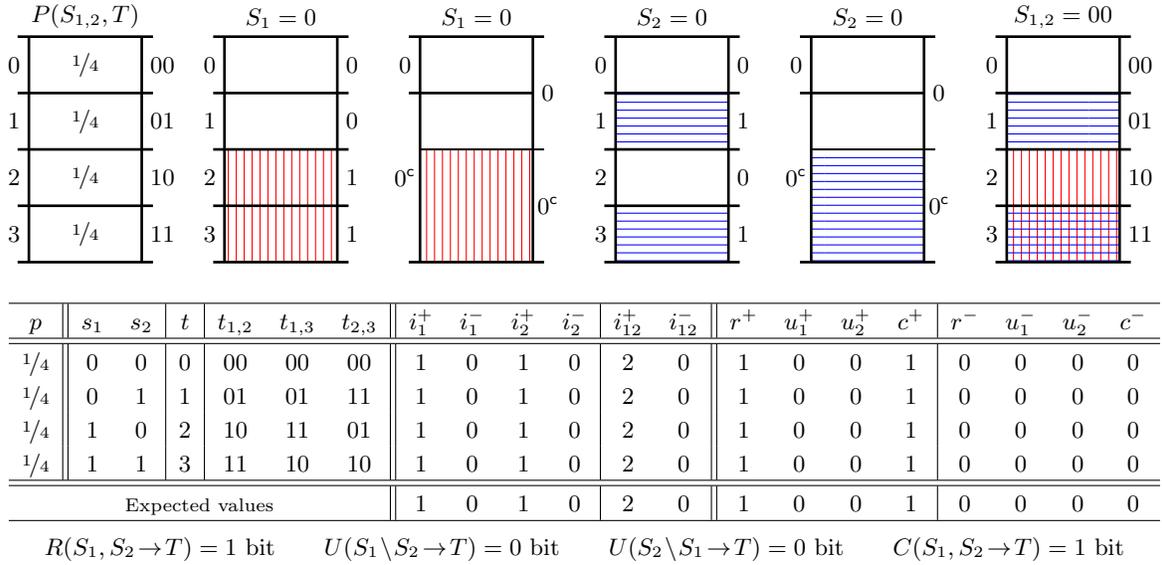} \vskip 1.5em \ifarXiv
  \else
  \tablesize{\footnotesize}
  \fi
  \setlength{\tabcolsep}{5pt}
  \begin{tabular}{c || c c | c | c c c || c c c c | c c || c c c c | c c c c }
    \hline
    $p$ & $s_1$ & $s_2$ & $t$ & $t_{1,2}$ & $t_{1,3}$ & $t_{2,3}$
    & $i_1^+$ & $i_1^-$ & $i_2^+$ & $i_2^-$ & $i_{12}^+$ & $i_{12}^-$
      & $r^+$ & $u_1^+$ & $u_2^+$ & $c^+$ & $r^-$ & $u_1^-$ & $u_2^-$ & $c^-$\\
    \hline\hline
    \nf{1}{4} & 0 & 0 & 0 & 00 & 00 & 00 & 1 &0 &1 &0 & 2 & 0 & 1 &0 &0 &1 &0 &0 &0 &0 \\
    \nf{1}{4} & 0 & 1 & 1 & 01 & 01 & 11 & 1 &0 &1 &0 & 2 & 0 & 1 &0 &0 &1 &0 &0 &0 &0 \\
    \nf{1}{4} & 1 & 0 & 2 & 10 & 11 & 01 & 1 &0 &1 &0 & 2 & 0 & 1 &0 &0 &1 &0 &0 &0 &0 \\
    \nf{1}{4} & 1 & 1 & 3 & 11 & 10 & 10 & 1 &0 &1 &0 & 2 & 0 & 1 &0 &0 &1 &0 &0 &0 &0 \\
    \hline\hline
    \multicolumn{7}{c||}{\scriptsize Expected values}
      & 1 & 0 & 1 & 0 & 2 & 0 & 1 & 0 & 0 & 1 & 0 & 0 & 0 & 0 \\
    \hline
  \end{tabular}
  \vskip 5pt{\small $\RI=1\bit \qquad \UIo=0\bit \qquad \UIt=0\bit \qquad \CI=1\bit$}
  \caption{\label{fig:tbc} Example \tbc{}.  \emph{Top}:\ the probability mass diagrams for the
    realisation ${(S_1\!=\!0,S_2\!=\!0,T\!=\!00)}$.  \emph{Middle}:\ for each realisation, the PPID
    using specificity and ambiguity is evaluated (see \fig{xor}). \emph{Bottom}:\ the
    decomposition of \xor{} yields the same result as $\imin$.}
\end{figure}

Although the decomposition of the various composite target variables must be the same, there is no
requirement that the three systems must yield the same decomposition when analysed in terms of the
individual components of the composite target variables.  Nonetheless, there ought to be a
consistency between the decomposition of the composite target variables and the decomposition of the
component target variables---i.e.\ there should be a target chain rule.  As shown in
Theorem~\ref{thm:tar_cr}, the measures $\rmin$ and $\Rmin$ satisfy the target chain rule, whereas
$\imin$, $\uitilde$, $\ired$ and $\ivk$ do not~\citep{bertschinger2013, griffith2014}.  Failing to
satisfy the target chain rule can lead to inconsistencies between the composite and component
decompositions, depending on the order in which one considers decomposing the information (this is
discussed further in \appRef{accum_betting}).  In particular, \tb{tbc_tar_cr} shows how $\uitilde$,
$\ired$ and $\ivk$ all provide the same inconsistent decomposition for \tbc{} when considered in
terms of the composite target variable $T_{1,3}$.  In contrast, $\Rmin$ produces a consistent
decomposition of $T_{1,3}$.  Finally, based on the above isomorphism, consider the following (the
proof is deferred to \appRef{tcr}).

\begin{restatable}{restatetheorem}{theoremnopos}
  \label{thm:no_pos}
  The target chain rule, identity property and local positivity, cannot be simultaneously satisfied.
\end{restatable}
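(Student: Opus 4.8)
The plan is to exhibit a single distribution---namely two-bit-copy (\tbc{})---on which the three properties collide, using exactly the isomorphism $P(S_1,S_2,T_{1,3})\cong P(S_1,S_2,T_{1,2})$ emphasised above. Throughout, let $S_1,S_2$ be independent uniform bits, take the two predictor sources $\{S_1\}$ and $\{S_2\}$, and work with the composite target $T_{1,3}=(T_1,T_3)$ where $T_1=S_1$ and $T_3=S_1\oplus S_2$. I take for granted that the (average) redundancy $I_\cap$ satisfies the Williams and Beer Axioms~\ref{ax:wb_comm}--\ref{ax:wb_sr}, so that Monotonicity together with Self-redundancy gives the bound $I_\cap(\{S_1\},\{S_2\}\ra X)\le\min\{I(S_1;X),I(S_2;X)\}$. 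Assume, for contradiction, that $I_\cap$ \emph{additionally} satisfies the target chain rule, the identity property, and local positivity. I abbreviate $I_\cap(\ra X)$ for $I_\cap(\{S_1\},\{S_2\}\ra X)$.

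First I would dispatch three terms to zero by combining the above upper bound with local positivity ($I_\cap\ge 0$). Since $I(S_2;T_1)=I(S_2;S_1)=0$ we get $I_\cap(\ra T_1)=0$; since $I(S_1;T_3)=I(S_1;S_1\oplus S_2)=0$ we get $I_\cap(\ra T_3)=0$; and since once $T_1=S_1$ is known $S_1$ carries nothing further about $T_3$, we have $I(S_1;T_3\mid T_1)=0$ and hence the conditional term $I_\cap(\ra T_3\mid T_1)=0$. Each of these uses both the min-information bound and $I_\cap\ge0$ to pin the value to exactly $0$, which is precisely why local positivity is required.

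The crucial step is to evaluate the remaining conditional term $I_\cap(\ra T_1\mid T_3)$ via the identity property. Conditioned on any value $T_3=c$, the relation $S_2=S_1\oplus c$ makes $S_1$ and $S_2$ deterministic relabellings of one another, so the target $T_1=S_1$ is isomorphic to the joint copy $(S_1,S_2)$ within that slice. Invoking isomorphism invariance of the decomposition together with the identity property applied to this conditional problem,
\begin{equation*}
  I_\cap\big(\{S_1\},\{S_2\}\ra T_1\mid T_3=c\big)=I(S_1;S_2\mid T_3=c)=1\bit,
\end{equation*}
and averaging over $c$ yields $I_\cap(\ra T_1\mid T_3)=1\bit$.

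Finally I would set the two orderings of the target chain rule against each other. Expanding $I_\cap(\ra T_{1,3})$ with $T_1$ first gives $I_\cap(\ra T_1)+I_\cap(\ra T_3\mid T_1)=0+0=0$, while expanding with $T_3$ first gives $I_\cap(\ra T_3)+I_\cap(\ra T_1\mid T_3)=0+1=1$. Since the target chain rule (both orderings, as in \eqref{eq:target_chain_rule_1}--\eqref{eq:target_chain_rule_2}) asserts that each expansion equals $I_\cap(\ra T_{1,3})$, we obtain $0=1$, the desired contradiction; thus no measure can enjoy all three properties at once. I expect the main obstacle to be the crucial step: making the conditional redundancy $I_\cap(\cdot\mid T_3)$ precise and justifying that the identity property may be invoked inside a conditional slice, via isomorphism invariance and the observation that, given $T_3$, the target $T_1$ is a relabelling of the copy $(S_1,S_2)$. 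The three vanishing terms and the chain-rule bookkeeping are routine by comparison.
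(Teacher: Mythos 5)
Your strategy is the paper's: run the two-bit-copy example with the composite target $T_{1,3}=(T_1,T_3)$, $T_3=S_1\oplus S_2$, and collide the hypotheses on it. The bookkeeping of the three vanishing terms and the final chain-rule contradiction are fine. The one step that does not follow from the stated hypotheses is your evaluation of $I_\cap(\ra T_1\mid T_3)=1\bit$: you invoke the identity property \emph{inside a conditional slice}, after an isomorphism between the conditional target $T_1\mid T_3=c$ and the copy $(S_1,S_2)\mid T_3=c$. The identity property assumed by the theorem is the unconditional statement about the copy target; extending it to conditional targets is an extra assumption, and you rightly flag it as the weak point. The paper avoids this entirely: it computes $R(S_1,S_2\ra T_1\mid T_3)=1\bit$ from local positivity alone, since in the conditional slice $I(S_1;T_1\mid T_3)=I(S_2;T_1\mid T_3)=I(S_{1,2};T_1\mid T_3)=1\bit$, so the three decomposition equations force $R-C=1$ while $R\le R+\UIo=1$, whence $R=1$ and $C=U_1=U_2=0$. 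It then applies the identity property only once, unconditionally, to the literal copy target $T_{1,2}$ (giving $R(S_1,S_2\ra T_{1,2})=0$) and transfers this to $T_{1,3}$ via the isomorphism $P(S_1,S_2,T_{1,2})\cong P(S_1,S_2,T_{1,3})$, using only one ordering of the chain rule. If you substitute the paper's local-positivity computation for your conditional-identity step, your proof closes; curiously, the identity property then drops out of your argument altogether, because your route to $I_\cap(\ra T_{1,3})=0$ goes through the $(T_1,T_3)$ ordering of the chain rule rather than through $T_{1,2}$, so what you would then have shown is that local positivity and the two orderings of the target chain rule already conflict on this example.
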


\begin{table}[b]
  \centering 
  \ifarXiv
  \else
  \tablesize{\footnotesize}
  \fi
  \setlength{\tabcolsep}{3pt}
  \bgroup
  \def\arraystretch{1.3}
  \begin{tabular}{c||c || c | c || c | c }
    \hline
    &$I(S_{1,2};T_{1,3})$
      & $I(S_{1,2};T_1)$ & $I(S_{1,2};T_3|T_1)$
      & $I(S_{1,2};T_3)$ & $I(S_{1,2};T_1|T_3)$ \\
    \hline\hline
    \multirow{2}{*}{
      $\begin{aligned}
        \raisebox{-3pt}{$\uitilde,\; \ired,$} \\
        \raisebox{1pt}{$\ivk$} \quad
      \end{aligned}$ 
    }
      & \multirow{2}{*}{
          $\begin{aligned}
            U(S_1 \sm S_2 \ra T_{1,3})&\!=\!1 \\
            U(S_2 \sm S_1 \ra T_{1,3})&\!=\!1 
          \end{aligned}$ 
        }
      & \multirow{2}{*}{$U(S_1 \sm S_2 \ra T_1)\!=\!1$}
      & \multirow{2}{*}{$U(S_2 \sm S_1 \ra T_3 | T_1)\!=\!1$}
      & \multirow{2}{*}{$C(S_1, S_2 \ra T_3)\!=\!1$}
      & \multirow{2}{*}{$R(S_1, S_2 \ra T_1 | T_3)\!=\!1$}  \\
    & & & & & \\
    \hline
    \multirow{3}{*}{$\Rmin$}
      & \multirow{3}{*}{
          $\begin{aligned}
            R(S_1,  S_2 \ra T_{1,3})&=1 \\
            C(S_1,  S_2 \ra T_{1,3})&=1
          \end{aligned}$
        }
      & \multirow{3}{*}{
          $\begin{aligned}
            \! U(S_2 \sm S_1 \ra T_1)& \!=\! -1 \! \\
            R(S_1,  S_2 \ra T_1)&\!=\!1 \\
            C(S_1, S_2 \ra T_1)&\!=\!1
          \end{aligned}$ 
        } &  &  &  \\
    & & & $U(S_2 \sm S_1 \ra T_3|T_1)\!=\!1$ & $C(S_1, S_2 \ra T_3)\!=\!1$
                         & $R(S_1, S_2 \ra T_1|T_3)\!=\!1$  \\
    & & & & & \\
    \hline
  \end{tabular}
  \egroup
  \caption{ \label{tb:tbc_tar_cr} Shows the decomposition of the quantities in the first row induced
    by the measures in the first column.  For consistency, the decomposition of $I(S_{1,2};T_{1,3})$
    should equal both the sum of the decomposition of $I(S_{1,2};T_1)$ and $I(S_{1,2};T_3|T_1)$, and
    the sum of the decomposition of $I(S_{1,2};T_3)$ and $I(S_{1,2};T_1|3)$.  Note that the
    decomposition induced by $\uitilde$, $\ired$ and $\ivk$ are not consistent.  In contrast,
    $\Rmin$ is consistent due to Theorem~\ref{thm:tar_cr}.  }

\end{table}

\ifarXiv
  \pagebreak
\else
\fi

\subsection{Summary of Key Properties}
\label{sec:properties}

The following are the key properties of the PPID using the specificity and ambiguity.
Property~\ref{prop:zero_unq} follows directly from the Definitions~\ref{def:specificity} and
\ref{def:ambiguity}.  Property~\ref{prop:no_local_pos} follows from Theorems~\ref{thm:nonnegativity}
and \ref{thm:nonnegativity_avg}.  Property~\ref{prop:no_identity} follows from the probability
distribution \tbc{} in \secRef{tbc}.  Property~\ref{prop:no_tar_mono} was discussed in
\secRef{measure}.  Property~\ref{prop:target_chain_rule} is proved in Theorem~\ref{thm:tar_cr}.

\begin{Property}
  \label{prop:zero_unq}
  When considering the redundancy between the source events $\mb{a}_1,\ldots,\mb{a}_k$, at least one
  source event $\mb{a}_i$ will provide zero unique specificity, and at least one source event
  $\mb{a}_j$ will provide zero unique ambiguity.  The events $\mb{a}_i$ and $\mb{a}_j$ are not
  necessarily the same source event.
\end{Property}

\begin{Property}
  \label{prop:no_local_pos}
  The atoms of partial specificity and partial ambiguity satisfy local positivity, $\pi^\pm \geq 0$.
  However, upon recombination and averaging, the atoms of partial information do not satisfy local
  positivity, $\Pi \geq 0$.
\end{Property}

\begin{Property}
  \label{prop:no_identity}
  The decomposition does not satisfy the identity property.
\end{Property}

\begin{Property}
  \label{prop:no_tar_mono}
  The decomposition does not satisfy the target monotonicity property.
\end{Property}

\begin{Property}
  \label{prop:target_chain_rule}
  The decomposition satisfies the target chain rule.
\end{Property}


\section{Conclusion}
\label{sec:conclusion}

The partial information decomposition of \citet{williams2010, williams2010priv} provided an
intriguing framework for the decomposition of multivariate information.  However, it was not long
before ``serious flaws''~\citep[p.\ 2163]{bertschinger2014} were identified.  Firstly, the measure
of redundant information $\imin$ failed to distinguish between whether predictor variables provide
the same information or merely the same amount of information.  Secondly, $\imin$ fails to satisfy
the target chain rule, despite this addativity being one of the defining characteristics of
information.  Notwithstanding the problems, the axiomatic derivation of the redundancy lattice was
too elegant to be abandoned and hence several alternate measures were proposed, i.e.\ $\ired$,
$\uitilde$ and $\ivk$ \citep{harder2013,bertschinger2014,griffith2012}.  Nevertheless, as these
measures all satisfy the identity property, they cannot produce a non-negative decomposition for an
arbitrary number of variables~\citep{rauh2014}.  Furthermore, none of these measures satisfy the
target chain rule meaning they produce inconsistent decompositions for multiple target variables.
Finally, in spite of satisfying the identity property (which many consider to be desirable), these
measures still fail to identify when variables provide the same information, as exemplified by the
pointwise unique problem presented in \secRef{localisability}.

This paper took the axiomatic derivation of the redundancy lattice from PID and applied it to the
unsigned entropic components of the pointwise mutual information.  This yielded two separate
redundancy lattices---the specificity and the ambiguity lattices. Then based upon an operational
interpretation of redundancy, measures of pointwise redundant specificity $\rminp$ and pointwise
redundant ambiguity $\rminn$ were defined.  Together with specificity and ambiguity lattices, these
measures were used to decompose multivariate information for an arbitrary number of variables.
Crucially, upon recombination, the measure $\rmin$ satisfies the target chain rule.  Furthermore,
when applied to \locunq{}, these measures do not result in the pointwise unique problem.  In our
opinion, this demonstrates that the decomposition is indeed correctly identifying redundant
information.  However, others will likely disagree with this point given that the measure of
redundancy does not satisfy the identity property.  According to the identity property, independent
variables can never provide the same information.  In contrast, according to the operational
interpretation adopted in this paper, independent variables can provide the same information if they
happen to provide the same exclusions with respect to the two-event target distribution.  In any
case, the proof of Theorem~\ref{thm:no_pos} and the subsequent discussion in \appRef{tcr},
highlights the difficulties that the identity property introduces when considering the information
provided about events in separate target variables. (See further discussion in
\appRef{accum_betting}).

Our future work with this decomposition will be both theoretical and empirical.  Regarding future
theoretical work, given that the aim of information decomposition is to derive measures pertaining
to sets of random variables, it would be worthwhile to derive the information decomposition from
first principles in terms of measure theory.  Indeed, such an approach would surely eliminate the
semantic arguments (about what it means for information to unique, redundant or complementary),
which currently plague the problem domain.  Furthermore, this would certainly be a worthwhile
exercise before attempting to generalise the information decomposition to continuous random
variables.  Regarding future empirical work, there are many rich data sets which could be decomposed
using this decomposition including financial time-series and neural recordings,
e.g.~\citep{wibral2017,tax2017,kay2017}.

\ifarXiv

\appendix

\else


\appendixtitles{yes} 
\appendixsections{multiple} 

\appendix

\fi


\section{Kelly Gambling, Axiom~\ref{ax:local_comp}, and \tbc{}}
\label{app:operational}

In \secRef{op_interp_red}, it was argued that the information provided by a set of predictor events
$s_1,\ldots,s_k$ about a target event~$t$ is the same information if each source event induces the
same exclusions with respect to the two-event partition ${\mc{T}^t=\{t,\ob{t}\}}$.  This was based
on the fact that pointwise mutual information does not depend on the apportionment of the exclusions
across the set of events which did not occur $\ob{t}$.  It was argued that since the pointwise
mutual information is independent of these differences, the redundant mutual information should also
be independent of these differences.  This requirement was then integrated into the operational
interpretation of \secRef{op_interp_red} and was later enshrined in the form of
Axiom~\ref{ax:local_comp}.  This appendix aims to justify this operational interpretation and argue
why redundant information in \tbc{} is not ``unreasonably large''~\citep[p.~269]{bertschinger2013}.

\subsection{Pointwise Side Information and the Kelly Criterion}
\label{sec:kelly_betting_primer}

Consider a set of horses $\mc{T}$ running in a race which can be considered a random variable $T$
with distribution $P(T)$.  Say that for each $t \in \mc{T}$ a bookmaker offers odds of
$o(t)$-for-$1$, i.e. the bookmaker will pay out $o(t)$ dollars on a \$1 bet if the horse $t$
wins.  Furthermore, say that there is no track take as ${\sum_{t \in \mc{T}} \nf{1}{o(t)} =1}$, and
these odds are \emph{fair}, i.e.\ ${o(t)=\nf{1}{p(t)}}$ for all ${t\in\mc{T}}$
\citep{kelly1956}. Let $\mb{b}(T)$ be the fraction of a gambler's capital bet on each horse
${t \in \mc{T}}$ and assume that the gambler stakes all of their capital on the race, i.e.\
${\sum_{t \in \mc{T}} b(t) =1}$.

Now consider an i.i.d.\ series of these races $T_1,T_2,\ldots$ such that $P(T_k)=P(T)$ for all
$k\in\mathbb{N}$ and let $t_k\in\mc{T}$ represent the winner of the $k$-th race.  Say that the
bookmaker offers the same odds on each race and the gambler bets their entire capital on each race.
The gambler's capital after $m$ races $D_m$ is a random variable which depends on two factors per
race:\ the amount the gambler staked on each race winner $t_k$, and the odds offered on each winner
$t_k$.  That is,
\begin{equation}
  \label{eq:kelly_wealth}
  D_m = \prod_{k=1}^m b(t_k) \, o(t_k),
\end{equation}
where monetary units \$ have been chosen such that $D_0=\$1$.  The gambler's wealth grows (or
shrinks) exponentially, i.e.
\begin{equation}
  D_m = 2^{m \, W(\mb{b},T)}
\end{equation}
where
\begin{equation}
  W(\mb{b},T) = \frac{1}{m} \log D_m
    = \frac{1}{m} \sum_{k=1}^m \log b(t_k) \, o(t_k)
    = \mathrm{E} \big[ \log b(t_k) \, o(t_k) \big]
\end{equation}
is the doubling rate of the gambler's wealth using a betting strategy $\mb{b}(T)$.  Here, the last
equality is by the weak law of large numbers for large $m$.

Any reasonable gambler would aim to use an optimal strategy $\mb{b}^*(T)$ which maximises the
doubling rate $W(\mb{b},T)$.  Kelly~\citep{kelly1956,cover2012} proved that the optimal doubling
rate is given by
\begin{equation}
  W^*(T) = \max_{\mb{b}} W(\mb{b},T)
    = \mathrm{E} \big[ \log b^*(t_k) \, o(t_k) \big]
\end{equation}
and is achieved by using the proportional gambling scheme ${\mb{b}^*(T)=P(T)}$.  When the race $T_k$
occurs and the horse $t_k$ wins, the gambler will receive a payout of $b^*(t^k)\,o(t^k)=\$1$, i.e.\
the gambler receives their stake back regardless of the outcome.  In the face of fair odds, the
proportional Kelly betting scheme is the optimal strategy---non-terminating repeated betting with
any other strategy will result in losses.


Now consider a gambler with access to a private wire $S$ which provides (potentially useful) side
information about the upcoming race.  Say that these messages are selected from the set $\mc{S}$,
and that the gambler receives the message $s_k$ before the race $T_k$.
Kelly~\citep{kelly1956,cover2012} showed that the optimal doubling rate in the presence of this side
information is given by
\begin{equation}
  W^*(T|S) = \max_{\mb{b}} W(\mb{b},T|S)
    = \mathrm{E} \big[ \log b^*(t_k|s_k) \, o(t_k) \big],
\end{equation}
and is achieved by using the conditional proportional gambling scheme $\mb{b}^*(T|s_k)=P(T|s_k)$.
Both the proportional gambling scheme $b^*(T)$ and the conditional proportional gambling scheme
$b^*(T|S)$ are based upon the \emph{Kelly criterion} whereby bets are apportioned according to the
best estimation of the outcome available.  The financial value of the private wire to a gambler can
be ascertained by comparing their doubling rate of the gambler with access to the side wire to that
of a gambler with no side information, i.e.
\begin{align}
  \Delta W = W^*(T|S) - W^*(T) 
    &= \mathrm{E} \big[ \log b^*(t_k|s_k) \, o(t_k) \big]
      - \mathrm{E} \big[ \log b^*(t_k) \, o(t_k) \big]     \nn\\
    &= \mathrm{E} \big[ i(s_k;t_k) \big] = I(S;T).       \label{eq:expect_loc}
\end{align}
This important result due to \citet{kelly1956} equates the increase in the doubling rate $\Delta W$
due to the presence of side information, with the mutual information between the private wire $S$
and the horse race $T$.  If on average, the gambler receives $1\bit$ of information from their
private wire, then on average the gambler can expect to double their money per race.  Furthermore,
as one would expect, independent side information does not increase the doubling rate.

With no side information, the Kelly gambler always received their original stake back from the
bookmaker.  However, this is not true for the Kelly gambler with side information.  Although their
doubling rate is greater than or equal to that of the gambler with no side information, this is only
true \emph{on average}.  Before the race $T_k$, the gambler receives the private wire message $s_k$
and then, the horse $t_k$ wins the race.  From \eq{expect_loc}, one can see that the return
$\Delta w_k$ for the $k$-th race is given by the pointwise mutual information,
\begin{equation}
  \label{eq:loc_return}
  \Delta w = i(s_k;t_k).
\end{equation}
Hence, just like the pointwise mutual information, the per race return can be positive or negative:\
if it is positive, the gambler will make a profit; if it is negative, the gambler will sustain a
loss.  Despite the potential for pointwise loses, the average return (i.e.\ the doubling rate) is,
just like the average mutual information, non-negative---and indeed, is optimal.  Furthermore, while
a Kelly gambler with side information can lose money on any single race, they can never actually go
bust.  The Kelly gambler with side information $s$ still hedges their risk by placing bets on all
horses with a non-zero probability of winning according to their side information, i.e.\ according
to $P(T|s_k)$. The only reason they would fail to place a bet on a horse is if their side
information completely precludes any possibility of that horse winning.  That is, a Kelly gambler
with side information will never fall foul of gambler's ruin.

\subsection{Justification of Axiom~\ref{ax:local_comp} and Redundant Information in \tbc{}}
\label{sec:op_just}

Consider \tbc{} semantically described in terms of a horse race.  That is, consider a four horse
race $T$ where each horse has an equiprobable chance of winning, and consider the binary variables
$T_1$, $T_2$, and $T_3$ which represent the following, respectively:\ the colour of the horse, black
$0$ or white $1$; the sex of the jockey, female $0$ or male $1$; and the colour of the jockey's
jersey, red $0$ or green $1$.  Say that the four horses have the following attributes:
\begin{description}\setlength{\itemsep}{0pt}\setlength{\parskip}{0pt}\setlength{\itemindent}{-15pt}
  \item[Horse 0] is a black horse $T_1\!=\!0$, ridden by a female jockey $T_2\!=\!0$, who is wearing
    a red jersey $T_3\!=\!0$. 
  \item[Horse 1] is a black horse $T_1\!=\!0$, ridden by a male jockey $T_2\!=\!1$, who is wearing a
    green jersey $T_3\!=\!1$. 
  \item[Horse 2] is a white horse $T_1\!=\!1$, ridden by a female jockey $T_2\!=\!0$, who is wearing
    a green jersey $T_3\!=\!1$. 
  \item[Horse 3] is a white horse $T_1\!=\!1$, ridden by a male jockey $T_2\!=\!1$, who is wearing a
    red jersey $T_3\!=\!0$. 
\end{description}
There are two important points to note. Firstly, the horses in the race $T$ could also be uniquely
described in terms of the composite binary variables $T_{1,2}$, $T_{1,3}$ or $T_{2,3}$.  Secondly,
if one knows $T_1$ and $T_2$ then one knows $T_3$ (which can be represented by the relationship
${T_3=T_1 \mathxor T_2}$).  Finally, consider private wires $S_1$ and $S_2$ which
\emph{independently} provide the colour of the horse and the colour of the jockey's jersey
(respectively) before the upcoming race, i.e.\ $S_1=T_1$ and $S_2=T_2$.

Now say a bookmaker offers fair odds of 4-for-1 on each horse in the race $T$.  Consider two
gamblers who each have access to one of $S_1$ and $S_2$.  Before each race, the two gamblers receive
their respective private wire messages and place their bets according to the Kelly strategy.  This
means that each gambler lays half of their, say $\$1$, stake on each of their two respective
non-excluded horses:\ unknowingly, both of the gamblers have placed a bet on the soon-to-be race
winner, and each gambler has placed a distinct bet on one of the two soon-to-be losers.  The only
horse neither has bet upon is also a soon-to-be loser. (See \citep{bertschinger2013} for a related
description of \tbc{} in term of the game-theoretic notions of shared and common knowledge.) After
the race, the bookmaker pays out $\$2$ dollars to each gambler:\ both have doubled their money.
This happens because both of the gamblers had one bit of $1\bit$ of information about the race,
i.e.\ pointwise mutual information.  In particular, both gamblers improved their probability of
predicting the eventual race winner.  It did not matter, in any way, that the gamblers had each laid
distinct bets on one of the three eventual race losers.  The fact that they laid different bets on
the horses which did not win, made no difference to their winnings.  The apportionment of the
exclusions across the set of events which did not occur, makes no difference to the pointwise mutual
information.  With respect to what occurred (i.e. with respect to which horse won), the fact the
that they excluded different losers is only semantic.  When it came to predicting the
would-be-winner, both gamblers had the same predictive power; they both had the same freedom of
choice with regards to selecting what would turn out to be the eventual race winner---they had the
same information.  It is for this reason that this information should be regarded as redundant
information, regardless of the independence of the information sources.  Hence, the introduction of
both the operational interpretation of redundancy in \secRef{op_interp_red} and
Axiom~\ref{ax:local_comp} in \secRef{measure}.

Now consider a third gambler who has access to both private wires $S_1$ and $S_2$, i.e.\ $S_{1,2}$.
Before the race, this gambler receives both private wire messages which, in total, precludes three
of the horses from winning.  This gambler then places the entirety of their $\$1$ stake on the
remaining horse which is sure to win.  After the race, the bookmaker pays out $\$4$:\ this gambler
has quadrupled their money as they had $2\bit$ of pointwise mutual information about the race.
Having both private wire messages simultaneously gave this gambler a $1\bit$ informational edge over
the two gamblers with access to a single side wire.  While each of the singleton gamblers had
$1\bit$ of independent information, the only way one could profit from the independence of this
information is by having both pieces of information simultaneously---this makes this $1\bit$ of
information complementary.  Although this may seem ``palpably
strange''~\citep[p.~167]{griffith2012}, it is not so strange when from the following perspective:\
the only way to exploit two pieces of independent information is by having both pieces together
simultaneously.

\subsection{Accumulator Betting and the Target Chain Rule}
\label{app:accum_betting}

Say that in addition to the 4-for-1 odds offered on the race $T$, the bookmaker also offers fair
odds of 2-for-1 on each of the binary variables $T_1$, $T_2$ and $T_3$.  Now, in addition to being
able to directly gamble on the race $T$, one could indirectly gamble on $T$ by placing a so-called
\emph{accumulator} bet on any pair of $T_1$, $T_2$ and $T_3$.  An accumulator is a series of chained
bets whereby any return from one bet is automatically staked on the next bet; if any bet in the
chain is lost then the entire chain is lost.  For example, a gambler could place 4-for-1 bet on
horse $0$ by placing the following accumulator bet:\ a 2-for1 bet on a black horse winning that
chains into a 2-for-1 bet on the winning jockey being female (or equivalently, vice versa).  In
effect, these accumulators enable a gambler to bet on $T$ by instead placing a chained bet on the
independent component variables within the (equivalent) joint variables $T_{1,2}$, $T_{1,3}$ and
$T_{2,3}$.  Now consider again the three gamblers from the prior section, i.e.\ the two gamblers who
each have a private wire $S_1$ and $S_2$, and the third gamble who has access to $S_{1,2}$.  Say
that they must each place a, say $\$1$, accumulator bet on $T_{1,3}$---what should each gambler do
according to the Kelly criterion?

For the sake of clarity, consider only the realisation where the horse $T=0$ subsequently wins (due
to the symmetry, the analysis is equivalent for all realisations).  First consider the accumulator
whereby the gamblers first bet on the colour of the winning horse $T_1$, which chains into a bet on
the colour of the winning jockey's jersey $T_3$.  Suppose that the private wire $S_1$ communicates
that the winning horse will be black, while the private wire $S_2$ communicates that the winning
horse will be ridden by a female jockey, i.e.\ $S_1=0$ and $S_2=0$.  Following to the Kelly
strategy, the gambler with access to $S_1=0$ takes out two $\$0.5$ accumulator bets.  Both of these
accumulators feature the same initial bet on the winning horse being black since $T_1=S_1=0$.  Hence
both bets return $\$1$ each which become the stake on the next bet in each accumulator.  This
gambler knows nothing about the colour of the jockey's jersey $T_3$.  As such, one accumulator
chains into a bet on the winning jersey being red $T_3=0$, while the other chains into a bet on it
being green $T_3=1$.  When the horse $T=0$ wins, the stake bet on the green jersey is lost while bet
on red jersey pays out $\$2$.  This gambler had $1\bit$ of side information and so doubled their
money.  Now consider the gambler with private wire $S_2$, who knows nothing about $T_1$ or $T_3$
individually.  Nonetheless, this gambler knows that the winner must be a female jockey $T_2=0$. As
such, this gambler knows that if a black horse $T_1=0$ wins then its jockey must be wearing a red
jersey $T_3=0$, or if a white horse $T_1=0$ wins then its jockey must be wearing a green jersey
$T_3=1$ (since ${T_3 = T_1 \mathxor T_2}$).  Thus this gambler can also utilise the Kelly strategy
to place the following two $\$0.5$ accumulator bets:\ the first accumulator bets on the winning
horse being black $T_1=0$ and then chains into a bet on the winner's jersey being red $T_3=0$, while
the second accumulator bets on the winning horse being white $T_1=1$ and then chains into a bet on
the winner's jersey being green $T_3=1$.  When the horse $T=0$ wins, the first accumulator pays out
$\$2$, while the second accumulator is be lost.  Hence, this gambler also doubles their money and so
also had $1\bit$ of side information.  Finally, consider the gambler with access to both private
wires $S_{1,3}$, who can place an accumulator on the black horse $T_1=0$ winning chaining into a bet
on the winning jockey wearing red $T_3=0$.  This gambler can quadruple their stake, and so must
possess $2\bit$ of side information.

Each of the three gamblers have the same final return regardless of whether the gamblers are betting
on the variable $T$, or placing accumulator bets on the variables $T_{1,2}$, $T_{1,3}$ or $T_{2,3}$.
However, the paths to the final result differs between the gamblers, reflecting the difference
between the information the each gambler had about the sub-variables $T_1$, $T_2$ or $T_3$.  Given
the result of \citet{kelly1956}, the proposed information decomposition should reflect these
differences, but yet still arrive at the same result---in other words, the information decomposition
should satisfy a target chain rule.  This is clear if the Kelly interpretation of information is to
remain as a ``duality''\citep[p.~159]{cover2012} in information theory.


\section{Supporting Proofs and Further Details}
\label{app:tech}

This appendix contains many of the important theorems and proofs relating to PPID using specificity
and ambiguity.

\subsection{Deriving the Specificity and Ambiguity Lattices from
  Axioms~\ref{ax:symmetry}--\ref{ax:local_comp}}
\label{app:lattices}

The following section is based directly on the original work of Williams and
Beer~\citep{williams2010, williams2010priv}.  The difference is that we now consider sources events
$\mb{a}_i$ rather than sources $\mb{A}_i$.

\begin{Proposition}
  \label{corol:nonneg}
  Both $\icp$ and $\icn$ are non-negative.
\end{Proposition}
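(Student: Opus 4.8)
The plan is to trace the non-negativity all the way back to the fact that the specificity and ambiguity are \emph{pointwise entropies}, and a pointwise entropy is never negative. Concretely, by the self-redundancy axiom (Axiom~\ref{ax:sr}) the redundant specificity and redundant ambiguity of a \emph{single} source event reduce to
\[
  \icp(\mb{a}_i \ra t) = i^+(\mb{a}_i \ra t) = h(\mb{a}_i) = -\log p(\mb{a}_i),
  \qquad
  \icn(\mb{a}_i \ra t) = i^-(\mb{a}_i \ra t) = h(\mb{a}_i | t) = -\log p(\mb{a}_i | t).
\]
First I would observe that $p(\mb{a}_i)$ and $p(\mb{a}_i | t)$ are genuine probabilities lying in $(0,1]$ on the support of the joint distribution, so that $-\log p(\mb{a}_i) \ge 0$ and $-\log p(\mb{a}_i | t) \ge 0$. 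This settles the single-source case immediately.

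For a collection of source events $\mb{a}_1,\dots,\mb{a}_k$, I would appeal to the explicit minimum form supplied by Definitions~\ref{def:specificity} and~\ref{def:ambiguity}, namely $\rminp(\mb{a}_1,\dots,\mb{a}_k \ra t) = \min_i h(\mb{a}_i)$ and $\rminn(\mb{a}_1,\dots,\mb{a}_k \ra t) = \min_j h(\mb{a}_j | t)$. Since each quantity being minimised is one of the non-negative pointwise entropies from the previous step, and the minimum of a finite family of non-negative numbers is non-negative, both measures are $\ge 0$. This is where the bulk of the argument formally lives, although it is only a one-line deduction once the single-source bound and the min-representation are in hand.

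The main thing to be careful about is that the \emph{monotonicity} axiom (Axiom~\ref{ax:mono}) cannot by itself deliver the conclusion: it states that including more source events can only \emph{decrease} the redundancy, so iterating it yields only upper bounds of the form $\icp(\mb{a}_1,\dots,\mb{a}_k \ra t) \le h(\mb{a}_i)$ rather than a lower bound. The non-negativity in the multi-source case must therefore be read off from the minimum representation, where every term is a pointwise entropy, and not from the lattice ordering alone. Finally, I would flag that this proposition concerns only the cumulative measures $\icp$ and $\icn$; it is logically separate from, and considerably easier than, the local positivity of the partial atoms $\pi^\pm$ obtained by M\"obius inversion, which is the content of Theorem~\ref{thm:nonnegativity}.
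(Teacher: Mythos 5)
There is a scope mismatch that makes your argument fall short of what the paper's Proposition actually asserts. Proposition~\ref{corol:nonneg} sits in the appendix section that derives the specificity and ambiguity lattices from Axioms~\ref{ax:symmetry}--\ref{ax:sr} alone, \emph{before} any concrete redundancy measure is chosen; it is a statement about every measure $\icp$, $\icn$ satisfying those axioms (mirroring the corresponding step in Williams and Beer's construction). Your proof instead invokes the explicit minimum form $\rminp(\mb{a}_1,\dots,\mb{a}_k \ra t)=\min_i h(\mb{a}_i)$ and $\rminn(\mb{a}_1,\dots,\mb{a}_k \ra t)=\min_j h(\mb{a}_j|t)$ from Definitions~\ref{def:specificity} and~\ref{def:ambiguity}. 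That establishes non-negativity only for the particular measures $\rminpn$, which is a special case of what is needed (and is anyway subsumed by Theorem~\ref{thm:satisfy} plus your single-source observation); it does not show that the lattice construction is well-founded for a generic axiom-satisfying redundancy measure.

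Moreover, your explicit claim that Axiom~\ref{ax:mono} ``cannot by itself deliver the conclusion'' because it only yields upper bounds is where the real idea is missed. The paper's proof gets the lower bound precisely from the axioms: since $\emptyset \subseteq \mb{a}_i$ for every $i$, monotonicity gives
$\icp(\mb{a}_1,\dots,\mb{a}_k \ra t) \geq \icp(\mb{a}_1,\dots,\mb{a}_k,\emptyset \ra t)$,
and the \emph{equality clause} of Axiom~\ref{ax:mono} (equality when a source event contains another) lets one strip away each $\mb{a}_i \supseteq \emptyset$ so that the right-hand side collapses to $\icp(\emptyset \ra t)$, which equals $h(\emptyset)=0$ by Axiom~\ref{ax:sr}; the same argument with $h(\emptyset|t)=0$ handles $\icn$. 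So the monotone structure, combined with its equality condition and self-redundancy, does produce the bound from below. To repair your proof at the intended level of generality, replace the appeal to the min-representation with this $\emptyset$-adjunction argument.
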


\begin{proof}
  Since $\emptyset \subseteq \mb{a}_i$ for any $\mb{a}_i$, Axioms~\ref{ax:mono} and \ref{ax:sr} 
  imply
  \begin{alignat}{5}
    &\icp\big( \mb{a}_1, \dots, \mb{a}_k \ra t \big)
      &&\geq \icp\big( \mb{a}_1, \dots, \mb{a}_k, \emptyset \ra t \big)
      &&= \icp\big(\emptyset \ra t \big) &&= h(\emptyset) &&= 0,  \\
    &\icn\big( \mb{a}_1, \dots, \mb{a}_k \ra t \big)
      &&\geq \icn\big( \mb{a}_1, \dots, \mb{a}_k, \emptyset \ra t \big)
      &&= \icn\big(\emptyset \ra t \big) &&= h(\emptyset | t) &&= 0. 
  \end{alignat}
  Hence, both $\icp\big( \mb{a}_1, \dots, \mb{a}_k \ra t \big)$ and
  $\icn\big( \mb{a}_1, \dots, \mb{a}_k \ra t \big)$ are non-negative.
\end{proof}
\begin{Proposition}
  Both $\icp$ and $\icn$ are bounded from above by the specificity and the ambiguity from any single
  source event, respectively. 
\end{Proposition}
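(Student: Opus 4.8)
The plan is to derive both bounds directly from the three axioms already established for $\icp$ and $\icn$, namely Symmetry (Axiom~\ref{ax:symmetry}), Monotonicity (Axiom~\ref{ax:mono}) and Self-redundancy (Axiom~\ref{ax:sr}). The key observation is that Monotonicity says appending a source event can only decrease the redundancy, so stripping away all but one source can only increase it; the value retained by a single source is then pinned down exactly by Self-redundancy. This is essentially the dual of the non-negativity argument in the preceding proposition, where $\emptyset$ was appended to push the value down to zero; here we instead remove sources to push the value up to a single-source term.

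Concretely, fix an arbitrary index $i \in \{1, \dots, k\}$. First I would invoke the Symmetry axiom to reorder the source events so that $\mb{a}_i$ occupies the first position, writing
\[\icp\big( \mb{a}_1, \dots, \mb{a}_k \ra t \big) = \icp\big( \mb{a}_i, \mb{b}_2, \dots, \mb{b}_k \ra t \big)\]
for a suitable relabelling $\mb{b}_2, \dots, \mb{b}_k$ of the remaining sources. Then I would apply the Monotonicity axiom repeatedly, removing the last-listed source event at each step, to obtain the chain
\[\icp\big( \mb{a}_i, \mb{b}_2, \dots, \mb{b}_k \ra t \big) \leq \icp\big( \mb{a}_i, \mb{b}_2, \dots, \mb{b}_{k-1} \ra t \big) \leq \dots \leq \icp\big( \mb{a}_i \ra t \big).\]
Finally, Self-redundancy gives $\icp\big( \mb{a}_i \ra t \big) = i^+(\mb{a}_i \ra t) = h(\mb{a}_i)$, establishing $\icp\big( \mb{a}_1, \dots, \mb{a}_k \ra t \big) \leq i^+(\mb{a}_i \ra t)$. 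Since $i$ was arbitrary, the bound holds for every single source event. The identical argument, with $\icn$, $i^-$ and $h(\mb{a}_i \,|\, t)$ in place of their positive counterparts, yields the ambiguity bound.

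I do not anticipate a serious obstacle, since the statement is little more than the monotonicity intuition combined with the self-redundancy normalisation. The only point requiring care is the appeal to Symmetry to handle the phrase ``any single source event'': as stated, Monotonicity only licenses removing the \emph{last-listed} source, so one must permute first in order to isolate the desired $\mb{a}_i$. As a consistency check, the bound is also immediate from the explicit measures of Definitions~\ref{def:specificity} and \ref{def:ambiguity}, since a minimum over the source events is trivially bounded above by any individual term $i^+(\mb{a}_i \ra t)$ or $i^-(\mb{a}_i \ra t)$.
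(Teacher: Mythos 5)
Your proof is correct and is essentially the paper's own argument: both rest on Monotonicity (Axiom~\ref{ax:mono}) to relate the multi-source redundancy to the single-source case and on Self-redundancy (Axiom~\ref{ax:sr}) to identify that single-source value with $h(\mb{a}_i)$ or $h(\mb{a}_i\,|\,t)$; the paper merely reads the same inequality in the opposite direction, starting from $\icp(\mb{a}_i \ra t)=\icp(\mb{a}_i,\mb{a}_i \ra t)$ and appending the remaining sources rather than stripping them away. Your explicit appeal to Symmetry to isolate an arbitrary $\mb{a}_i$ is a reasonable point of care that the paper leaves implicit.
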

\begin{proof}
  For any single source $\mb{a}_i$, Axioms~\ref{ax:mono} and \ref{ax:sr} yield
  \begin{alignat}{4}
    h(a_i)     &&= \icp\big( \mb{a}_i \ra t \big) &&= \icp\big( \mb{a}_i, \mb{a}_i \ra t \big)
      && \geq \icp\big( \mb{a}_i, \dots \ra t \big),                                            \\
    h(a_i | t) &&= \icn\big( \mb{a}_i \ra t \big) &&= \icn\big( \mb{a}_i, \mb{a}_i \ra t \big)
      && \geq \icp\big( \mb{a}_i, \dots \ra t \big),
  \end{alignat}
  as required.
\end{proof}

In keeping with Williams and Beer's approach \citep{williams2010, williams2010priv}, consider all of
the distinct ways in which a collection of source events ${\mb{a}=\{\mb{a}_1,\dots,\mb{a}_k\}}$
could contribute redundant information.  Thus far we have assumed that the redundancy measure can be
applied to any collection of source events, i.e.\ $\msc{P}_1(\mb{a})$ where $\msc{P}_1$ denotes the
power set with the empty set removed.  Recall that the sources events are themselves collections of
predictor events, i.e.\ $\msc{P}_1(\mb{s})$.  That is, we can apply both $\icp$ and $\icn$ to
elements of $\msc{P}_1\big(\msc{P}_1(\mb{s})\big)$.  However, this can be greatly reduced using
Axiom~\ref{ax:mono} which states that if $\mb{a}_i \subseteq \mb{a}_j$, then
\begin{align*}
  \icp\big( \mb{a}_j, \mb{a}_i, \ldots \ra t \big) &= \icp\big( \mb{a}_i, \ldots \ra t \big),  \\
  \icn\big( \mb{a}_j, \mb{a}_i, \ldots \ra t \big) &= \icn\big( \mb{a}_i, \ldots \ra t \big).
\end{align*}
Hence, one need only consider the collection of source events such that no source event is a
superset of any other in order,
\begin{equation}
  \label{eq:nodes}
  \msc{A}(\mb{s}) = \Big\{ \alpha \in \msc{P}_1\big(\msc{P}_1(\mb{s})\big) \;
    \big| \; \forall \, \mb{a}_i, \, \mb{a}_j \in \alpha, \, \mb{a}_i \not\subset \mb{a}_j \Big\}.
\end{equation}
This collection $\msc{A}(\mb{s})$ captures all the distinct ways in the source events could provide
redundant information.

As per Williams and Beer's PID, this set of source events $\msc{A}(\mb{s})$ is structured. Consider
two sets of source events $\alpha,\beta \in \msc{A}(\mb{s})$.  If for every source event
$\mb{b} \in \beta$ there exists a source event $\mb{a} \in \alpha$ such that
$\mb{a} \subseteq \mb{b}$, then all of the redundant specificity and ambiguity shared by
$\mb{b} \in \beta$ must include any redundant specificity and ambiguity shared by
$\mb{a} \in \alpha$. Hence, a partial order $\preceq$ can be defined over the elements of the domain
$\msc{A}(\mb{s})$ such that any collection of predictors event coalitions precedes another if and
only if the latter provides any information the former provides,
\begin{equation}
  \label{eq:partial_order}
  \forall \alpha, \beta \in \msc{A}(\mb{s}),
    \big( \alpha \preceq \beta \iff \forall \, \mb{b} \in \beta,
    \, \exists \, \mb{a} \in \alpha \; | \; \mb{a} \subseteq \mb{b} \big).
\end{equation}
Applying this partial ordering to the elements of the domain $\msc{A}(\mb{s})$ produces a lattice
which has the same structure as the redundancy lattice from PID, i.e.\ the structure of the sources
events here is the same as the structure of the sources in PID.  (\fig{lattices} depicts this
structure for the case of 2 and 3 predictor variables.)  Applying $i_\cap^+$ to these sources events
yields a \emph{specificity lattice} while applying $i_\cap^-$ yields an \emph{ambiguity lattice}.

Similar to $I_\cap$ in PID, the redundancy measures $i_\cap^+$ or $i_\cap^-$ can be thought of as a
cumulative information functions which integrate the specificity or ambiguity uniquely contributed
by each node as one moves up each lattice.  In order in evaluate the unique contribution of
specificity and ambiguity from each node in the lattice, consider the M\"obius inverse
\citep{rota1964, stanley2012} of $i_\cap^+$ and $i_\cap^-$.  That is, the specificity and ambiguity
of a node $\alpha$ is given by
\begin{equation}
  \label{eq:icap}
  i_\cap^\pm(\alpha \ra t) = \sum_{\beta \preceq \alpha} i_\cap^\pm (\beta \ra t)
  \qquad  \forall \; \alpha, \, \beta \in \msc{A}(\mb{s}).
\end{equation}
Thus the unique contributions of \emph{partial specificity} $i_\partial^+$ and \emph{partial
  ambiguity} $i_\partial^-$ from each node can be calculated recursively from the bottom-up, i.e.
\begin{equation}
  \label{eq:ipartial}
  i_\partial^\pm(\alpha \ra t)
  = i_\cap^\pm(\alpha \ra t) - \sum_{\beta \prec \alpha} i_\partial^\pm (\beta \ra t).
\end{equation}

\begin{Theorem}
  \label{thm:closed_form_gen}
  Based on the principle of inclusion-exclusion, we have the following closed-from expression for
  the partial specificity and partial ambiguity,
  \begin{equation}
    i_\partial^\pm (\alpha \ra t) = i_\cap^\pm (\alpha \ra t)
      \, - \!\!\!\! \sum_{\emptyset \not= \gamma \subseteq \alpha^-} \!\!\!
      (-1)^{|\gamma|-1}\, i_\cap^\pm(\bigwedge \gamma \ra t)
  \end{equation}
\end{Theorem}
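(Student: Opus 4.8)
The plan is to prove this by a standard inclusion--exclusion (M\"obius) computation on the finite lattice $\langle \msc{A}(\mb{s}), \preceq \rangle$, using only the recursive definition \eq{ipartial} and the cumulative relation \eq{icap}; the argument is identical for the specificity ($+$) and ambiguity ($-$) lattices, so I carry the superscript $\pm$ along throughout. Here I read $\alpha^-$ as the set of \emph{lower neighbours} of $\alpha$ (the elements covered by $\alpha$, equivalently the coatoms of the principal ideal ${\downarrow\alpha = \{\beta \in \msc{A}(\mb{s}) : \beta \preceq \alpha\}}$), and $\bigwedge\gamma$ as the lattice meet of the source events in $\gamma$. Rearranging \eq{ipartial} shows that the whole task reduces to expressing the correction term $\sum_{\beta \prec \alpha} i_\partial^\pm(\beta \ra t)$ as a signed sum of cumulative values $i_\cap^\pm(\,\cdot \ra t)$.

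First I would establish the key combinatorial identity
\begin{equation*}
  \{ \beta \in \msc{A}(\mb{s}) : \beta \prec \alpha \}
    = \bigcup_{a \in \alpha^-} {\downarrow a}.
\end{equation*}
The inclusion $\supseteq$ is immediate since each $a \in \alpha^-$ satisfies $a \prec \alpha$. For $\subseteq$, I use finiteness: any $\beta \prec \alpha$ sits on a maximal chain from $\beta$ up to $\alpha$, and the element immediately below $\alpha$ on that chain is a lower neighbour $a \in \alpha^-$ with $\beta \preceq a$, so $\beta \in {\downarrow a}$.

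Next I would apply inclusion--exclusion to the set function $\mu(X) = \sum_{\beta \in X} i_\partial^\pm(\beta \ra t)$, which is additive over disjoint unions, giving
\begin{equation*}
  \sum_{\beta \prec \alpha} i_\partial^\pm(\beta \ra t)
    = \mu\!\Big( \bigcup_{a \in \alpha^-} {\downarrow a} \Big)
    = \sum_{\emptyset \neq \gamma \subseteq \alpha^-} (-1)^{|\gamma|-1}\,
      \mu\!\Big( \bigcap_{a \in \gamma} {\downarrow a} \Big).
\end{equation*}
Then I would identify the intersection of principal ideals with the ideal of the meet, $\bigcap_{a \in \gamma} {\downarrow a} = {\downarrow \big(\bigwedge \gamma\big)}$, which holds because in a lattice $\beta \preceq a$ for all $a \in \gamma$ iff $\beta \preceq \bigwedge \gamma$; since each $a \in \gamma \subseteq \alpha^-$ lies below $\alpha$, the meet $\bigwedge\gamma$ also lies in $\msc{A}(\mb{s})$ and below $\alpha$. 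By \eq{icap}, $\mu({\downarrow \bigwedge\gamma}) = i_\cap^\pm(\bigwedge\gamma \ra t)$, and substituting the resulting expression for the correction term back into \eq{ipartial} yields exactly the claimed closed form.

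The main obstacle is purely lattice-theoretic rather than analytic: I must be sure that $\langle \msc{A}(\mb{s}), \preceq \rangle$ genuinely supports these manipulations---that every element strictly below $\alpha$ is dominated by a cover of $\alpha$ (needing finiteness and the cover structure), and that arbitrary meets of subsets of $\alpha^-$ exist, lie in $\msc{A}(\mb{s})$, and satisfy the meet-of-ideals identity. All of this is guaranteed once $\langle \msc{A}(\mb{s}), \preceq \rangle$ is known to be a (finite) lattice, which is exactly what is established in \appRef{lattices}; the remaining bookkeeping---tracking the $(-1)^{|\gamma|-1}$ sign and confirming that the additivity of $\mu$ justifies the inclusion--exclusion expansion---is routine.
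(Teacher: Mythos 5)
Your proposal is correct and follows essentially the same route as the paper's proof: the paper also defines the additive set function $f^\pm(\msc{B})=\sum_{\beta\in\msc{B}} i_\partial^\pm(\beta\ra t)$ (your $\mu$), writes $i_\partial^\pm(\alpha\ra t)=f^\pm({\downarrow\alpha})-f^\pm\big(\bigcup_{\beta\in\alpha^-}{\downarrow\beta}\big)$, applies inclusion--exclusion, and invokes the same identity $\bigcap_{a\in\gamma}{\downarrow a}={\downarrow}\big(\bigwedge\gamma\big)$ from Davey and Priestley. The only difference is that you explicitly justify $\{\beta:\beta\prec\alpha\}=\bigcup_{a\in\alpha^-}{\downarrow a}$ via maximal chains, a step the paper takes for granted.
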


\begin{proof}
  For $\msc{B} \subseteq \msc{A}(\mb{s})$, define the sub-addative function
  $f^\pm(\msc{B})=\sum_{\beta \in \msc{B}} = i^\pm(\beta \ra t)$.  From \eq{icap}, we get that
  $i_\cap^\pm(\alpha \ra t) = f^\pm(\downarrow \alpha)$ and
  \begin{equation}
    i_\partial^\pm(\alpha \ra t)
      = f^\pm(\downarrow \alpha) - f^\pm(\overset{\textbf{.}}{\downarrow} \alpha)
      = f^\pm(\downarrow \alpha) - f^\pm(\bigcup_{\beta \in \alpha^-} \downarrow \beta).
  \end{equation}
  By the principle of inclusion-exclusion (e.g.\ see \citep[p.~195]{stanley2012}) we get that
  \begin{align}
    i_\partial^\pm(\alpha \ra t)
      &= f^\pm(\downarrow \alpha) \,
        - \!\!\!\! \sum_{\emptyset \not= \gamma \subseteq \alpha^-} \!\!\! (-1)^{|\gamma|-1} \;
         f^\pm(\bigcap_{\beta\in \gamma}\beta) \\
    \intertext{For any lattice $L$ and $A \subseteq L$, we have that
      ${\cap_{a \in A} \downarrow\! a = \;\downarrow\! (\bigwedge A)}$ (see \citep[p.~57]{davey2002}), thus}
    i_\partial^\pm(\alpha \ra t)
      &= f^\pm(\downarrow \alpha) \,
        - \!\!\!\! \sum_{\emptyset \not= \gamma \subseteq \alpha^-} \!\!\! (-1)^{|\gamma|-1} \,
        f^\pm(\bigwedge \gamma) \\
      &= f^\pm(\downarrow \alpha)
        - \!\!\!\! \sum_{\emptyset \not= \gamma \subseteq \alpha^-} \!\!\! (-1)^{|\gamma|-1} \,
        i^\pm(\bigwedge \gamma \ra t)
  \end{align}
  as required.
\end{proof}

Similarly to PID, the specificity and ambiguity lattices provide a structure for information
decomposition---unique evaluation requires a separate definition of redundancy.  However, unlike PID
(or even PPID), this evaluation requires both a definition of pointwise redundant specificity and
pointwise redundant ambiguity.

\subsection{Redundancy Measures on the Lattices}
\label{app:red}

In \secRef{measure}, Definitions~\ref{def:specificity} and \ref{def:ambiguity} provided the require
measures.  This section will prove some of the key properties of these measures when they are
applies to the lattices derived in the previous section.  The correspondence with the approach taken
by Williams and Beer~\citep{williams2010, williams2010priv} continues in this section.  However,
sources events $\mb{a}_i$ are used in place of sources $\mb{A}_i$ and the measures $\rminpn$ are
used in place of $\imin$.  Note that the basic concepts from lattice theory and the notion used here
are the same as found in \citep[Appendix~B]{williams2010}.

\theoremsatisfy*

\begin{proof}
  Axioms~\ref{ax:symmetry}, \ref{ax:sr} and \ref{ax:local_comp} follow trivially from the basic
  properties of the minimum.  The main statement of Axiom~\ref{ax:mono} also immediately follows
  from the properties of the minimum; however, there is a need to verify the equality condition.  As
  such, consider $\mb{a}_k$ such that $\mb{a}_k \supseteq \mb{a}_i$ for some
  $\mb{a}_i \in \{\mb{a}_1, \ldots, \mb{a}_{k-1}\}$. From Postulate~\ref{post:chain_rule}, we have
  that $h(\mb{a}_k) \geq h(\mb{a}_i)$ and hence that
  $\min_{\mb{a}_j\in\{\mb{a}_1,\ldots,\mb{a}_{k}\}}h(\mb{a}_j) =
  \min_{\mb{a}_j\in\{\mb{a}_1,\ldots,\mb{a}_{k-1}\}}h(\mb{a}_j)$, as required for $\rminp$.
  \textit{Mutatis mutandis}, similar follows for $\rminn$.
\end{proof}

\theoremmonotonically*

The proof of this theorem will require the following Lemma.

\begin{Lemma}
  \label{lem:monotonicity}
  The specificity and ambiguity $i^\pm(\mb{a} \ra t)$ are increasing functions on the lattice
  $\big\langle \msc{P}_1(\mb{s}), \subseteq \big\rangle$
\end{Lemma}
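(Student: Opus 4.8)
The plan is to reduce everything to two facts already in the excerpt: the self-redundancy identification of specificity and ambiguity with pointwise entropies, and the chain rule. First I would recall that a source event $\mb{a}\in\msc{P}_1(\mb{s})$ is a set of predictor events read as the joint event in which all of its constituents co-occur, so that by Axiom~\ref{ax:sr} and the closed forms \eq{specificity}--\eq{ambiguity} we have $i^+(\mb{a}\ra t)=h(\mb{a})=-\log p(\mb{a})$ and $i^-(\mb{a}\ra t)=h(\mb{a}|t)=-\log p(\mb{a}|t)$. In particular, each $i^\pm$ is just the pointwise entropy of a joint predictor event, so it suffices to show that these entropies increase as the joint event accrues more constituent predictor events.

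Next, I would take any $\mb{a},\mb{b}\in\msc{P}_1(\mb{s})$ with $\mb{a}\subseteq\mb{b}$ and write $\mb{c}=\mb{b}\sm\mb{a}$, so that $\mb{b}=\mb{a}\cup\mb{c}$ as a joint event. Using the chain rule of Postulate~\ref{post:chain_rule} (equivalently, the factorisation $p(\mb{b})=p(\mb{a})\,p(\mb{c}|\mb{a})$ applied to $h=-\log p$),
\begin{align*}
  h(\mb{b})   &= h(\mb{a})   + h(\mb{c}|\mb{a}),   \\
  h(\mb{b}|t) &= h(\mb{a}|t) + h(\mb{c}|\mb{a},t).
\end{align*}
I would then observe that the two additional conditional terms are non-negative, since they are pointwise entropies $-\log p$ with a conditional probability $p\le 1$; equivalently, enlarging the conjunction can only shrink the underlying event, so $p(\mb{b})\le p(\mb{a})$ and $p(\mb{b}|t)\le p(\mb{a}|t)$, and $-\log$ is decreasing. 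Either way $h(\mb{b})\ge h(\mb{a})$ and $h(\mb{b}|t)\ge h(\mb{a}|t)$, i.e. $i^+(\mb{a}\ra t)\le i^+(\mb{b}\ra t)$ and $i^-(\mb{a}\ra t)\le i^-(\mb{b}\ra t)$, which is exactly monotone increase on $\big\langle\msc{P}_1(\mb{s}),\subseteq\big\rangle$.

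There is no genuinely hard step here; the only point requiring care is the \emph{direction} of the order. Because a larger set of predictor events corresponds to a more constrained, smaller-probability joint event, set inclusion $\mb{a}\subseteq\mb{b}$ translates into event containment of the conjunctions and hence into an \emph{increase} of the unsigned entropic components rather than a decrease. Keeping this correspondence straight is what makes the lemma the right stepping stone for Theorem~\ref{thm:monotonically}: since $\rminp$ and $\rminn$ are minima of these increasing functions over the source events appearing at each node, the ordering behaviour of $i^\pm$ along $\big\langle\msc{P}_1(\mb{s}),\subseteq\big\rangle$ is precisely what will let the minima inherit monotonicity along $\big\langle\msc{A}(\mb{s}),\preceq\big\rangle$.
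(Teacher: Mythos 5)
Your argument is correct and is exactly the route the paper takes: the paper's proof consists of the single line ``Follows trivially from Postulate~\ref{post:chain_rule},'' and your expansion via $h(\mb{b})=h(\mb{a})+h(\mb{c}|\mb{a})$ (and its conditional analogue) together with the non-negativity of the added pointwise entropy terms is precisely the content of that remark. The auxiliary observation about the direction of the order ($p(\mb{b})\le p(\mb{a})$ with $-\log$ decreasing) is a valid equivalent phrasing, not a different method.
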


\begin{proof}
  Follows trivially from Postulate~\ref{post:chain_rule}.
\end{proof}

\begin{proof}[Proof of Theorem~\ref{thm:monotonically}]
  Assume there exists $\alpha,\,\beta \in \msc{A}(\mb{s})$ such that $\alpha \prec \beta$ and
  $\rminpn(\beta \ra t) < \rminpn(\alpha \ra t)$.  By definition, i.e.\ \eq{red_specificity} and
  \eq{red_ambiguity}, there exists $\mb{b} \in \beta$ such that
  $i^\pm(\mb{b} \ra t) < i^\pm(\mb{a} \ra t)$ for all $\mb{a} \in \alpha$.  Hence, by
  Lemma~\ref{lem:monotonicity}, there does not exist $\mb{a} \in \alpha$ such that
  $\mb{a} \subseteq \mb{b}$.  However, by assumption $\alpha \prec \beta$ and hence there exists
  $\mb{a} \in \alpha$ such that $\mb{a} \subseteq \mb{b}$, which is a contradiction.
\end{proof}

\begin{Theorem}
  \label{thm:closed_form}
  When using $\rminpn$ in place of the general redundancy measures $i_\cap^\pm$, we have the
  following closed-from expression for the partial specificity $\pi^+$ and partial ambiguity
  $\pi^-$,
  \begin{equation}
    \pi^\pm (\alpha \ra t) = \rminpn (\alpha \ra t)
    - \max_{\beta \in \alpha^-} \; \min_{\mb{b} \in \beta} \; i^\pm(\mb{b} \ra t). 
  \end{equation}
\end{Theorem}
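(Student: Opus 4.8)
The plan is to specialise the general inclusion--exclusion formula of Theorem~\ref{thm:closed_form_gen} to the minimum-based measures $\rminpn$ and then collapse the resulting alternating sum into a single maximum. Setting $i_\cap^\pm = \rminpn$ in Theorem~\ref{thm:closed_form_gen} immediately gives
\[
  \pi^\pm(\alpha \ra t) = \rminpn(\alpha \ra t)
    - \!\!\! \sum_{\emptyset \not= \gamma \subseteq \alpha^-} \!\!\!
      (-1)^{|\gamma|-1}\, \rminpn\big( \textstyle\bigwedge \gamma \ra t \big),
\]
so the entire task reduces to showing that this alternating sum equals $\max_{\beta \in \alpha^-} \min_{\mb{b} \in \beta} i^\pm(\mb{b} \ra t)$.

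The key step, and the part I expect to be the main obstacle, is to evaluate $\rminpn$ on the meets $\bigwedge \gamma$. First I would compute the meet in $\big\langle \msc{A}(\mb{s}), \preceq \big\rangle$: unfolding \eq{partial_order}, a collection $\delta$ lies below every $\beta \in \gamma$ precisely when every element of every $\beta$ contains, as a superset, some element of $\delta$, and the greatest such $\delta$ is the antichain formed by the minimal elements, under $\subseteq$, of the union $\bigcup_{\beta \in \gamma} \beta$. Since $i^\pm(\mb{a} \ra t)$ is increasing under $\subseteq$ by Lemma~\ref{lem:monotonicity}, discarding the non-minimal elements of this union leaves the minimum unchanged, and a minimum over a union is the minimum of the inner minima; hence
\[
  \rminpn\big( \textstyle\bigwedge \gamma \ra t \big)
    = \min_{\mb{a} \in \bigcup_{\beta \in \gamma} \beta} i^\pm(\mb{a} \ra t)
    = \min_{\beta \in \gamma} \rminpn(\beta \ra t).
\]

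With this identity in hand, writing $x_\beta = \rminpn(\beta \ra t)$ the remaining sum is exactly the max--min form of the inclusion--exclusion principle,
\[
  \sum_{\emptyset \not= \gamma \subseteq \alpha^-} (-1)^{|\gamma|-1}
    \min_{\beta \in \gamma} x_\beta = \max_{\beta \in \alpha^-} x_\beta,
\]
the dual of the familiar relation generalising $\max(a,b) = a + b - \min(a,b)$. I would verify it by ordering the $x_\beta$ and grouping the subsets $\gamma$ by their minimiser (with a fixed tie-break): the signed count of subsets sharing a given minimiser telescopes to a power of $(1-1)$ and so vanishes for every minimiser except the global maximum. Substituting back and recalling $\rminpn(\beta \ra t) = \min_{\mb{b} \in \beta} i^\pm(\mb{b} \ra t)$ then delivers the claimed closed form.

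Finally, the degenerate case $\alpha^- = \emptyset$ (the bottom node of the lattice) must be handled separately: there the sum is empty and the formula correctly reduces to $\pi^\pm(\alpha \ra t) = \rminpn(\alpha \ra t)$, consistent with the self-redundancy Axiom~\ref{ax:sr}. The only genuinely delicate point throughout is the meet computation and its interaction with monotonicity; once the identity $\rminpn(\bigwedge \gamma \ra t) = \min_{\beta \in \gamma} \rminpn(\beta \ra t)$ is secured, the collapse of the alternating sum is purely combinatorial.
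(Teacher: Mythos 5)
Your proposal is correct and follows essentially the same route as the paper's proof: specialise the general inclusion--exclusion formula of Theorem~\ref{thm:closed_form_gen}, reduce $\rminpn$ on the meet $\bigwedge\gamma$ to $\min_{\beta\in\gamma}\rminpn(\beta\ra t)$ via the identity $\alpha\wedge\beta=\underline{\alpha\cup\beta}$ together with monotonicity, and then collapse the alternating sum with the maximum--minimums identity. The only difference is that you supply self-contained derivations of the meet formula and the max--min identity where the paper cites Williams--Beer and Ross, respectively.
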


\begin{proof}
  Let $i_\cap^+=\rminp$ and $i_\cap^-=\rminn$ in the general closed form expression for
  $i_\partial^\pm$ in Theorem~\ref{thm:closed_form_gen},
  \begin{equation}
    \pi^\pm(\alpha \ra t) = \rminpn (\alpha \ra t) \,
      - \!\!\!\! \sum_{\emptyset \not= \gamma \subseteq \alpha^-} \!\!\! (-1)^{|\gamma|-1}  
      \min_{\mb{b} \in \bigwedge \gamma} \; i^\pm( \mb{b} \ra t).
  \end{equation}
  Since $\alpha \wedge \beta = \underline{\alpha \cup \beta}$ (see \citep[Eq.~23]{williams2010}),
  and by Postulate~\ref{post:chain_rule}, we have that
  \begin{equation}
    \pi^\pm(\alpha \ra t) = \rminpn (\alpha \ra t) \,
      - \!\!\!\! \sum_{\emptyset \not= \gamma \subseteq \alpha^-} \!\!\!  (-1)^{|\gamma|-1}
      \min_{\beta \in \gamma} \; \min_{\mb{b} \in \beta} \; i^\pm( \mb{b} \ra t).
  \end{equation}
  By the maximum-minimums identity (see \citep{ross2009}), we have that,
  $\max \alpha^- = \sum_{\emptyset \not= \gamma \subseteq \alpha^-} (-1)^{|\gamma|-1} \; \min
  \gamma$, and hence
  \begin{equation}
    \pi^\pm (\alpha \ra t) = \rminpn (\alpha \ra t)
      - \max_{\beta \in \alpha^-} \; \min_{\mb{b} \in \beta} \; i^\pm(\alpha \ra t).    
  \end{equation}
  as required.
\end{proof}

\theoremnonnegativity*

\begin{proof}
  It $\alpha = \perp$, the $\pi^\pm(\alpha \ra t) = \rminpn \geq 0$ by the non-negativity of
  entropy.  If $\alpha \not= \perp$, assume there exists $\alpha \in \msc{A}(\mb{s})\sm\{\perp\}$
  such that $\pi^\pm(\alpha \ra t) < 0$.  By Theorem~\ref{thm:closed_form},
   \begin{equation}
    \pi^\pm (\alpha \ra t) = \min_{\mb{a} \in \alpha} \; i^\pm(\mb{a} \ra t)
      - \max_{\beta \in \alpha^-} \; \min_{\mb{b} \in \beta} \; i^\pm(\mb{b} \ra t). 
  \end{equation}
  From this it can be seen that there must exist $\beta \in \alpha^-$ such that for all
  $\mb{b} \in \beta$, we have that $i^\pm(\mb{a} \ra t) < i^\pm(\mb{b} \ra t)$ for some
  $\mb{a} \in \alpha$.  By Postulate~\ref{post:chain_rule} there does not exist $\mb{b} \in \beta$
  such that $\mb{b} \subset \mb{a}$.  However, since by definition, $\beta \prec \alpha$ there
  exists $\mb{b} \in \beta$ such that $\mb{b} \subset \mb{a}$, which is a contradiction.
\end{proof}

\theoremnonnegativityavg*

\begin{proof}
  The proof is by the counter-example using \imprdn{}.
\end{proof}

\subsection{Target Chain Rule}
\label{app:tcr}

By using the appropriate conditional probabilities in Definitions~\ref{def:specificity} and
\ref{def:ambiguity}, one can easily obtain the conditional pointwise redundant specificity,
\begin{equation}
  \label{eq:cond_red_specificity}
  \rminp \big( \mb{a}_1, \dots, \mb{a}_k \ra t_1 | t_2 \big) = \min_{\mb{a}_i} h(\mb{a}_i | t_2),
\end{equation}
or the conditional pointwise redundant ambiguity,
\begin{equation}
  \label{eq:cond_red_ambiguity}
  \rminn \big( \mb{a}_1, \dots, \mb{a}_k \ra t_1 | t_2 \big)
    = \min_{\mb{a}_j} h(\mb{a}_j | t_{1,2}).   
\end{equation}
As per \eq{local_recomp} these could be recombined, e.g.\ via \eq{local_recomp}, to obtain the
conditional redundant information,
\begin{equation}
  \label{eq:conditional_two_predictor_recomposition}
  \rmin \big( \mb{a}_1, \dots, \mb{a}_k \ra t_1 | t_2 \big)
  = \rminp \big( \mb{a}_1, \dots, \mb{a}_k \ra t_1 | t_2 \big)
  - \rminn \big( \mb{a}_1, \dots, \mb{a}_k \ra t_1 | t_2 \big).
\end{equation}
The relationship between the regular forms and the conditional forms of the redundant specificity
and redundant ambiguity has some important consequences. 

\begin{Proposition}
  \label{cor:cond_specificity_form}
  The conditional pointwise redundant specificity provided by $\mb{a}_1,\dots,\mb{a}_k$ about $t_1$
  given $t_2$ is equal to pointwise redundant ambiguity provided by $\mb{a}_1,\dots,\mb{a}_k$
  about $t_2$ with
  the conditioned variable,
  \begin{equation}
    \rminp \big( \mb{a}_1, \dots, \mb{a}_k \ra t_1 | t_2 \big)
      = \rminn \big( \mb{a}_1, \dots, \mb{a}_k \ra t_2 \big).
  \end{equation}
\end{Proposition}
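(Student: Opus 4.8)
The plan is to establish the identity by directly comparing the two closed forms, since the substance of the proposition is a structural coincidence between conditioned specificity and ambiguity rather than any nontrivial estimate. First I would write the left-hand side using the conditional redundant specificity of \eq{cond_red_specificity}, namely $\rminp(\mb{a}_1,\dots,\mb{a}_k \ra t_1 | t_2) = \min_{\mb{a}_i} h(\mb{a}_i | t_2)$. Next I would specialise the definition of redundant ambiguity \eq{red_ambiguity} to the target event $t_2$, obtaining $\rminn(\mb{a}_1,\dots,\mb{a}_k \ra t_2) = \min_{\mb{a}_j} h(\mb{a}_j | t_2)$. As both minima run over the same collection of source events and minimise the identical quantity $h(\cdot\,|\,t_2)$, the two sides agree and the result follows at once.

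The conceptual reason this holds---the point I would emphasise---is the asymmetry between the two entropic components recorded in \eq{specificity}--\eq{ambiguity}: the specificity $i^+(\mb{a} \ra t) = h(\mb{a})$ carries no dependence on the target event, whereas the ambiguity $i^-(\mb{a} \ra t) = h(\mb{a} | t)$ does. Conditioning the specificity about $t_1$ on the event $t_2$ therefore merely adjoins $t_2$ as a conditioning variable, converting the target-independent quantity $h(\mb{a})$ into $h(\mb{a} | t_2)$, which is exactly the form taken by the ambiguity about $t_2$. In the language of probability mass exclusions from \secRef{two_types_excl}, the specificity tracks the total exclusion $p(\ob{s})$ irrespective of which target occurred, while the ambiguity tracks only the misinformative exclusion $p(\ob{s},t)$ confined to the realised target; restricting the total exclusion to the subspace in which $t_2$ occurs isolates precisely the misinformative exclusion associated with $t_2$, which is why the conditional specificity and the ambiguity coincide.

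The only step demanding care---though it is not truly an obstacle---is justifying that the conditional redundant specificity has the form $\min_{\mb{a}_i} h(\mb{a}_i | t_2)$ used above. This rests on the prescription preceding \eq{cond_red_specificity}, that the conditional measures arise by substituting the appropriate conditional probabilities into Definitions~\ref{def:specificity} and~\ref{def:ambiguity}; since the specificity of a source event is $h(\mb{a}_i) = -\log p(\mb{a}_i)$, the substitution replaces $p(\mb{a}_i)$ by $p(\mb{a}_i | t_2)$ while leaving the minimisation over source events unchanged. With that form in hand, the equality of the two expressions $\min_{\mb{a}_i} h(\mb{a}_i | t_2)$ is immediate.
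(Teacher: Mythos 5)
Your proposal is correct and follows the same route as the paper, whose proof is simply the citation ``By \eq{red_ambiguity} and \eq{cond_red_specificity}'': both sides reduce to $\min_{\mb{a}_i} h(\mb{a}_i \,|\, t_2)$ by direct substitution into the definitions. Your additional remarks on why the conditional specificity takes that form, and on the exclusion-based interpretation, are consistent with the paper's discussion but not needed for the argument.
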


\begin{proof}
  By \eq{red_ambiguity} and \eq{cond_red_specificity}.
\end{proof}

\begin{Proposition}
  \label{cor:red_spec_indpt_target}
  The pointwise redundant specificity provided by $\mb{a}_1,\dots,\mb{a}_k$ is independent of the
  target event and even the target variable itself,
  \begin{equation}
    \rminp \big( \mb{a}_1, \dots, \mb{a}_k \ra t_1 \big)
    = \rminp \big( \mb{a}_1, \dots, \mb{a}_k \ra t_2 \big)
    \qquad \forall\ t_1, t_2, T_1, T_2.
  \end{equation}
\end{Proposition}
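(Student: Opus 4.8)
The plan is to read the result off almost directly from Definition~\ref{def:specificity}, exploiting the fact — already isolated in the discussion following \eq{specificity} — that the specificity $i^+(\mb{a}_i \ra t)$ carries no dependence on the target event whatsoever. All of the target-dependence in the decomposition of the pointwise mutual information has been absorbed into the ambiguity $i^-$, leaving the specificity as a pure function of the source event.

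First I would recall that, by Definition~\ref{def:specificity} together with \eq{specificity},
\[
  \rminp \big( \mb{a}_1, \dots, \mb{a}_k \ra t \big)
    = \min_{\mb{a}_i}\, i^+(\mb{a}_i \ra t)
    = \min_{\mb{a}_i}\, h(\mb{a}_i)
    = \min_{\mb{a}_i}\, \big( -\log p(\mb{a}_i) \big).
\]
Then I would observe that each term $h(\mb{a}_i) = -\log p(\mb{a}_i)$ entering the minimisation is a function of the marginal probability of the source event $\mb{a}_i$ alone, with no reference to the realised target event nor to which target variable is being predicted. Because every term in the minimand is target-independent, the minimum itself is target-independent, and hence so is $\rminp$.

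To conclude, since this expression evaluates identically for any target event, the equality
\[
  \rminp \big( \mb{a}_1, \dots, \mb{a}_k \ra t_1 \big)
    = \rminp \big( \mb{a}_1, \dots, \mb{a}_k \ra t_2 \big)
\]
holds for all $t_1, t_2$ drawn from any target variables $T_1, T_2$, as claimed.

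There is essentially no obstacle in this argument; the statement is an immediate structural consequence of the asymmetric form \eq{specificity}. The only point warranting a moment's care is to confirm that $\rminp$ is defined via the unconditional specificity $h(\mb{a}_i)$ and not via some conditional quantity (which \emph{would} introduce target-dependence) — and this is precisely what Definition~\ref{def:specificity} stipulates. I note that the analogous statement fails for $\rminn$, since by Definition~\ref{def:ambiguity} the redundant ambiguity minimises $h(\mb{a}_j \mid t)$, which genuinely depends on $t$; this asymmetry is exactly what underlies the target chain rule of Theorem~\ref{thm:tar_cr} and the remarks at the close of \secRef{measure}.
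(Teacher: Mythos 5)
Your argument is correct and is essentially the paper's own proof, which simply observes that the claim follows by inspection of \eq{red_specificity}: since $\rminp = \min_{\mb{a}_i} h(\mb{a}_i)$ involves only the marginal probabilities of the source events, it carries no target dependence. Your additional remark contrasting this with $\rminn$ is accurate but not needed for the proof itself.
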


\begin{proof}
  By inspection of \eq{red_specificity}.
\end{proof}

\begin{Proposition}
  \label{cor:cond_ambiguity_form}
  The conditional pointwise redundant ambiguity provided by $\mb{a}_1,\dots,\mb{a}_k$ about $t_1$
  given $t_2$ is equal to the pointwise redundant ambiguity provided by $\mb{a}_1,\dots,\mb{a}_k$
  about $t_{1,2}$,
  \begin{equation}
    \rminn \big( \mb{a}_1, \dots, \mb{a}_k \ra t_1 | t_2 \big)
      = \rminn \big( \mb{a}_1, \dots, \mb{a}_k \ra t_{1,2} \big).
  \end{equation}
\end{Proposition}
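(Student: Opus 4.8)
The plan is to show that both sides of the claimed identity reduce to the \emph{same} minimisation of pointwise conditional entropies, so that the proposition is essentially definitional. The only point requiring care is to make explicit what ``conditioning the ambiguity on $t_2$'' amounts to.

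First I would recall from \eq{ambiguity} that the ambiguity is a pointwise conditional entropy, $i^-(\mb{a} \ra t_1) = h(\mb{a} | t_1)$. Conditioning this on a further target event $t_2$---exactly as in the conditional forms underlying Postulate~\ref{post:chain_rule}---simply enlarges the conditioning set, so that
\begin{equation*}
  i^-(\mb{a} \ra t_1 | t_2) = h(\mb{a} | t_1, t_2) = h(\mb{a} | t_{1,2}).
\end{equation*}
Minimising over the source events $\mb{a}_1,\ldots,\mb{a}_k$ then reproduces \eq{cond_red_ambiguity}, namely
\begin{equation*}
  \rminn \big( \mb{a}_1, \dots, \mb{a}_k \ra t_1 | t_2 \big) = \min_{\mb{a}_j} h(\mb{a}_j | t_{1,2}).
\end{equation*}

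Next I would apply Definition~\ref{def:ambiguity} (i.e.\ \eq{red_ambiguity}) directly to the \emph{joint} target event $t_{1,2}$, which gives
\begin{equation*}
  \rminn \big( \mb{a}_1, \dots, \mb{a}_k \ra t_{1,2} \big) = \min_{\mb{a}_j} h(\mb{a}_j | t_{1,2}).
\end{equation*}
Since the two displayed right-hand sides are the identical minimisation of $h(\mb{a}_j | t_{1,2})$ over the same collection of source events, the two quantities coincide, which is the assertion of the proposition; the paper's proof is accordingly a one-line citation of \eq{cond_red_ambiguity} and \eq{red_ambiguity}.

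I expect no genuine obstacle here, as the result is immediate from the definitions. The conceptual content worth emphasising is that the ambiguity is inherently a posterior quantity---the residual uncertainty in the source given the realised target---so conditioning it on an additional target event is indistinguishable from enlarging the target event to the joint event $t_{1,2}$. This is precisely what makes the ambiguity behave oppositely to the specificity under target conditioning: by Proposition~\ref{cor:red_spec_indpt_target} the redundant specificity is target-independent, whereas here the redundant ambiguity absorbs the conditioning into the joint target event.
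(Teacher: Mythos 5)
Your proposal is correct and follows exactly the paper's own route: the paper's proof is the one-line citation ``By \eqref{eq:red_ambiguity} and \eqref{eq:cond_red_ambiguity}'', i.e.\ both sides reduce to the identical minimisation $\min_{\mb{a}_j} h(\mb{a}_j \,|\, t_{1,2})$ over the source events. Your additional remark that conditioning the ambiguity on a further target event is the same as enlarging the target to the joint event $t_{1,2}$ is a faithful unpacking of why \eqref{eq:cond_red_ambiguity} is defined as it is, not a departure from the paper's argument.
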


\begin{proof}
  By \eq{red_ambiguity} and \eq{cond_red_ambiguity}.
\end{proof}

Note that specificity itself is not a function of the target event or variable.  Hence, all of the
target dependency is bound up in the ambiguity. Now consider the following.

\theoremptcr*

\begin{proof}
  Starting from $\rmin$, by \cor{red_spec_indpt_target} and \cor{cond_ambiguity_form} we get that
  \begin{align}
    \rmin \big( \mb{a}_1,\ldots,\mb{a}_k \ra t_{1,2} \big)
    &= \rminp \big( \mb{a}_1,\ldots,\mb{a}_k \ra t_{1,2} \big)
      - \rminn \big( \mb{a}_1,\ldots,\mb{a}_k \ra t_{1,2}\big), \nn\\
    &= \rminp \big( \mb{a}_1,\ldots,\mb{a}_k \ra t_1 \big)
      - \rminn \big( \mb{a}_1,\ldots,\mb{a}_k \ra t_2 | t_1 \big),
    \label{eq:change_of_two_rmins}
  \end{align}
Then, by \cor{cond_specificity_form} we get that
  \begin{align}
    \begin{split}
      \rmin \big( \mb{a}_1,\ldots,\mb{a}_k \ra t_{1,2} \big)
      ={}& \rminp \big( \mb{a}_1,\ldots,\mb{a}_k \ra t_1 \big)
        - \rminn \big( \mb{a}_1,\ldots,\mb{a}_k \ra t_1 \big)  \\
      &\quad + \rminn \big( \mb{a}_1,\ldots,\mb{a}_k \ra t_1 \big)
        - \rminn \big( \mb{a}_1,\ldots,\mb{a}_k \ra t_2 | t_1 \big),
    \end{split} \nn\\
    \begin{split}
      ={}& \rminp \big( \mb{a}_1,\ldots,\mb{a}_k \ra t_1 \big)
      - \rminn \big( \mb{a}_1,\ldots,\mb{a}_k \ra t_1 \big)  \\
      &\quad + \rminp \big( \mb{a}_1,\ldots,\mb{a}_k \ra t_2 | t_1 \big)
      - \rminn \big( \mb{a}_1,\ldots,\mb{a}_k \ra t_2 | t_1 \big),
    \end{split}\nn\\
    ={}& \rmin \big( \mb{a}_1,\ldots,\mb{a}_k \ra t_1 \big)
         + \rmin \big( \mb{a}_1,\ldots,\mb{a}_k \ra t_2 | t_1 \big),
  \end{align}
  as required for \eq{target_chain_rule_1}.  \textit{Mutatis mutandis}, we obtain
  \eq{target_chain_rule_2}.
\end{proof}

\theoremnopos*

\begin{proof}
  Consider the probability distribution \tbc{}, and in particular, the isomorphic probability
  distributions $P(T_{1,2})$ and $P(T_{1,3})$.  By the identity property,
  \begin{align}
    \label{eq:tbc_result}
    U(S_1 \sm S_2 \ra T_{1,2}) &= 1\bit,  &  U(S_2 \sm S_1 \ra T_{1,2}) &= 1\bit, 
  \intertext{and hence, $R(S_1,S_2 \ra T_{1,2}) = 0\bit$. On the other hand, by local positivity,}
    C(S_1,S_2 \ra T_3) &= 1\bit,  &  R(S_1, S_2 \ra T_1 | T_3) &= 1\bit
  \intertext{Then by the target chain rule,}
    C(S_1,S_2 \ra T_{1,3}) &= 1\bit & R(S_1,S_2 \ra T_{1,3}) &= 1\bit,
  \end{align}
  Finally, since $P(T_{1,2})$ is isomorphic to $P(T_{1,3})$ we have that,
  $R(S_1,S_2 \ra T_{1,3}) = R(S_1,S_2 \ra T_{1,2})$, which is a contradiction.
\end{proof}

Theorem~\ref{thm:no_pos} can be informally generalised as follows:\ it is not possible to
simultaneously satisfy the target chain rule, the identity property, and have only $\CI=1\bit$ in
the probability distribution \xor{} without having negative (average) PI atoms in probability
distributions where there is no ambiguity from any source.  To see this, again consider decomposing
the isomorphic probability distributions $P(T_{1,2})$ and $P(T_{1,3})$.  In line with
\eq{tbc_result}, decomposing $T_{1,2}$ via the identity property yields
$C(S_1,S_2 \ra T_{1,2}) = 0\bit$.  On the other hand, decomposing $T_{1,3}$ yields
${C(S_1,S_2 \ra T_3) = 1\bit}$.  Since $P(T_{1,2})$ is isomorphic to $P(T_{1,3})$, the target chain
rule requires that,
\begin{align}
  C(S_1,S_2 \ra T_{1} | T_{3}) &= -1\bit,
  & U(S_1 \sm S_2 \ra T_{1} | T_{3}) &= 1\bit, & U(S_2 \sm S_1 \ra T_{1} | T_{3}) &= 1\bit.
\end{align}
That is, one would have to accept the negative (average) PI atom $C(S_1,S_2 \ra T_{1}|T_{3})=-1\bit$
despite the fact that there are no non-zero pointwise ambiguity terms upon splitting any of
$i(s_1;t_1|t_3)$, $i(s_2;t_1|t_3)$ and $i(s_{1,2};t_1|t_3)$ into specificity and ambiguity.
Although this does not constitute a formal proof that the identity property is incompatible with the
target chain rule, one would have to accept and find a way to justify
${C(S_1,S_2 \ra T_{1}|T_{3})=-1\bit}$.  Since there is no ambiguity in $i(s_1;t_1|t_3)$,
$i(s_2;t_1|t_3)$ and $i(s_{1,2};t_1|t_3)$, this result is not reconcilable within the framework of
specificity and ambiguity.

%
%


\section{Additional Example Probability Distributions}
\label{app:additional_examples}

\subsection{Probability Distribution \tbp{}}
\label{sec:tbp}

\fig{tbp} shows the probability distribution three bit--even parity (\tbp{}) which considers binary
predictors variables $S_1$, $S_2$ and $S_3$ which are constrained such that together their parity is
even.  The target variable $T$ is simply a copy of the predictors, i.e.\ $T=T_{1,2,3}=(T_1,T_2,T_3)$
where $T_1=S_1$, $T_2=S_2$ and $T_3=S_3$.  (Equivalently, the target can be represented by any four
state variable $T$.) It was introduced by \citet{bertschinger2013} and revisited by \citet{rauh2014}
who (as mentioned in \secRef{tbc}) used it to prove the following by counter-example:\ there is no
measure of redundant average information for more than two predictor variables which simultaneously
satisfies the Williams and Beer Axioms, the identity property, and local positivity.  The measures
$\ired$, $\uitilde$ and $\ivk$ these properties.  Hence, this probability distribution which has
been used to demonstrate that these measures are not consistent with the PID framework in the
general case of an arbitrary number of predictor variables.

\begin{figure}[t]
  \vspace{-10pt}
 \centering 
  \begin{tikzpicture}

  \def\originy{0} 
  \def\height{3}
  \def\width{1.5}
  \def\overdrawn{0.15}
  \def\gap{2.85}

  
  \def\originx{0}
  
  \draw[black, line width=1pt] (\originx,\originy)
    -- (\originx,\originy+\height);
  \draw[black, line width=1pt] (\originx-\overdrawn,\originy+\height)
    -- (\originx+\width+\overdrawn,\originy+\height);
  \draw[black, line width=1pt] (\originx+\width,\originy+\height)
    -- (\originx+\width,\originy);
  \draw[black, line width=1pt] (\originx+\width+\overdrawn,\originy)
    -- (\originx-\overdrawn,\originy);

  \draw[black, line width=1pt] (\originx-\overdrawn,\originy+3/4*\height)
    -- (\originx+\width+\overdrawn,\originy+3/4*\height);
  \draw[black, line width=1pt] (\originx-\overdrawn,\originy+\height/2)
    -- (\originx+\width+\overdrawn,\originy+\height/2);
  \draw[black, line width=1pt] (\originx-\overdrawn,\originy+1/4*\height)
  -- (\originx+\width+\overdrawn,\originy+1/4*\height);
  
  \node at (\originx,7/8*\height)[anchor=east] {$0$};
  \node at (\originx,5/8*\height)[anchor=east] {$1$};
  \node at (\originx,3/8*\height)[anchor=east] {$2$}; 
  \node at (\originx,1/8*\height)[anchor=east] {$3$}; 
  \node at (\originx+\width,7/8*\height)[anchor=west] {$000$};
  \node at (\originx+\width,5/8*\height)[anchor=west] {$011$};
  \node at (\originx+\width,3/8*\height)[anchor=west] {$101$};
  \node at (\originx+\width,1/8*\height)[anchor=west] {$110$};
  \node at (\originx+\width/2,\originy+\height)[anchor=south] {$P(S_{1,2,3},T)$};
  \node at (\originx+\width/2,\originy+7/8*\height)[anchor=center] {$\nf{1}{4}$};
  \node at (\originx+\width/2,\originy+5/8*\height)[anchor=center] {$\nf{1}{4}$};
  \node at (\originx+\width/2,\originy+3/8*\height)[anchor=center] {$\nf{1}{4}$};
  \node at (\originx+\width/2,\originy+1/8*\height)[anchor=center] {$\nf{1}{4}$};
  
  
  \def\originxtwo{\gap}
  
  \draw[pattern=vertical lines, pattern color=red] (\originxtwo,\originy+1/4*\height) rectangle
    (\originxtwo+\width,\originy+1/2*\height);
  \draw[pattern=vertical lines, pattern color=red] (\originxtwo,\originy) rectangle
    (\originxtwo+\width,\originy+1/4*\height);
  
  \draw[black, line width=1pt] (\originxtwo,\originy)
    -- (\originxtwo,\originy+\height);
  \draw[black, line width=1pt] (\originxtwo-\overdrawn,\originy+\height)
    -- (\originxtwo+\width+\overdrawn,\originy+\height);
  \draw[black, line width=1pt] (\originxtwo+\width,\originy+\height)
    -- (\originxtwo+\width,\originy);
  \draw[black, line width=1pt] (\originxtwo+\width+\overdrawn,\originy)
    -- (\originxtwo-\overdrawn,\originy);

  \draw[black, line width=1pt] (\originxtwo-\overdrawn,\originy+3/4*\height)
    -- (\originxtwo+\width+\overdrawn,\originy+3/4*\height);
  \draw[black, line width=1pt] (\originxtwo-\overdrawn,\originy+\height/2)
    -- (\originxtwo+\width+\overdrawn,\originy+\height/2);
  \draw[black, line width=1pt] (\originxtwo-\overdrawn,\originy+1/4*\height)
  -- (\originxtwo+\width+\overdrawn,\originy+1/4*\height);

  \node at (\originxtwo,7/8*\height)[anchor=east] {$0$};
  \node at (\originxtwo,5/8*\height)[anchor=east] {$1$};
  \node at (\originxtwo,3/8*\height)[anchor=east] {$2$}; 
  \node at (\originxtwo,1/8*\height)[anchor=east] {$3$}; 
  \node at (\originxtwo+\width,7/8*\height)[anchor=west] {$000$};
  \node at (\originxtwo+\width,5/8*\height)[anchor=west] {$011$};
  \node at (\originxtwo+\width,3/8*\height)[anchor=west] {$101$};
  \node at (\originxtwo+\width,1/8*\height)[anchor=west] {$110$};
  \node at (\originxtwo+\width/2,\originy+\height)[anchor=south] {$S_1=0$};

  
  \def\originxthree{2*\gap}
  
  \draw[pattern=horizontal lines, pattern color=blue] (\originxthree,\originy+\height/2) rectangle
    (\originxthree+\width,\originy+3/4*\height);
  \draw[pattern=horizontal lines, pattern color=blue] (\originxthree,\originy) rectangle
    (\originxthree+\width,\originy+1/4*\height);
    
  \draw[black, line width=1pt] (\originxthree,\originy)
    -- (\originxthree,\originy+\height);
  \draw[black, line width=1pt] (\originxthree-\overdrawn,\originy+\height)
    -- (\originxthree+\width+\overdrawn,\originy+\height);
  \draw[black, line width=1pt] (\originxthree+\width,\originy+\height)
    -- (\originxthree+\width,\originy);
  \draw[black, line width=1pt] (\originxthree+\width+\overdrawn,\originy)
    -- (\originxthree-\overdrawn,\originy);

  \draw[black, line width=1pt] (\originxthree-\overdrawn,\originy+3/4*\height)
    -- (\originxthree+\width+\overdrawn,\originy+3/4*\height);
  \draw[black, line width=1pt] (\originxthree-\overdrawn,\originy+\height/2)
    -- (\originxthree+\width+\overdrawn,\originy+\height/2);
  \draw[black, line width=1pt] (\originxthree-\overdrawn,\originy+1/4*\height)
  -- (\originxthree+\width+\overdrawn,\originy+1/4*\height);
  
  \node at (\originxthree,7/8*\height)[anchor=east] {$0$};
  \node at (\originxthree,5/8*\height)[anchor=east] {$1$};
  \node at (\originxthree,3/8*\height)[anchor=east] {$2$}; 
  \node at (\originxthree,1/8*\height)[anchor=east] {$3$}; 
  \node at (\originxthree+\width,7/8*\height)[anchor=west] {$000$};
  \node at (\originxthree+\width,5/8*\height)[anchor=west] {$011$};
  \node at (\originxthree+\width,3/8*\height)[anchor=west] {$101$};
  \node at (\originxthree+\width,1/8*\height)[anchor=west] {$110$};
  \node at (\originxthree+\width/2,\originy+\height)[anchor=south] {$S_2=0$};

  
  \def\originxfour{3*\gap}
  
  \filldraw[fill=green, opacity=0.175] (\originxfour,\originy+1/4*\height) rectangle
    (\originxfour+\width,\originy+3/4*\height);  
    
  \draw[black, line width=1pt] (\originxfour,\originy)
    -- (\originxfour,\originy+\height);
  \draw[black, line width=1pt] (\originxfour-\overdrawn,\originy+\height)
    -- (\originxfour+\width+\overdrawn,\originy+\height);
  \draw[black, line width=1pt] (\originxfour+\width,\originy+\height)
    -- (\originxfour+\width,\originy);
  \draw[black, line width=1pt] (\originxfour+\width+\overdrawn,\originy)
    -- (\originxfour-\overdrawn,\originy);

  \draw[black, line width=1pt] (\originxfour-\overdrawn,\originy+3/4*\height)
    -- (\originxfour+\width+\overdrawn,\originy+3/4*\height);
  \draw[black, line width=1pt] (\originxfour-\overdrawn,\originy+\height/2)
    -- (\originxfour+\width+\overdrawn,\originy+\height/2);
  \draw[black, line width=1pt] (\originxfour-\overdrawn,\originy+1/4*\height)
  -- (\originxfour+\width+\overdrawn,\originy+1/4*\height);
  
  \node at (\originxfour,7/8*\height)[anchor=east] {$0$};
  \node at (\originxfour,5/8*\height)[anchor=east] {$1$};
  \node at (\originxfour,3/8*\height)[anchor=east] {$2$}; 
  \node at (\originxfour,1/8*\height)[anchor=east] {$3$}; 
  \node at (\originxfour+\width,7/8*\height)[anchor=west] {$000$};
  \node at (\originxfour+\width,5/8*\height)[anchor=west] {$011$};
  \node at (\originxfour+\width,3/8*\height)[anchor=west] {$101$};
  \node at (\originxfour+\width,1/8*\height)[anchor=west] {$110$};
  \node at (\originxfour+\width/2,\originy+\height)[anchor=south] {$S_3=0$};
  
  
  \def\originxfive{4*\gap}
  
  \filldraw[fill=green, opacity=0.175] (\originxfive,\originy+1/4*\height) rectangle
    (\originxfive+\width,\originy+3/4*\height);  
  \draw[pattern=vertical lines, pattern color=red] (\originxfive,\originy+1/4*\height) rectangle
    (\originxfive+\width,\originy+1/2*\height);
  \draw[pattern=vertical lines, pattern color=red] (\originxfive,\originy) rectangle
    (\originxfive+\width,\originy+1/4*\height);
  \draw[pattern=horizontal lines, pattern color=blue] (\originxfive,\originy+\height/2) rectangle
    (\originxfive+\width,\originy+3/4*\height);
  \draw[pattern=horizontal lines, pattern color=blue] (\originxfive,\originy) rectangle
    (\originxfive+\width,\originy+1/4*\height);

  \draw[black, line width=1pt] (\originxfive,\originy)
    -- (\originxfive,\originy+\height);
  \draw[black, line width=1pt] (\originxfive-\overdrawn,\originy+\height)
    -- (\originxfive+\width+\overdrawn,\originy+\height);
  \draw[black, line width=1pt] (\originxfive+\width,\originy+\height)
    -- (\originxfive+\width,\originy);
  \draw[black, line width=1pt] (\originxfive+\width+\overdrawn,\originy)
    -- (\originxfive-\overdrawn,\originy);

  \draw[black, line width=1pt] (\originxfive-\overdrawn,\originy+3/4*\height)
    -- (\originxfive+\width+\overdrawn,\originy+3/4*\height);
  \draw[black, line width=1pt] (\originxfive-\overdrawn,\originy+\height/2)
    -- (\originxfive+\width+\overdrawn,\originy+\height/2);
  \draw[black, line width=1pt] (\originxfive-\overdrawn,\originy+1/4*\height)
  -- (\originxfive+\width+\overdrawn,\originy+1/4*\height);
  
  \node at (\originxfive,7/8*\height)[anchor=east] {$0$};
  \node at (\originxfive,5/8*\height)[anchor=east] {$1$};
  \node at (\originxfive,3/8*\height)[anchor=east] {$2$}; 
  \node at (\originxfive,1/8*\height)[anchor=east] {$3$}; 
  \node at (\originxfive+\width,7/8*\height)[anchor=west] {$000$};
  \node at (\originxfive+\width,5/8*\height)[anchor=west] {$011$};
  \node at (\originxfive+\width,3/8*\height)[anchor=west] {$101$};
  \node at (\originxfive+\width,1/8*\height)[anchor=west] {$110$};
  \node at (\originxfive+\width/2,\originy+\height)[anchor=south] {$S_{1,2,3}=000$};

\end{tikzpicture} \vskip 1em \begin{tikzpicture}

  \def\ox{0}
  \def\oy{0}
  \def\gx{1.7}
  \def\gy{1.5}

  \tikzstyle{block} = [rectangle, draw=none, 
    text width=5em, text centered, minimum height=1em]

    \small
    
    
  \node [block] (abc)      at (\ox,       \oy+6*\gy) {$\{123\}$};  

  \node [block] (bc)       at (\ox+\gx,   \oy+5*\gy) {$\{23\}$};
  \node [block] (ac)       at (\ox,       \oy+5*\gy) {$\{13\}$};
  \node [block] (ab)       at (\ox-\gx,   \oy+5*\gy) {$\{12\}$};
  
  \node [block] (acIbc)    at (\ox+\gx,   \oy+4*\gy) {$\{13\}\{23\}$};
  \node [block] (abIbc)    at (\ox,       \oy+4*\gy) {$\{12\}\{23\}$};
  \node [block] (abIac)    at (\ox-\gx,   \oy+4*\gy) {$\{12\}\{13\}$};

  \node [block] (abIacIbc) at (\ox+2*\gx, \oy+3*\gy) {$\{12\}\{13\}\{23\}$};
  
  \node [block] (c)        at (\ox+\gx,   \oy+3*\gy) {$\{3\}$};
  \node [block] (b)        at (\ox,       \oy+3*\gy) {$\{2\}$};
  \node [block] (a)        at (\ox-\gx,   \oy+3*\gy) {$\{1\}$};   

  \node [block] (cIab)     at (\ox+\gx,   \oy+2*\gy) {$\{3\}\{12\}$};
  \node [block] (bIac)     at (\ox,       \oy+2*\gy) {$\{2\}\{13\}$};
  \node [block] (aIbc)     at (\ox-\gx,   \oy+2*\gy) {$\{1\}\{23\}$}; 

  \node [block] (bIc)      at (\ox+\gx,   \oy+1*\gy) {$\{2\}\{3\}$};
  \node [block] (aIc)      at (\ox,       \oy+1*\gy) {$\{1\}\{3\}$};
  \node [block] (aIb)      at (\ox-\gx,   \oy+1*\gy) {$\{1\}\{2\}$};

  \node [block] (aIbIc)    at (\ox,       \oy      ) {$\{1\}\{2\}\{3\}$};

  \draw[->, black, line width=0.75pt] (abc) -- (bc);
  \draw[->, black, line width=0.75pt] (abc) -- (ac);
  \draw[->, black, line width=0.75pt] (abc) -- (ab); 
  
  \draw[->, black, line width=0.75pt] (bc) -- (abIbc);
  \draw[->, black, line width=0.75pt] (ac) -- (acIbc);
  \draw[->, black, line width=0.75pt] (ab) -- (abIbc); 
  \draw[->, black, line width=0.75pt] (bc) -- (acIbc);
  \draw[->, black, line width=0.75pt] (ac) -- (abIac);
  \draw[->, black, line width=0.75pt] (ab) -- (abIac);

  \draw[->, black, line width=0.75pt] (acIbc) -- (abIacIbc);
  \draw[->, black, line width=0.75pt] (abIbc) -- (abIacIbc);
  \draw[->, black, line width=0.75pt] (abIac) -- (abIacIbc); 

  \draw[->, black, line width=0.75pt] (acIbc) -- (c);
  \draw[->, black, line width=0.75pt] (abIbc) -- (b);
  \draw[->, black, line width=0.75pt] (abIac) -- (a);

  \draw[->, black, line width=0.75pt] (abIacIbc) -- (cIab);
  \draw[->, black, line width=0.75pt] (abIacIbc) -- (bIac);
  \draw[->, black, line width=0.75pt] (abIacIbc) -- (aIbc);
  
  \draw[->, black, line width=0.75pt] (c) -- (cIab);
  \draw[->, black, line width=0.75pt] (b) -- (bIac);
  \draw[->, black, line width=0.75pt] (a) -- (aIbc);

  \draw[->, black, line width=0.75pt] (cIab) -- (bIc);
  \draw[->, black, line width=0.75pt] (bIac) -- (bIc);
  \draw[->, black, line width=0.75pt] (aIbc) -- (aIc); 
  \draw[->, black, line width=0.75pt] (cIab) -- (aIc);
  \draw[->, black, line width=0.75pt] (bIac) -- (aIb);
  \draw[->, black, line width=0.75pt] (aIbc) -- (aIb);
  
  \draw[->, black, line width=0.75pt] (bIc) -- (aIbIc);
  \draw[->, black, line width=0.75pt] (aIc) -- (aIbIc);
  \draw[->, black, line width=0.75pt] (aIb) -- (aIbIc);


  \tikzstyle{blockb} = [rectangle, draw=none, 
    text width=2em, text centered, minimum height=0.7em]

  \def\obx{8}

  \node [blockb] (abc)      at (\obx,       \oy+6*\gy) {$2(0)$};  

  \node [blockb] (bc)       at (\obx+\gx,   \oy+5*\gy) {$2(0)$};
  \node [blockb] (ac)       at (\obx,       \oy+5*\gy) {$2(0)$};
  \node [blockb] (ab)       at (\obx-\gx,   \oy+5*\gy) {$2(0)$};
  
  \node [blockb] (acIbc)    at (\obx+\gx,   \oy+4*\gy) {$2(0)$};
  \node [blockb] (abIbc)    at (\obx,       \oy+4*\gy) {$2(0)$};
  \node [blockb] (abIac)    at (\obx-\gx,   \oy+4*\gy) {$2(0)$};

  \node [blockb] (abIacIbc) at (\obx+2*\gx, \oy+3*\gy) {$2(1)$};
  
  \node [blockb] (c)        at (\obx+\gx,   \oy+3*\gy) {$1(0)$};
  \node [blockb] (b)        at (\obx,       \oy+3*\gy) {$1(0)$};
  \node [blockb] (a)        at (\obx-\gx,   \oy+3*\gy) {$1(0)$};   

  \node [blockb] (cIab)     at (\obx+\gx,   \oy+2*\gy) {$1(0)$};
  \node [blockb] (bIac)     at (\obx,       \oy+2*\gy) {$1(0)$};
  \node [blockb] (aIbc)     at (\obx-\gx,   \oy+2*\gy) {$1(0)$}; 

  \node [blockb] (bIc)      at (\obx+\gx,   \oy+1*\gy) {$1(0)$};
  \node [blockb] (aIc)      at (\obx,       \oy+1*\gy) {$1(0)$};
  \node [blockb] (aIb)      at (\obx-\gx,   \oy+1*\gy) {$1(0)$};

  \node [blockb] (aIbIc)    at (\obx,       \oy      ) {$1(1)$};

  \draw[->, black, line width=0.75pt] (abc) -- (bc);
  \draw[->, black, line width=0.75pt] (abc) -- (ac);
  \draw[->, black, line width=0.75pt] (abc) -- (ab); 
  
  \draw[->, black, line width=0.75pt] (bc) -- (abIbc);
  \draw[->, black, line width=0.75pt] (ac) -- (acIbc);
  \draw[->, black, line width=0.75pt] (ab) -- (abIbc); 
  \draw[->, black, line width=0.75pt] (bc) -- (acIbc);
  \draw[->, black, line width=0.75pt] (ac) -- (abIac);
  \draw[->, black, line width=0.75pt] (ab) -- (abIac);

  \draw[->, black, line width=0.75pt] (acIbc) -- (abIacIbc);
  \draw[->, black, line width=0.75pt] (abIbc) -- (abIacIbc);
  \draw[->, black, line width=0.75pt] (abIac) -- (abIacIbc); 

  \draw[->, black, line width=0.75pt] (acIbc) -- (c);
  \draw[->, black, line width=0.75pt] (abIbc) -- (b);
  \draw[->, black, line width=0.75pt] (abIac) -- (a);

  \draw[->, black, line width=0.75pt] (abIacIbc) -- (cIab);
  \draw[->, black, line width=0.75pt] (abIacIbc) -- (bIac);
  \draw[->, black, line width=0.75pt] (abIacIbc) -- (aIbc);
  
  \draw[->, black, line width=0.75pt] (c) -- (cIab);
  \draw[->, black, line width=0.75pt] (b) -- (bIac);
  \draw[->, black, line width=0.75pt] (a) -- (aIbc);

  \draw[->, black, line width=0.75pt] (cIab) -- (bIc);
  \draw[->, black, line width=0.75pt] (bIac) -- (bIc);
  \draw[->, black, line width=0.75pt] (aIbc) -- (aIc); 
  \draw[->, black, line width=0.75pt] (cIab) -- (aIc);
  \draw[->, black, line width=0.75pt] (bIac) -- (aIb);
  \draw[->, black, line width=0.75pt] (aIbc) -- (aIb);
  
  \draw[->, black, line width=0.75pt] (bIc) -- (aIbIc);
  \draw[->, black, line width=0.75pt] (aIc) -- (aIbIc);
  \draw[->, black, line width=0.75pt] (aIb) -- (aIbIc);

\end{tikzpicture}
  \caption{\label{fig:tbp} Example \tbp{}. \emph{Top}:\ probability mass diagram for realisation
    ${(S_1\!=\!0,S_2\!=\!0,S_3\!=\!0,T\!=\!000)}$.  \textit{Bottom left}:\ With three predictors, it
    is convenient to represent to decomposition diagrammatically.  This is especially true \tbp{} as
    one only needs to consider the specificity lattice for one realisation.  \emph{Bottom right}:\
    The specificity lattice for the realisation ${(S_1\!=\!0,S_2\!=\!0,S_3\!=\!0,T\!=\!000)}$.  For
    each source event the left value corresponds to the value of $i_\cap^+$, evaluated using
    $\rminp$, while the right value (surrounded by parenthesis) corresponds to the partial
    information $\pi^+$.}
 \end{figure}
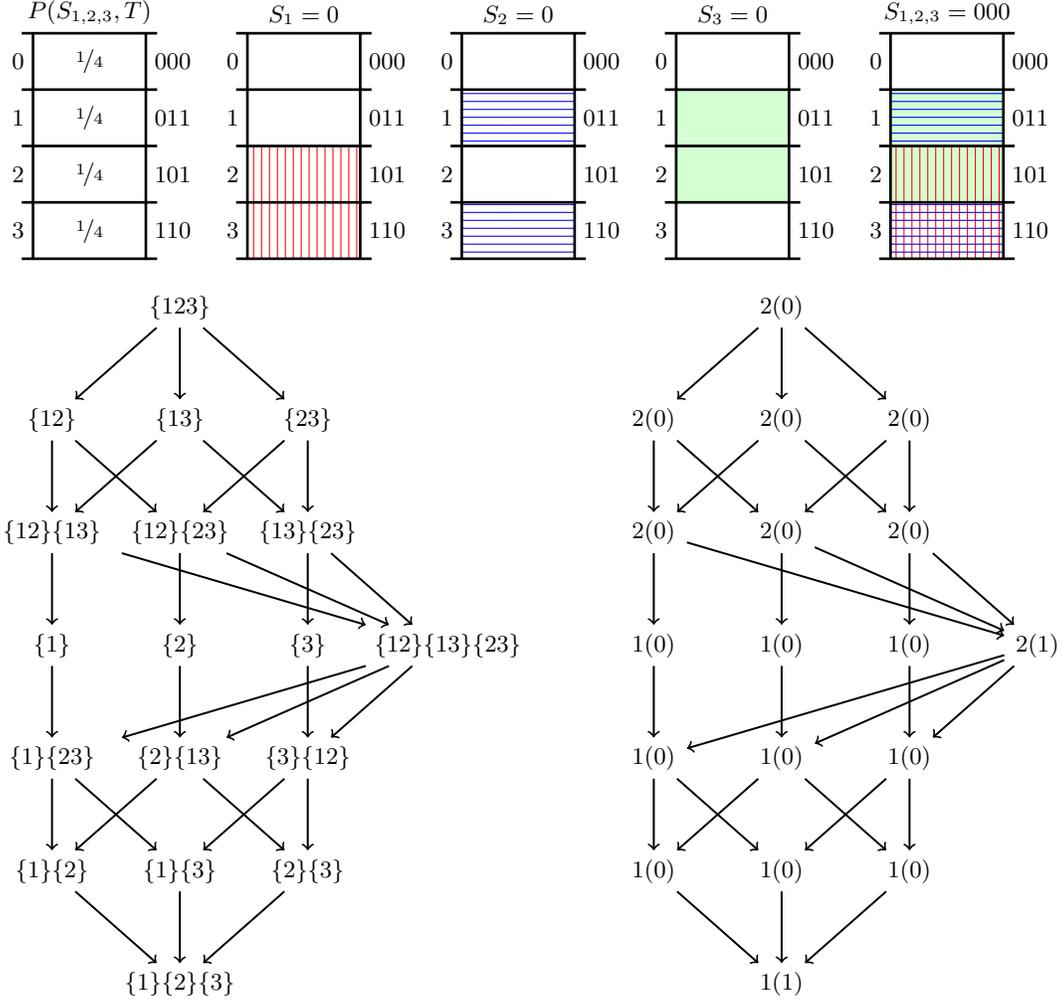

This example is similar to \tbc{} in the several ways.  Firstly, due to the symmetry in the
probability distribution, each realisation will have the same pointwise decomposition.  Secondly,
there is an isomorphism\footnote{Again, isomorphism should be taken to mean isomorphic probability
  spaces, e.g.\ \citep[p.~27]{gray1988} or \citep[p.4]{martin1984}.} between $P(T)$ and
$P(S_1, S_2, S_3)$, and hence the pointwise ambiguity provided by any (individual or joint)
predictor event is $0\bit$ (since given $t$, one knows $s_1$, $s_2$ and $s_3$).  Thirdly, the
individual predictor events $s_1$, $s_2$ and $s_3$ each exclude $\nf{1}{2}$ of the total probability
mass in $P(T)$ and so each provide $1\bit$ of pointwise specificity.  Thus, there is $1\bit$ of
three-way redundant, pointwise specificity in each realisation.  Fourthly, the joint predictor event
$s_{1,2,3}$ excludes $\nf{3}{4}$ of the total probability mass, providing $2\bit$ of pointwise
specificity (which is similar to \tbc{}).  However, unlike \tbc{}, one could consider the three
joint predictor events $s_{1,2}$, $s_{1,3}$ and $s_{2,3}$.  These joint pairs also exclude
$\nf{3}{4}$ of the total probability mass each, and hence also each provide $2\bit$ of pointwise
specificity.  As such, there is $1\bit$ of pointwise, three-way redundant, pairwise complementary
specificity between these three joint pairs of source events, in addition to the $1\bit$ of
three-way redundant, pointwise specificity.  Finally, putting this together and averaging over all
realisations, \tbp{} consists of $1\bit$ of three-way redundant information and $1\bit$ of three-way
redundant, pairwise complementary information.  The resultant average decomposition is the same as
the decomposition induced by~$\imin$~\citep{bertschinger2013}.

\subsection{Probability Distribution \unq{}}

\fig{unq} shows the decomposition of the probability distribution \emph{unique} (\unq{}).  Note that
this probability distribution corresponds to \imprdn{} where the error probability
$\varepsilon = \nf{1}{2}$, and hence the similarity in the resultant distributions.  The results may
initially seem unusual, that the predictor $S_1$ is not uniquely informative since ${\UIo=0\bit}$ as
one might intuitively expect.  Rather it is deemed to be redundantly informative ${RI=1\bit}$ with
the predictor $S_2$ which is also uniquely misinformative ${\UIt=-1\bit}$.  This is because both
$S_1$ and $S_2$ provide ${I^+(S_1 \ra T)=I^+(S_2 \ra T)=1\bit}$ of specificity; however the
information provided by $S_2$ is unique in that the $1\bit$ provided is not ``useful'' and hence
${I(S_2 \ra T) = 1\bit}$ while ${I(S_2 \ra T) = 1\bit}$~\citep[p.~21]{shannon1998}.  Finally, the
complementary information $\CI=1\bit$ is required by the decomposition in order to balance this
$1\bit$ of unique ambiguity.  The results in this example partly explain our preference for term
\emph{complementary information} as opposed to \emph{synergistic information}---while $\CI=1\bit$ is
readily explainable, it would be dubious to refer to this as \emph{synergy} given that $S_1$ enables
perfect predictions of $T$ without any knowledge of $S_2$.

\begin{figure}[t]
  \centering 
  \begin{tikzpicture}

  \def\originy{0} 
  \def\height{3}
  \def\width{1.5}
  \def\overdrawn{0.15}

  
  \def\originx{0}
  
  \draw[black, line width=1pt] (\originx,\originy)
    -- (\originx,\originy+\height);
  \draw[black, line width=1pt] (\originx-\overdrawn,\originy+\height)
    -- (\originx+\width+\overdrawn,\originy+\height);
  \draw[black, line width=1pt] (\originx+\width,\originy+\height)
    -- (\originx+\width,\originy);
  \draw[black, line width=1pt] (\originx+\width+\overdrawn,\originy)
    -- (\originx-\overdrawn,\originy);

  \draw[black, line width=0.75pt] (\originx,\originy+3/4*\height)
    -- (\originx+\width+\overdrawn,\originy+3/4*\height);
  \draw[black, line width=1pt] (\originx-\overdrawn,\originy+\height/2)
    -- (\originx+\width+\overdrawn,\originy+\height/2);
  \draw[black, line width=0.75pt] (\originx,\originy+1/4*\height)
  -- (\originx+\width+\overdrawn,\originy+1/4*\height);

  \node at (\originx,3/4*\height)[anchor=east] {$0$};
  \node at (\originx,1/4*\height)[anchor=east] {$1$}; 
  \node at (\originx+\width,7/8*\height)[anchor=west] {$00$};
  \node at (\originx+\width,5/8*\height)[anchor=west] {$01$};
  \node at (\originx+\width,3/8*\height)[anchor=west] {$10$};
  \node at (\originx+\width,1/8*\height)[anchor=west] {$11$};
  \node at (\originx+\width/2,\originy+\height)[anchor=south] {$P(S_{1,2},T)$};
  \node at (\originx+\width/2,\originy+7/8*\height)[anchor=center] {$\nf{1}{4}$};
  \node at (\originx+\width/2,\originy+5/8*\height)[anchor=center] {$\nf{1}{4}$};
  \node at (\originx+\width/2,\originy+3/8*\height)[anchor=center] {$\nf{1}{4}$};
  \node at (\originx+\width/2,\originy+1/8*\height)[anchor=center] {$\nf{1}{4}$}; 

  
  \def\originxtwo{3}
  
  \draw[black, line width=1pt] (\originxtwo,\originy)
    -- (\originxtwo,\originy+\height);
  \draw[black, line width=1pt] (\originxtwo-\overdrawn,\originy+\height)
    -- (\originxtwo+\width+\overdrawn,\originy+\height);
  \draw[black, line width=1pt] (\originxtwo+\width,\originy+\height)
    -- (\originxtwo+\width,\originy);
  \draw[black, line width=1pt] (\originxtwo+\width+\overdrawn,\originy)
    -- (\originxtwo-\overdrawn,\originy);

  \draw[black, line width=0.75pt] (\originxtwo,\originy+3/4*\height)
    -- (\originxtwo+\width+\overdrawn,\originy+3/4*\height);
  \draw[black, line width=1pt] (\originxtwo-\overdrawn,\originy+\height/2)
    -- (\originxtwo+\width+\overdrawn,\originy+\height/2);
  \draw[black, line width=0.75pt] (\originxtwo,\originy+1/4*\height)
  -- (\originxtwo+\width+\overdrawn,\originy+1/4*\height);

  \draw[pattern=north east lines, pattern color=red] (\originxtwo,\originy+\height/2) rectangle
    (\originxtwo+\width,\originy+3/4*\height);
  \draw[pattern=vertical lines, pattern color=red] (\originxtwo,\originy) rectangle
    (\originxtwo+\width,\originy+1/4*\height);
  \draw[pattern=horizontal lines, pattern color=blue] (\originxtwo,\originy) rectangle
    (\originxtwo+\width,\originy+1/2*\height);
  
  \node at (\originxtwo,3/4*\height)[anchor=east] {$0$};
  \node at (\originxtwo,1/4*\height)[anchor=east] {$1$}; 
  \node at (\originxtwo+\width,7/8*\height)[anchor=west] {$00$};
  \node at (\originxtwo+\width,5/8*\height)[anchor=west] {$01$};
  \node at (\originxtwo+\width,3/8*\height)[anchor=west] {$10$};
  \node at (\originxtwo+\width,1/8*\height)[anchor=west] {$11$};
  \node at (\originxtwo+\width/2,\originy+\height)[anchor=south] {$S_{1,2}=00$};
  
  
  \def\originxthree{6}
  
  \draw[black, line width=1pt] (\originxthree,\originy)
    -- (\originxthree,\originy+\height);
  \draw[black, line width=1pt] (\originxthree-\overdrawn,\originy+\height)
    -- (\originxthree+\width+\overdrawn,\originy+\height);
  \draw[black, line width=1pt] (\originxthree+\width,\originy+\height)
    -- (\originxthree+\width,\originy);
  \draw[black, line width=1pt] (\originxthree+\width+\overdrawn,\originy)
    -- (\originxthree-\overdrawn,\originy);

  \draw[black, line width=0.75pt] (\originxthree,\originy+3/4*\height)
    -- (\originxthree+\width+\overdrawn,\originy+3/4*\height);
  \draw[black, line width=1pt] (\originxthree-\overdrawn,\originy+\height/2)
    -- (\originxthree+\width+\overdrawn,\originy+\height/2);
  \draw[black, line width=0.75pt] (\originxthree,\originy+1/4*\height)
  -- (\originxthree+\width+\overdrawn,\originy+1/4*\height);

  \draw[pattern=north east lines, pattern color=red] (\originxthree,\originy+3/4*\height) rectangle
    (\originxthree+\width,\originy+\height);
  \draw[pattern=vertical lines, pattern color=red] (\originxthree,\originy+1/4*\height) rectangle
    (\originxthree+\width,\originy+\height/2);
  \draw[pattern=horizontal lines, pattern color=blue] (\originxthree,\originy) rectangle
    (\originxthree+\width,\originy+1/2*\height);
  
  \node at (\originxthree,3/4*\height)[anchor=east] {$0$};
  \node at (\originxthree,1/4*\height)[anchor=east] {$1$}; 
  \node at (\originxthree+\width,7/8*\height)[anchor=west] {$00$};
  \node at (\originxthree+\width,5/8*\height)[anchor=west] {$01$};
  \node at (\originxthree+\width,3/8*\height)[anchor=west] {$10$};
  \node at (\originxthree+\width,1/8*\height)[anchor=west] {$11$};
  \node at (\originxthree+\width/2,\originy+\height)[anchor=south] {$S_{1,2}=01$};

  
  \def\originxfour{9}
  
  \draw[black, line width=1pt] (\originxfour,\originy)
    -- (\originxfour,\originy+\height);
  \draw[black, line width=1pt] (\originxfour-\overdrawn,\originy+\height)
    -- (\originxfour+\width+\overdrawn,\originy+\height);
  \draw[black, line width=1pt] (\originxfour+\width,\originy+\height)
    -- (\originxfour+\width,\originy);
  \draw[black, line width=1pt] (\originxfour+\width+\overdrawn,\originy)
    -- (\originxfour-\overdrawn,\originy);

  \draw[black, line width=0.75pt] (\originxfour,\originy+3/4*\height)
    -- (\originxfour+\width+\overdrawn,\originy+3/4*\height);
  \draw[black, line width=1pt] (\originxfour-\overdrawn,\originy+\height/2)
    -- (\originxfour+\width+\overdrawn,\originy+\height/2);
  \draw[black, line width=0.75pt] (\originxfour,\originy+1/4*\height)
  -- (\originxfour+\width+\overdrawn,\originy+1/4*\height);

  \draw[pattern=vertical lines, pattern color=red] (\originxfour,\originy+\height/2) rectangle
    (\originxfour+\width,\originy+3/4*\height);
  \draw[pattern=north east lines, pattern color=red] (\originxfour,\originy) rectangle
    (\originxfour+\width,\originy+1/4*\height);
  \draw[pattern=horizontal lines, pattern color=blue] (\originxfour,\originy+\height) rectangle
    (\originxfour+\width,\originy+\height/2);

  \node at (\originxfour,3/4*\height)[anchor=east] {$0$};
  \node at (\originxfour,1/4*\height)[anchor=east] {$1$}; 
  \node at (\originxfour+\width,7/8*\height)[anchor=west] {$00$};
  \node at (\originxfour+\width,5/8*\height)[anchor=west] {$01$};
  \node at (\originxfour+\width,3/8*\height)[anchor=west] {$10$};
  \node at (\originxfour+\width,1/8*\height)[anchor=west] {$11$};
  \node at (\originxfour+\width/2,\originy+\height)[anchor=south] {$S_{1,2}=10$};
  
  
  \def\originxfive{12}
  
  \draw[black, line width=1pt] (\originxfive,\originy)
    -- (\originxfive,\originy+\height);
  \draw[black, line width=1pt] (\originxfive-\overdrawn,\originy+\height)
    -- (\originxfive+\width+\overdrawn,\originy+\height);
  \draw[black, line width=1pt] (\originxfive+\width,\originy+\height)
    -- (\originxfive+\width,\originy);
  \draw[black, line width=1pt] (\originxfive+\width+\overdrawn,\originy)
    -- (\originxfive-\overdrawn,\originy);

  \draw[black, line width=0.75pt] (\originxfive,\originy+3/4*\height)
    -- (\originxfive+\width+\overdrawn,\originy+3/4*\height);
  \draw[black, line width=1pt] (\originxfive-\overdrawn,\originy+\height/2)
    -- (\originxfive+\width+\overdrawn,\originy+\height/2);
  \draw[black, line width=0.75pt] (\originxfive,\originy+1/4*\height)
  -- (\originxfive+\width+\overdrawn,\originy+1/4*\height);
 
  \draw[pattern=vertical lines, pattern color=red] (\originxfive,\originy+3/4*\height) rectangle
    (\originxfive+\width,\originy+\height);
  \draw[pattern=north east lines, pattern color=red] (\originxfive,\originy+1/4*\height) rectangle
    (\originxfive+\width,\originy+\height/2);
  \draw[pattern=horizontal lines, pattern color=blue] (\originxfive,\originy+\height) rectangle
    (\originxfive+\width,\originy+\height/2);
  
  \node at (\originxfive,3/4*\height)[anchor=east] {$0$};
  \node at (\originxfive,1/4*\height)[anchor=east] {$1$}; 
  \node at (\originxfive+\width,7/8*\height)[anchor=west] {$00$};
  \node at (\originxfive+\width,5/8*\height)[anchor=west] {$01$};
  \node at (\originxfive+\width,3/8*\height)[anchor=west] {$10$};
  \node at (\originxfive+\width,1/8*\height)[anchor=west] {$11$};
  \node at (\originxfive+\width/2,\originy+\height)[anchor=south] {$S_{1,2}=11$};

\end{tikzpicture}
  \vskip 1.5em
  \ifarXiv
  \else
  \tablesize{\footnotesize}
  \fi
  \setlength{\tabcolsep}{5pt}
  \begin{tabular}{c || c c | c || c c c c | c c || c c c c | c c c c }
    \hline
    $p$ & $s_1$ & $s_2$ & $t$ & $i_1^+$ & $i_1^-$ & $i_2^+$ & $i_2^-$ & $i_{12}^+$ & $i_{12}^-$
      & $r^+$ & $u_1^+$ & $u_2^+$ & $c^+$ & $r^-$ & $u_1^-$ & $u_2^-$ & $c^-$\\
    \hline\hline
    \nf{1}{4}&0&0&0&1&0&1&1&2&1&1&0&0&1&0&0&1&0 \\
    \nf{1}{4}&0&1&0&1&0&1&1&2&1&1&0&0&1&0&0&1&0 \\
    \nf{1}{4}&1&0&1&1&0&1&1&2&1&1&0&0&1&0&0&1&0 \\
    \nf{1}{4}&1&1&1&1&0&1&1&2&1&1&0&0&1&0&0&1&0 \\
    \hline\hline
    \multicolumn{4}{c||}{\scriptsize Expected values}
      & 1 &0 & 1 & 1 & 2 & 1 & 1 & 0 & 0 & 1 & 0 & 0 & 1 & 0 \\
    \hline
  \end{tabular}
  \vskip 5pt{\small
    $\RI=1\bit \qquad \UIo=0\bit \qquad \UIt=-1\bit \qquad \CI=1\bit$}
  \caption{\label{fig:unq} Example \unq{}.  \emph{Top}:\ the probability mass diagrams for every
    single possible realisation.  \emph{Middle}:\ for each realisation, the PPID using specificity
    and ambiguity is evaluated (see \fig{xor}).  \emph{Bottom}:\ the atoms of (average) partial
    infromation obtained through recombination of the averages.}
\end{figure}
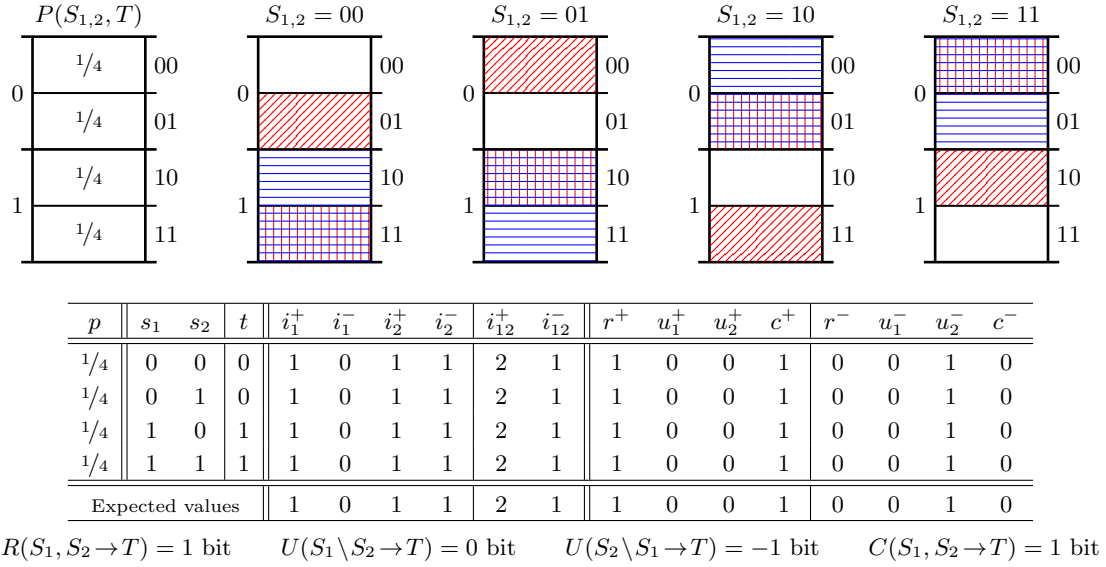

\subsection{Probability Distribution \and{}}

\fig{and} shows the decomposition of the probability distribution \emph{and} (\and{}).  Note that
the probability distribution \emph{or} (\textup{O}\textsc{r}) has the same decomposition as the
target distributions are isomorphic.

\begin{figure}[!b]
  \centering 
  \begin{tikzpicture}

  \def\originy{0} 
  \def\height{3}
  \def\width{1.5}
  \def\overdrawn{0.15}

  
  \def\originx{0}
  
  \draw[black, line width=1pt] (\originx,\originy)
    -- (\originx,\originy+\height);
  \draw[black, line width=1pt] (\originx-\overdrawn,\originy+\height)
    -- (\originx+\width+\overdrawn,\originy+\height);
  \draw[black, line width=1pt] (\originx+\width,\originy+\height)
    -- (\originx+\width,\originy);
  \draw[black, line width=1pt] (\originx+\width+\overdrawn,\originy)
    -- (\originx-\overdrawn,\originy);

  \draw[black, line width=0.75pt] (\originx,\originy+3/4*\height)
    -- (\originx+\width+\overdrawn,\originy+3/4*\height);
  \draw[black, line width=0.75pt] (\originx,\originy+\height/2)
    -- (\originx+\width+\overdrawn,\originy+\height/2);
  \draw[black, line width=1pt] (\originx-\overdrawn,\originy+1/4*\height)
  -- (\originx+\width+\overdrawn,\originy+1/4*\height);

  \node at (\originx,5/8*\height)[anchor=east] {$0$};
  \node at (\originx,1/8*\height)[anchor=east] {$1$}; 
  \node at (\originx+\width,7/8*\height)[anchor=west] {$00$};
  \node at (\originx+\width,5/8*\height)[anchor=west] {$01$};
  \node at (\originx+\width,3/8*\height)[anchor=west] {$10$};
  \node at (\originx+\width,1/8*\height)[anchor=west] {$11$};
  \node at (\originx+\width/2,\originy+\height)[anchor=south] {$P(S_{1,2},T)$};
  \node at (\originx+\width/2,\originy+7/8*\height)[anchor=center] {$\nf{1}{4}$};
  \node at (\originx+\width/2,\originy+5/8*\height)[anchor=center] {$\nf{1}{4}$};
  \node at (\originx+\width/2,\originy+3/8*\height)[anchor=center] {$\nf{1}{4}$};
  \node at (\originx+\width/2,\originy+1/8*\height)[anchor=center] {$\nf{1}{4}$}; 

  
  \def\originxtwo{3}
  
  \draw[black, line width=1pt] (\originxtwo,\originy)
    -- (\originxtwo,\originy+\height);
  \draw[black, line width=1pt] (\originxtwo-\overdrawn,\originy+\height)
    -- (\originxtwo+\width+\overdrawn,\originy+\height);
  \draw[black, line width=1pt] (\originxtwo+\width,\originy+\height)
    -- (\originxtwo+\width,\originy);
  \draw[black, line width=1pt] (\originxtwo+\width+\overdrawn,\originy)
    -- (\originxtwo-\overdrawn,\originy);

  \draw[black, line width=0.75pt] (\originxtwo,\originy+3/4*\height)
    -- (\originxtwo+\width+\overdrawn,\originy+3/4*\height);
  \draw[black, line width=0.75pt] (\originxtwo,\originy+\height/2)
    -- (\originxtwo+\width+\overdrawn,\originy+\height/2);
  \draw[black, line width=1pt] (\originxtwo-\overdrawn,\originy+1/4*\height)
  -- (\originxtwo+\width+\overdrawn,\originy+1/4*\height);

  \draw[pattern=north east lines, pattern color=red] (\originxtwo,\originy+\height/2) rectangle
    (\originxtwo+\width,\originy+3/4*\height);
  \draw[pattern=vertical lines, pattern color=red] (\originxtwo,\originy) rectangle
    (\originxtwo+\width,\originy+1/4*\height);
  \draw[pattern=north west lines, pattern color=blue] (\originxtwo,\originy+1/4*\height) rectangle
    (\originxtwo+\width,\originy+\height/2);
  \draw[pattern=horizontal lines, pattern color=blue] (\originxtwo,\originy) rectangle
    (\originxtwo+\width,\originy+1/4*\height);
  
  \node at (\originxtwo,5/8*\height)[anchor=east] {$0$};
  \node at (\originxtwo,1/8*\height)[anchor=east] {$1$}; 
  \node at (\originxtwo+\width,7/8*\height)[anchor=west] {$00$};
  \node at (\originxtwo+\width,5/8*\height)[anchor=west] {$01$};
  \node at (\originxtwo+\width,3/8*\height)[anchor=west] {$10$};
  \node at (\originxtwo+\width,1/8*\height)[anchor=west] {$11$};
  \node at (\originxtwo+\width/2,\originy+\height)[anchor=south] {$S_{1,2}=00$};
  
  
  \def\originxthree{6}
  
  \draw[black, line width=1pt] (\originxthree,\originy)
    -- (\originxthree,\originy+\height);
  \draw[black, line width=1pt] (\originxthree-\overdrawn,\originy+\height)
    -- (\originxthree+\width+\overdrawn,\originy+\height);
  \draw[black, line width=1pt] (\originxthree+\width,\originy+\height)
    -- (\originxthree+\width,\originy);
  \draw[black, line width=1pt] (\originxthree+\width+\overdrawn,\originy)
    -- (\originxthree-\overdrawn,\originy);

  \draw[black, line width=0.75pt] (\originxthree,\originy+3/4*\height)
    -- (\originxthree+\width+\overdrawn,\originy+3/4*\height);
  \draw[black, line width=0.75pt] (\originxthree,\originy+\height/2)
    -- (\originxthree+\width+\overdrawn,\originy+\height/2);
  \draw[black, line width=1pt] (\originxthree-\overdrawn,\originy+1/4*\height)
  -- (\originxthree+\width+\overdrawn,\originy+1/4*\height);

  \draw[pattern=north east lines, pattern color=red] (\originxthree,\originy+3/4*\height) rectangle
    (\originxthree+\width,\originy+\height);
  \draw[pattern=north east lines, pattern color=red] (\originxthree,\originy+1/4*\height) rectangle
    (\originxthree+\width,\originy+\height/2);
  \draw[pattern=north west lines, pattern color=blue] (\originxthree,\originy+1/4*\height) rectangle
    (\originxthree+\width,\originy+\height/2);
  \draw[pattern=horizontal lines, pattern color=blue] (\originxthree,\originy) rectangle
    (\originxthree+\width,\originy+1/4*\height);
  
  \node at (\originxthree,5/8*\height)[anchor=east] {$0$};
  \node at (\originxthree,1/8*\height)[anchor=east] {$1$}; 
  \node at (\originxthree+\width,7/8*\height)[anchor=west] {$00$};
  \node at (\originxthree+\width,5/8*\height)[anchor=west] {$01$};
  \node at (\originxthree+\width,3/8*\height)[anchor=west] {$10$};
  \node at (\originxthree+\width,1/8*\height)[anchor=west] {$11$};
  \node at (\originxthree+\width/2,\originy+\height)[anchor=south] {$S_{1,2}=01$};

  
  \def\originxfour{9}
  
  \draw[black, line width=1pt] (\originxfour,\originy)
    -- (\originxfour,\originy+\height);
  \draw[black, line width=1pt] (\originxfour-\overdrawn,\originy+\height)
    -- (\originxfour+\width+\overdrawn,\originy+\height);
  \draw[black, line width=1pt] (\originxfour+\width,\originy+\height)
    -- (\originxfour+\width,\originy);
  \draw[black, line width=1pt] (\originxfour+\width+\overdrawn,\originy)
    -- (\originxfour-\overdrawn,\originy);

  \draw[black, line width=0.75pt] (\originxfour,\originy+3/4*\height)
    -- (\originxfour+\width+\overdrawn,\originy+3/4*\height);
  \draw[black, line width=0.75pt] (\originxfour,\originy+\height/2)
    -- (\originxfour+\width+\overdrawn,\originy+\height/2);
  \draw[black, line width=1pt] (\originxfour-\overdrawn,\originy+1/4*\height)
  -- (\originxfour+\width+\overdrawn,\originy+1/4*\height);

  \draw[pattern=north east lines, pattern color=red] (\originxfour,\originy+\height/2) rectangle
    (\originxfour+\width,\originy+3/4*\height);
  \draw[pattern=vertical lines, pattern color=red] (\originxfour,\originy) rectangle
    (\originxfour+\width,\originy+1/4*\height);
  \draw[pattern=north west lines, pattern color=blue] (\originxfour,\originy+\height) rectangle
    (\originxfour+\width,\originy+\height/2);

  \node at (\originxfour,5/8*\height)[anchor=east] {$0$};
  \node at (\originxfour,1/8*\height)[anchor=east] {$1$}; 
  \node at (\originxfour+\width,7/8*\height)[anchor=west] {$00$};
  \node at (\originxfour+\width,5/8*\height)[anchor=west] {$01$};
  \node at (\originxfour+\width,3/8*\height)[anchor=west] {$10$};
  \node at (\originxfour+\width,1/8*\height)[anchor=west] {$11$};
  \node at (\originxfour+\width/2,\originy+\height)[anchor=south] {$S_{1,2}=10$};
  
  
  \def\originxfive{12}
  
  \draw[black, line width=1pt] (\originxfive,\originy)
    -- (\originxfive,\originy+\height);
  \draw[black, line width=1pt] (\originxfive-\overdrawn,\originy+\height)
    -- (\originxfive+\width+\overdrawn,\originy+\height);
  \draw[black, line width=1pt] (\originxfive+\width,\originy+\height)
    -- (\originxfive+\width,\originy);
  \draw[black, line width=1pt] (\originxfive+\width+\overdrawn,\originy)
    -- (\originxfive-\overdrawn,\originy);

  \draw[black, line width=0.75pt] (\originxfive,\originy+3/4*\height)
    -- (\originxfive+\width+\overdrawn,\originy+3/4*\height);
  \draw[black, line width=0.75pt] (\originxfive,\originy+\height/2)
    -- (\originxfive+\width+\overdrawn,\originy+\height/2);
  \draw[black, line width=1pt] (\originxfive-\overdrawn,\originy+1/4*\height)
  -- (\originxfive+\width+\overdrawn,\originy+1/4*\height);
 
  \draw[pattern=vertical lines, pattern color=red] (\originxfive,\originy+3/4*\height) rectangle
    (\originxfive+\width,\originy+\height);
  \draw[pattern=vertical lines, pattern color=red] (\originxfive,\originy+1/4*\height) rectangle
    (\originxfive+\width,\originy+\height/2);
  \draw[pattern=horizontal lines, pattern color=blue] (\originxfive,\originy+\height) rectangle
    (\originxfive+\width,\originy+\height/2);
  
  \node at (\originxfive,5/8*\height)[anchor=east] {$0$};
  \node at (\originxfive,1/8*\height)[anchor=east] {$1$}; 
  \node at (\originxfive+\width,7/8*\height)[anchor=west] {$00$};
  \node at (\originxfive+\width,5/8*\height)[anchor=west] {$01$};
  \node at (\originxfive+\width,3/8*\height)[anchor=west] {$10$};
  \node at (\originxfive+\width,1/8*\height)[anchor=west] {$11$};
  \node at (\originxfive+\width/2,\originy+\height)[anchor=south] {$S_{1,2}=11$};

\end{tikzpicture}
  \vskip 1.5em
  \ifarXiv
  \else
  \tablesize{\footnotesize}
  \fi
  \setlength{\tabcolsep}{5pt}
  \begin{tabular}{c || c c | c || c c c c | c c || c c c c | c c c c }
    \hline
    $p$ & $s_1$ & $s_2$ & $t$ & $i_1^+$ & $i_1^-$ & $i_2^+$ & $i_2^-$ & $i_{12}^+$ & $i_{12}^-$
      & $r^+$ & $u_1^+$ & $u_2^+$ & $c^+$ & $r^-$ & $u_1^-$ & $u_2^-$ & $c^-$\\
    \hline\hline
    \nf{1}{4}&0&0&0&1&$\lg\nf{3}{2}$&1&$\lg\nf{3}{2}$&2&$\lg 3$&1&0&0&1&$\lg\nf{3}{2}$&0&0&1 \\
    \nf{1}{4}&0&1&0&1&$\lg\nf{3}{2}$&1&$\lg 3$       &2&$\lg 3$&1&0&0&1&$\lg\nf{3}{2}$&0&1&0 \\
    \nf{1}{4}&1&0&0&1&$\lg 3$       &1&$\lg\nf{3}{2}$&2&$\lg 3$&1&0&0&1&$\lg\nf{3}{2}$&1&0&0 \\
    \nf{1}{4}&1&1&1&1&0             &1&0             &2&0      &1&0&0&1&0             &0&0&0 \\
    \hline\hline
    \multicolumn{4}{c||}{\scriptsize Expected values}
      & 1 &0.689 & 1 & 0.689 & 2 & 1.189 & 1 & 0 & 0 & 1 & 0.439 & 0.250 & 0.250 & 0.25 \\
    \hline
  \end{tabular}
  \vskip 5pt{\small
    $\RI\!=\!0.561\bit \quad \UIo\!=\!-0.25\bit \quad \UIt\!=\!-0.25\bit \quad \CI\!=\!0.75\bit$}
  \caption{\label{fig:and} Example \and{}.  \emph{Top}:\ the probability mass diagrams for every
    single possible realisation.  \emph{Middle}:\ for each realisation, the PPID using specificity
    and ambiguity is evaluated (see \fig{xor}).  \emph{Bottom}:\ the atoms of (average) partial
    infromation obtained through recombination of the averages.}
\end{figure}
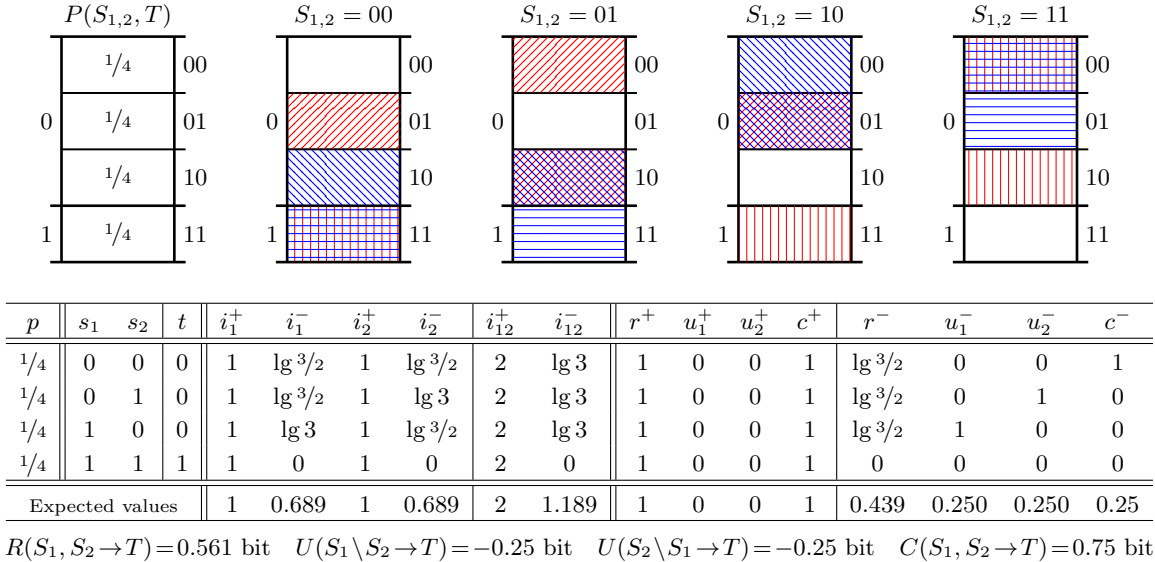


\ifarXiv

\begin{acknowledgments}
  JL was supported through the Australian Research Council DECRA grant DE160100630.  We thank
  Mikhail Prokopenko, Richard Spinney, Michael Wibral, Nathan Harding, Robin Ince, Nils
  Bertschinger, and Nihat Ay for helpful discussions relating to this manuscript.  We also thank the
  anonymous reviewers for their particularly detailed and helpful feedback.
\end{acknowledgments}

\else

\vspace{6pt} 

\acknowledgments{ JL was supported through the Australian Research Council DECRA grant DE160100630.
  We thank Mikhail Prokopenko, Richard Spinney, Michael Wibral, Nathan Harding, Robin Ince, Nils
  Bertschinger, and Nihat Ay for helpful discussions relating to this manuscript. We also thank the
  anonymous reviewers for their particularly detailed and helpful feedback. }

\authorcontributions{ CF and JL conceived the idea; CF designed, wrote and analyzed the
  computational examples; CF and JL wrote the manuscript.  }

\conflictsofinterest{The authors declare no conflict of interest. The founding sponsors had no role
  in the design of the study; in the collection, analyses, or interpretation of data; in the writing
  of the manuscript, and in the decision to publish the results.}

\fi



\bibliographystyle{mdpi}

\renewcommand\bibname{References}
\bibliography{local_info_decomp_spec_ambig}

\end{document}